\begin{document}

%%
%% The "title" command has an optional parameter,
%% allowing the author to define a "short title" to be used in page headers.
\title{Resource approximation for the $\lamu$-calculus}

%%
%% The "author" command and its associated commands are used to define
%% the authors and their affiliations.
%% Of note is the shared affiliation of the first two authors, and the
%% "authornote" and "authornotemark" commands
%% used to denote shared contribution to the research.

\author{Davide Barbarossa}
\affiliation{
  \institution{Laboratoire d'Informatique Paris-Nord, Universit\'e Sorbonne Paris-Nord}
  \streetaddress{99, Avenue Jean-Baptiste Clément}
  \city{Villetaneuse}
  \state{France}
  \country{France}
  \postcode{93430}
}
\affiliation{
  \institution{Dipartimento di Informatica - Scienza ed Ingegneria, Universit\`a di Bologna}
  \streetaddress{Mura Anteo Zamboni, 7}
  \city{Bologna}
  \state{Italy}
  \country{Italy}
  \postcode{40127}
}
\email{davide.barbarossa@unibo.it}

%%
%% By default, the full list of authors will be used in the page
%% headers. Often, this list is too long, and will overlap
%% other information printed in the page headers. This command allows
%% the author to define a more concise list
%% of authors' names for this purpose.
%\renewcommand{\shortauthors}{Trovato and Tobin, et al.}

%%
%% The abstract is a short summary of the work to be presented in the
%% article.
\begin{abstract}
The $\lamu$-calculus plays a central role in the theory of programming languages as it extends the Curry-Howard correspondence to classical logic. A major drawback is that it does not satisfy B\"ohm's Theorem and it lacks the corresponding notion of approximation.
On the contrary, we show that Ehrhard and Regnier's Taylor expansion can be easily adapted, thus providing a resource conscious approximation theory.
This produces a sensible $\lamu$-theory with which we prove some advanced properties of the $\lamu$-calculus, such as Stability and Perpendicular Lines Property, from which the impossibility of parallel computations follows.
\end{abstract}

%%
%% The code below is generated by the tool at http://dl.acm.org/ccs.cfm.
%% Please copy and paste the code instead of the example below.
%%
\begin{CCSXML}
<ccs2012>
<concept>
<concept_id>10003752.10003753.10003754.10003733</concept_id>
<concept_desc>Theory of computation~Lambda calculus</concept_desc>
<concept_significance>500</concept_significance>
</concept>
<concept>
<concept_id>10003752.10003790.10003801</concept_id>
<concept_desc>Theory of computation~Linear logic</concept_desc>
<concept_significance>300</concept_significance>
</concept>
<concept>
<concept_id>10003752.10003766.10003767.10003769</concept_id>
<concept_desc>Theory of computation~Rewrite systems</concept_desc>
<concept_significance>100</concept_significance>
</concept>
</ccs2012>
\end{CCSXML}

\ccsdesc[500]{Theory of computation~Lambda calculus}
\ccsdesc[300]{Theory of computation~Linear logic}
\ccsdesc[100]{Theory of computation~Rewrite systems}

%%
%% Keywords. The author(s) should pick words that accurately describe
%% the work being presented. Separate the keywords with commas.
\keywords{$\lambda\mu$-calculus, Taylor expansion, resource approximation, linear logic, $\lam$-calculus}

%% A "teaser" image appears between the author and affiliation
%% information and the body of the document, and typically spans the
%% page.
%\begin{teaserfigure}
%  \includegraphics[width=\textwidth]{sampleteaser}
%  \caption{Seattle Mariners at Spring Training, 2010.}
%  \Description{Enjoying the baseball game from the third-base
%  seats. Ichiro Suzuki preparing to bat.}
%  \label{fig:teaser}
%\end{teaserfigure}

%%
%% This command processes the author and affiliation and title
%% information and builds the first part of the formatted document.
\maketitle

\section{Introduction}
\subsection{Motivation}

\subsubsection{Curry-Howard correspondence for classical logic}
The celebrated \emph{Curry-Howard correspondence} states that a class of programs, written in a suitable programming language, and intuitionistic logic proofs, written in an suitable formal system, are the same mathematical objects. 
The typical suitable programming language is $\lam$-calculus, and the typical suitable formal system is intuitionistic natural deduction NJ; under this correspondence, the simply typed $\lam$-calculus is identified with NJ. 
A natural question is what happens for \emph{classical} logic proofs and whether it is possible to find such a correspondence at all. In the 90's, several ways for generalising such a correspondence to this framework appeared, starting from~\cite{DBLP:conf/popl/Griffin90} where Griffin suggests to type control operators (such as Scheme's $\prog{callcc}$ or Felleisen's $\prog{C}$ operator) with Peirce's law. 
One of the most notable ones is the $\lamu$-calculus, introduced by Parigot in~\cite{DBLP:conf/lpar/Parigot92}, which has the advantage of allowing the correspondence to take the exact same form as in the intuitionistic case: just like $\lam$-calculus is the Turing-complete programming language in which intuitionistic logic expresses its computational content, $\lamu$-calculus is the one expressing the computational content of classical logic. From the programming viewpoint, the big difference between $\lam$-calculus and $\lamu$-calculus is that the former is a \emph{purely functional} language, while the latter is \emph{impure}, due to the possibility of encoding control operators in it --- like the already mentioned $\prog{callcc}$ or $\prog{C}$. From the point of view of, e.g., Classical realizability \cite{krivine:hal-00154500}, this corresponds to the backtracking mechanism related to classical reasoning.

\subsubsection{Taylor expanding programs}
Just before the 90's another major discovery in logic and computer science appeared: Girard's linear logic \cite{GIRARD19871}. This opened a whole new field of research, in which the common line is the deep role reserved to \emph{resources} in a computation/proof. Linear logic allowed Ehrhard and Regnier to discover an astonishing correspondence between linearity in analysis and linearity in computer science, that is formalized in the \emph{differential $\lam$-calculus} \cite{DBLP:journals/tcs/EhrhardR03} (and differential interaction nets \cite{DBLP:journals/tcs/EhrhardR06}). It is possible to Taylor expand programs/proofs by -- as in analysis -- an infinite series of approximants weighted via the usual factorial coefficients. This is usually called the ``full'' or ``quantitative'' Taylor expansion. However, it turns out that even if we do not consider the coefficients, we still obtain a meaningful theory of program approximation: it is usually called the ``qualitative'' Taylor expansion, and will play a central role in the present article. 
Under the assumption of having an idempotent sum, the Taylor expansion is no longer a series but becomes a set, and the approximants can be written in a simple ``target language'', called \emph{resource calculus} (very similar to Boudol's calculus with multiplicities \cite{DBLP:conf/concur/Boudol93}).
In \cite{DBLP:journals/pacmpl/BarbarossaM20}, it is shown that all the fundamental results in the so called \emph{approximation theory} of $\lam$-calculus (Monotonicity, Genericity, Continuity, Stability, Perpendicular Lines Property), usually achieved via labelled reductions and B\"ohm trees~\cite[Chapter~14]{DBLP:books/daglib/0067558}, can be actually proven --~in an arguably more satisfactory way~-- via (qualitative) Taylor expansion; in other words, in $\lam$-calculus, resource approximation (which is at the basis of Taylor expansion) ``subsumes'' B\"ohm trees approximation, and answers positively to a proposal expressed in \cite{DBLP:journals/tcs/EhrhardR03} where Ehrhard and Regnier mention that:
%\begin{quote}
\textit{
``Understanding the relation between the term and its full Taylor expansion might be the starting point of a renewing of the theory of approximation.''}
%\end{quote}

\subsubsection{The content of this paper}
The aforementioned ``approximation results'' are in fact at the basis of a \emph{mathematical study} of $\lam$-calculus, deep from the conceptual, mathematical and computational viewpoint.
A natural question is then: what about for other programming languages? There are many works which extend the notion of (qualitative as well as quantitative) Taylor expansion to other programming languages (\cite{DBLP:journals/entcs/Chouquet19,DBLP:journals/lmcs/Vaux19,DBLP:conf/rta/LagoL19,DBLP:conf/csl/ChouquetT20,DBLP:journals/corr/abs-1809-02659}), usually concentrating on its relations with normalisation. Not always easy is, then, \emph{applying} it to actually study the properties of the source language. This is maybe due to the fact that, unlike in the long time studied $\lam$-calculus, for other languages one does not really know what a ``mathematical theory of it'' should look like.
In the present work, we tackle the case of $\lamu$-calculus for which, to the best of our knowledge, the problem of directly defining a Taylor expansion has never been priorly considered.
In this sense, our work can be seen as a continuation of the above mentioned series of papers, and may be related to~\cite{DBLP:conf/rta/KesnerV17} where the authors study non-idempotent intersection and union types for $\lamu$-calculus.

We propose here to reverse the above proposal of Ehrhard and Regnier and \emph{start} by defining a resource sensitive version and a Taylor expansion for $\lamu$-calculus, trying then to \emph{use} this approximation machinery to prove -- following \cite{DBLP:journals/pacmpl/BarbarossaM20} -- mathematical properties of the language.

A notable work has to be mentioned: in \cite{DBLP:journals/tcs/Vaux07}, Vaux-Auclair defines a full differential $\lamu$-calculus following the steps of \cite{DBLP:journals/tcs/EhrhardR03}. Its version takes coefficients into account, and as always this raises a series of non-trivial problems to handle. However, he does not define a Taylor expansion nor does he apply those tools to find properties of the language. The present work can thus be seen also as a continuation of Vaux-Auclair's one.

There are, from our point of view, several reasons for considering the $\lamu$-calculus: first of all, from the Curry-Howard point of view, it is the natural ``successor'' of $\lam$-calculus. Moreover, it is a standard reference for the study of control operators in functional languages. 
Yet, there are just few attempts to really study its mathematical theory, and the state of the art is not comparable with the well-established one for $\lam$-calculus. For example, Laurent in \cite{Laurent04anote} makes the following observation:
%\begin{quote}\begin{small}
\textit{
``Models of the simply typed $\lam$-calculus, of the untyped $\lam$-calculus and of the simply typed $\lamu$-calculus are well understood, but what about models of the untyped $\lamu$-calculus? As far as we know, this question has been almost ignored.''}
%\end{small}\end{quote}
With the same motivation, we look at the other major part which constitutes a mathematical theory of a programming language, namely the \emph{theory of approximation}. In this sense, the present work can be seen as a continuation of \cite{Laurent04anote}.

Other points in relation with Krivine's classical realizability, proof-nets, CPS-translations and Saurin's $\Lam\mu$-calculus will be mentioned in the conclusions.

The article is organised as follows:
in Section 2 we define the resource $\lamu$-calculus and prove that it is strongly normalising and confluent (Corollaries~\ref{lamu-cor:SN} and~\ref{lamu-cor:confluence}). In Section 3 we define the qualitative Taylor expansion and prove its main properties, which give rise to a non-trivial sensible ``$\lamu$-theory'' (Corollary~\ref{lamu-cor:sensible!}). In Section 4 we apply these approximation tools to prove two important results: Stability (Theorem \ref{lamu-th:TeStability}) and Perpendicular Lines Property (Theorem \ref{lamu-thm:PLL}). As a consequence, we obtain the non-representability of \emph{parallel-or} in $\lamu$-calculus (Corollaries~\ref{lamu-cor:Por} and \ref{lamu-cor:Por'}).
%\emph{All results not proven in the main text, or whose proof is only sketched, are proven in the technical appendix following the main text.}

\subsection{Quick overview of $\lamu$-calculus}

We briefly recall the definition of the $\lamu$-calculus, and introduce some basic notions and notation.

\begin{Definition}
Fix a countable set whose elements are called \emph{variables} and a disjoint countable set whose elements are called \emph{names}. The set $\lamu$ of \emph{$\lamu$-terms} is generated by the following grammar:
\[
	M::= x \ \mid \ \lam x.M \ \mid \ MM \ \mid \ \mu\alpha.\name{\beta}{M}
\]
(for $x$ a variable and $\alpha,\beta$ names)
in which, as usual, $\lam$ binds $x$ in $M$ as well as $\mu$ binds $\alpha$ in $\name{\beta}{M}$. Terms are considered up to renaming of bound variables and names.
\end{Definition}

Despite not being actual subterms, words of shape $\name{\alpha}{M}$ are called \emph{named terms}\footnote{
Historically named terms are written as $[\alpha]M$, as in \cite{DBLP:conf/lpar/Parigot92}. But this notation has to be given up since the use of square brackets is already imperatively taken by the \emph{finite multisets}, which we will encounter constantly in the following. Another notation, used in \cite{SAURIN2012106}, is to write $M\alpha$. However in our framework we find this notation not clear. The notation $\name{\alpha}{M}$ should, instead, clearly show what is inside a ``naming'' and what is not.
}. $M$ is said to be \emph{named under} $\alpha$.

The \emph{$k$-contexts} (also called \emph{multihole-contexts} when $k$ is generic) $C=C\set{\xi_1,\dots,\xi_k}$ are defined as expected by:
\[
C::= x \ \mid \ \xi_i \ \mid \ \lam x.C \ \mid \ CC \ \mid \ \mu\alpha.\name{\beta}{C}
\]
where $\set{\xi_1,\dots,\xi_k}$ is a new set whose elements are called \emph{holes}. $1$-contexts are simply called \emph{contexts}.
A context with exactly one occurrence of the hole is called \emph{single-hole}, and as usual satisfy: $C::= \xi_i \ \mid \ \lam x.C \ \mid \ CM \ \mid \ MC \ \mid \ \mu\alpha.\name{\beta}{C}$.

\begin{Definition}\label{def:reductionLamu}
The reduction relation $\to$ of $\lamu$-calculus is the contextual closure\footnote{The \emph{contextual closure} of a binary relation $\mathcal{R}$ is the binary relation given by the set: $\set{(C\hole{M},C\hole{N}) \st M\mathcal{R} N\mathrm{ \ and \ }C\hole{.}\mathrm{ \ single \ hole \ context}}$.} of the union $\to_{\mathrm{base}}$ of:
\[
(\lam x.M)N\rightarrow_\lam M\set{N/x}
\]
\[
(\mu\alpha.\name{\beta}{M})N\rightarrow_\mu \mu\alpha.(\name{\beta}{M})_\alpha N
\]
\[
\mu\gamma.\name{\alpha}{\mu\beta.\name{\eta}{M}}\rightarrow_\rho \mu\gamma.(\name{\eta}{M}\set{\alpha/\beta})
\]
where $M\set{N/x}$ is the usual capture-free substitution of $N$ for all free occurrences of $x$ in $M$, $\name{\eta}{M}\set{\alpha/\beta}$ replaces $\alpha$ for all the free occurrences of $\beta$ in $\name{\eta}{M}$, and $\namedapp{M}{\alpha}{N}$ is given by:
\begin{itemize}
	\item[] $\namedapp{x}{\alpha}{N}:=x$
	\item[] $\namedapp{\lam x.M}{\alpha}{N}:=\lam x.\namedapp{M}{\alpha}{N}$
	\item[] $\namedapp{MP}{\alpha}{N}:=(\namedapp{M}{\alpha}{N})(\namedapp{P}{\alpha}{N})$
	\item[] $\namedapp{\mu\beta.\name{\gamma}{M}}{\alpha}{N}:=\mu\beta.\name{\gamma}{\namedapp{M}{\alpha}{N}}$ (if $\gamma\neq\alpha$)
	\item[] $\namedapp{\mu\beta.\name{\alpha}{M}}{\alpha}{N}:=\mu\beta.\name{\alpha}{(\namedapp{M}{\alpha}{N})N}$.
\end{itemize}
We denote by $=_{\lamu\rho}$ the equivalence induced by $\to$ on $\lamu$.
\end{Definition}

The operation $\namedapp{M}{\alpha}{N}$ coincides with the substitution $M\left\{\name{\alpha}{(\cdot)N}/{\name{\alpha}{\cdot}}\right\}$: every named subterm $\name{\alpha}{\cdot}$ of $M$ gets substituted with the named term $\name{\alpha}{(\cdot)N}$. Nevertheless, ``morally'', it is an application: each subterm of $M$ named under $\alpha$ receives a copy of $N$ to be applied to. This is why we chose the notation ``$\namedapp{M}{\alpha}{N}$", which is reminiscent of the application of $M$ to $N$, and the term $\namedapp{M}{\alpha}{N}$ is called the \emph{named application of $M$ to $N$ through $\alpha$}. Such notation is due to %Vaux-Auclair
\cite{DBLP:journals/tcs/Vaux07}.

The reduction $\lam$ is the usual $\beta$-reduction (which we call ``$\lam$'' in order to avoid confusion with names). The reduction $\rho$ is just a renaming of names.
The novelty is the $\mu$-reduction which, in the following section, we are going to ``linearise''.
There are many reductions that one can consider on the $\lamu$-calculus; we chose to stick to those three because they are the ones considered in the original paper \cite{DBLP:conf/lpar/Parigot92}.

\begin{theorem}\label{thm:lamuConfl}
The $\lamu$-calculus $(\lamu, \to)$ is confluent.
\end{theorem}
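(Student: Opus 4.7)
The plan is to follow the classical Tait--Martin-L\"of approach via a \emph{parallel} reduction relation $\Rightarrow$ on $\lamu$-terms, for which the diamond property can be checked directly. Once we have the sandwich $\to \;\subseteq\; \Rightarrow \;\subseteq\; \to^*$ and the diamond property for $\Rightarrow$, a routine tiling argument yields that $\Rightarrow^*$ is confluent, and hence so is $\to^*$.

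Concretely, I would define $\Rightarrow$ inductively so that it is reflexive, compatible with all term constructors (including $\mu\alpha.\name{\beta}{\cdot}$), and each of the three base steps lifts to its parallel counterpart, contracting possibly several sub-redexes in parallel inside the arguments: $(\lam x.M)N \Rightarrow M'\set{N'/x}$, $(\mu\alpha.\name{\beta}{M})N \Rightarrow \mu\alpha.(\name{\beta}{M'})_\alpha N'$, and $\mu\gamma.\name{\alpha}{\mu\beta.\name{\eta}{M}} \Rightarrow \mu\gamma.(\name{\eta}{M'}\set{\alpha/\beta})$ whenever the immediate sub-terms parallel-reduce in the corresponding way. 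The inclusions $\to\;\subseteq\;\Rightarrow\;\subseteq\;\to^*$ follow by a straightforward induction on derivations. Before attacking the diamond property, I would prove, by structural induction, three compatibility lemmas stating that each of the operations appearing in the right-hand sides of the base rules---the capture-free substitution $M\set{N/x}$, the name renaming $\name{\eta}{M}\set{\alpha/\beta}$, and the named application $\namedapp{M}{\alpha}{N}$---preserves $\Rightarrow$ in all its arguments.

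The main obstacle is the diamond property itself: given $M \Rightarrow N_1$ and $M \Rightarrow N_2$, to construct $P$ with $N_1 \Rightarrow P$ and $N_2 \Rightarrow P$. I would argue by induction on the derivation of $M \Rightarrow N_1$ with a case split on $M \Rightarrow N_2$, invoking the compatibility lemmas in the base-redex cases. The delicate situations are the critical overlaps: (i) a $\lam$-redex appearing inside the body of a $\mu$-abstraction which is itself applied, where closure of $\Rightarrow$ under $\namedapp{\cdot}{\alpha}{\cdot}$ is used; (ii) the $\mu$/$\rho$ overlap of the form $(\mu\gamma.\name{\alpha}{\mu\beta.\name{\eta}{M}})N$, where one must verify that ``push $N$ through $\alpha$, then $\rho$-contract'' and ``$\rho$-contract, then push $N$'' meet at a common reduct, which reduces to checking that $\namedapp{\cdot}{\alpha}{\cdot}$ commutes appropriately with the name-renaming $\set{\alpha/\beta}$; and (iii) nested $\rho$-redexes, where two successive $\mu\ldots\name{\cdot}{\mu\ldots}$ patterns both trigger. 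These calculations are conceptually routine but binding-sensitive, and have to be carried out carefully modulo the $\alpha$-equivalence of bound variables and names. With the diamond property in hand, $\Rightarrow^*$ is confluent and, by $\Rightarrow^* = \to^*$, we conclude.
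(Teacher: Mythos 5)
First, a point of comparison: the paper does not actually prove this theorem itself --- it defers entirely to Theorem~4.1 of Py's thesis --- so your attempt is being measured against the known literature proof rather than an in-paper argument. Your overall strategy (Tait--Martin-L\"of parallel reduction) is indeed the classical route for $\lamu$, but as you have set it up there is a genuine gap: the naive parallel reduction $\Rightarrow$ does \emph{not} satisfy the diamond property, and the failure is exactly your case~(ii), which is \emph{not} a mere commutation of $\namedapp{\cdot}{\alpha}{\cdot}$ with renaming. Take $M_0=(\mu\alpha.\name{\alpha}{\mu\beta.\name{\eta}{x}})N$ with $\eta\neq\alpha,\beta$. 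Contracting the root $\mu$-redex gives $N_1=\mu\alpha.\name{\alpha}{(\mu\beta.\name{\eta}{x})N}$, while contracting the $\rho$-redex (the whole function part) gives $N_2=(\mu\alpha.\name{\eta}{x})N$. In $N_1$ the $\rho$-redex has been destroyed: the subterm named under the outer $\alpha$ is now an \emph{application}, so to rejoin $N_2$ one must first fire the newly created $\mu$-redex $(\mu\beta.\name{\eta}{x})N$, which re-creates the pattern $\mu\alpha.\name{\alpha}{\mu\beta.\cdots}$, and only then fire the resulting $\rho$-redex. Since that $\rho$-redex is \emph{created} by the first contraction, no single parallel step from $N_1$ reaches $\mu\alpha.\name{\eta}{x}$; the one-step parallel reducts of $N_1$ are of the form $\mu\alpha.\name{\alpha}{(\mu\beta.\name{\eta}{x})N'}$ or $\mu\alpha.\name{\alpha}{\mu\beta.\name{\eta}{x}}$, whereas those of $N_2$ are $(\mu\alpha.\name{\eta}{x})N'$ or $\mu\alpha.\name{\eta}{x}$ --- disjoint sets. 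This is precisely the phenomenon the paper itself records in the resource setting, in the remark showing that $\to_{\rho^{\mathrm{r}}}$ and $\to_{\mu^{\mathrm{r}}}$ do not commute (which is why Hindley--Rosen cannot be invoked there either).

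To repair the argument you need more than the three compatibility lemmas you list. Either (a) use a stronger, development-style parallel step whose $\mu$-case also performs the $\rho$-renamings that its own contraction enables --- essentially a combined rule sending $(\mu\gamma.\name{\alpha}{\mu\beta.\name{\eta}{M}})N$ past both the application and the renaming in one go --- together with a Takahashi-style complete-development term and the triangle property (this is the route of the careful literature proofs, e.g.\ Py's thesis, and of the corrected parallel reduction of Baba, Hirokawa and Fujita, who exhibit exactly this kind of counterexample to Parigot's original parallel reduction); or (b) prove confluence of $\to_\lam\cup\to_\mu$ separately and then establish a factorization/postponement lemma for $\rho$, as the paper's own remark (attributed to O.~Laurent) suggests in the resource setting. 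Your substitution, renaming and named-application lemmas are needed in either case, but they are not where the difficulty lies.
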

\begin{proof}
See proof of Theorem 4.1 of \cite{WalterPyThesis98}.
\end{proof}

\begin{Lemma}\label{lamu-lm:writingM}
Every $\lamu$-term $M$ has the following shape:
\[M=\lam\vec{x}_1.\mu\alpha_1.\name{\beta_1}{\dots\lam\vec{x}_k.\mu\alpha_k.\name{\beta_k}{R\vec{Q}}}\]
where $R$ is either a variable, or a $\lam$-redex or a $\mu$-redex; furthermore, $R$, $\vec{Q}$, $k$, $\vec{x}_i$ and $\alpha_i$ are unique.
$R$ is called the \emph{head redex} of $M$ if it is a $\lamu$-redex, and it is called the \emph{head variable} of $M$ otherwise.
The sequence $\lam\vec{x}_1.\mu\alpha_1.\name{\beta_1}{\dots\lam\vec{x}_k.\mu\alpha_k.\name{\beta_k}{*}}$ is called the \emph{head} of $M$.
Therefore, every $\lamu\rho$-\emph{normal} $\lamu$-term $M$ has a head variable, has no $\rho$-redexes in its head and (with the previous notations) $\vec{Q}$ are $\lamu\rho$-normal $\lamu$-terms.
\end{Lemma}

Other than what we already said in the introduction, we will not add more explanations of the logical and programming meaning of this calculus.
Let us just add here the encoding of $\prog{\callcc}$ in $\lamu$-calculus: $\prog{\callcc}:=\lam y.\mu\alpha.\name{\alpha}{y(\lam x.\mu\delta.\name{\alpha}{x})}$.

\section{Resource $\lamu$-calculus}

Recall that a \emph{multiset} $A$ on a set $X$ is a map from $X$ to $\mathbb{N}$.
We use a multiplicative notation: the empty multiset is denoted with $1$ and the union of two multisets $A,B$ is denoted with $A*B$.
The set of multisets on a set $X$ is a monoid w.r.t. $*$, with neutral element $1$.
We denote with $\fmsets{X}$ the set of \emph{finite multisets} on $X$, that is, multisets $A$ with $X-A^{-1}(0)$ finite.
Such an $A$ will be as usual written as $A=[a_1,\dots,a_k]$, with $A(a_i)$ repetitions for each $a_i$.
We will sometimes write $m\pm A$ for $[m\pm a_1,\dots,m\pm a_k]$ if $m\pm a_i$ happens to be defined.

\begin{Definition}
The set $\lamu^\mathrm{r}$ of \emph{resource $\lamu$-terms} is given by:
\[
t::= x \ \mid \ \lam x.t \ \mid \ t_0[t_1,\dots,t_n] \ \mid \ \mu\alpha.\name{\beta}{t}
\]
where $[t_1,\dots,t_n]\in \, !\lamu^\mathrm{r}$ ($n\geq 0$), and it is called a \emph{bag}.
Resource terms are considered up to renaming of bound variables and names.
\emph{Resource-contexts} are defined as expected.
For $\nu$ a variable or a name, the \emph{degree} $\dg{\nu}{t}\in\N$ of $\nu$ in $t$, is defined as the number of free occurrences of $\nu$ in $t$.
\end{Definition}

The meaning of a resource sensitive application $(\lam x.t)[\vec{u}]$ is to non-deterministically choose a way to associate each resource in the bag with exactly one occurrence of the argument $x$ in $t$.
It is thus natural to consider (formal) sums.
If this association cannot be done without erasing or duplicating resources, then it annihilates to the empty sum $0$.
The operational semantics of a resource sensitive application $(\mu\alpha.\name{\beta}{t})[\vec{u}]$ will be discussed in Definition~\ref{def:resRed}.

\begin{Definition}\label{def:extensionToSums}
Call $2\langle\lamu^\mathrm{r}\rangle$ the free module generated by $\lamu^\mathrm{r}$ over the boolean semiring, which simply means the set of the \emph{formal sums} of finitely many $\lamu^\mathrm{r}$-terms, quotiented by commutativity, idempotency and associativity of $+$.
An element of $2\langle\lamu^\mathrm{r}\rangle$ will be called a \emph{sum} (in fact, it is just a finite subset of $\lamu^\mathrm{r}$).
We extend the constructors of $\lamu^\mathrm{r}$ to $2\langle\lamu^\mathrm{r}\rangle$ by linearity, setting:
%\[\lam x.\sum_i t_i:=\sum_i \lam x.t_i\qquad \mu\alpha.{}_{\beta}\Big| \sum_i t_i \Big|:=\sum_i \mu\alpha.\name{\beta}{t_i}\]
\[\left(\sum_{i_0} t_{i_0}\right)\left[ \sum_{i_1}t_{i_1} ,\dots,\sum_{i_n} t_{i_n} \right]:=\sum_{i_0,\dots,i_n} t_{i_0}[t_{i_1},\dots,t_{i_n}]\]
and analogous for $\lam x.\sum_i t_i$ and $\mu\alpha.{}_{\beta}\Big| \sum_i t_i \Big|$.
We denote with $0$ the empty sum.
It is the neutral element for $+$ and the annihilating element for the above constructors (i.e. when it appears as any subterm, the whole term becomes $0$).
\end{Definition}

Let us define now a reduction in $\lamu^\mathrm{r}$ (or, better said, in $2\langle\lamu^\mathrm{r}\rangle$).
For this, we will need to divide a multiset into a certain number of ``blocks''. 
This notion already exists in the literature of combinatorics (see for example \cite{bender1974partitions}).

\begin{Definition}
A \emph{partition} (resp.\ \emph{weak partition}) of a multiset $[\vec{u}]$ is a multiset $[[\vec{v}_1],\overset{(k\geq 1)}{\dots},[\vec{v}_k]]$ of \emph{non empty} (resp.\ possibly empty) multisets such that $[\vec{u}]=[\vec{v}_1]*\dots*[\vec{v}_k]$.
A \emph{composition} (resp. \emph{weak composition} - \emph{w.c.} for short) of a multiset $[\vec{u}]$ is a tuple $([\vec{v}_1],\dots,[\vec{v}_k])$ of multisets s.t.\ $[[\vec{v}_1],\dots,[\vec{v}_k]]$ is a partition (resp.\ weak partition) of $[\vec{u}]$.
\end{Definition}

Observe that the empty bag $1$ admits no partitions but admits infinite weak partitions: they are the multisets of shape $[1,\dots,1]$ ($h\geq 1$ times $1$).
Here are some other examples:
the set of all the weak partitions of the bag $[x]$ is $\set{[[x]],[[x],1],[[x],1,1],\dots}$.
The set of all weak partitions of $[x,x]$ is
 $\!\set{[[x,x]],[[x],[x]],[[x,x],1],[[x],[x],1],$ $[[x,x],1,1],[[x],[x],1,1],...}$.

\begin{Definition}
Let $t\in\lamu^\mathrm{r}$ and $[\vec{u}]\in\,\fmsets{\lamu^\mathrm{r}}$.
The \emph{linear substitution} $t\langle[u_1,\dots,u_k]/x\rangle\in 2\langle\lamu^\mathrm{r}\rangle$ is defined, as usual, in Figure~\ref{fig:LinSub}.
In order to linearise the $\mu$-reduction we introduce the \emph{linear named application} $\langle t\rangle_\alpha [\vec{u}]\in 2\langle\lamu^\mathrm{r}\rangle$, defined in Figure~\ref{fig:LinNamedApp} \footnote{The induction takes into account also the case of named terms $\name{\eta}{t}$; this is done for technical reasons.}.
\end{Definition}

\begin{figure*}[t!]
 \centering
 \[
 \begin{array}{rcl}
  x\langle [v]/x \rangle=v
  &
  y\langle 1/x \rangle=y \,\,\, (y\neq x)
  &
  (\lam y.t)\langle [\vec{u}]/x \rangle = \lam y.t\langle [\vec{u}]/x \rangle
  \\
  x\langle 1/x \rangle = x\langle [v,w,\vec{u}]/x \rangle = 0
  &
  y\langle [v,\vec{u}]/x \rangle=0 \,\,\, (y\neq x)
  &
  (\mu\alpha.\name{\beta}{t})\langle [\vec{u}]/x \rangle = \mu\alpha.\name{\beta}{t\langle [\vec{u}]/x \rangle}
  \\
  \\
  (t[v_1,\dots,v_n])\langle [\vec{u}]/x \rangle =
  & 
  \sum\limits_{
  ([\vec{s}^{\,0}],\dots,[\vec{s}^{\,n}])
  \textit{ w.c. of }[\vec{u}]}
  &
  \!\!\!\!\!!\!\!t\langle [\vec{s}^{\,0}]/x \rangle\left[\,v_1\langle [\vec{s}^{\,1}]]/x \rangle,\dots,v_n\langle [\vec{s}^{\,n}]/x \rangle\,\right].
 \end{array}
 \]
 \caption{Definition of linear substitution}
 \label{fig:LinSub}
\end{figure*}

\begin{figure*}[t!]
 \centering
 \[
 \begin{array}{rcl}
  \lnamedapp{x}{\alpha}{[v,\vec{u}]}=0
  &
  \lnamedapp{x}{\alpha}{1}=x
  &
  \lnamedapp{\name{\eta}{t}}{\alpha}{[\vec{u}]}=
  \name{\eta}{\lnamedapp{t}{\alpha}{[\vec{u}]}}\quad(\textit{if }\eta\neq \alpha)
  \\
  \\
  \lnamedapp{\mu\gamma.\name{\eta}{t}}{\alpha}{[\vec{u}]} = \mu\gamma.\lnamedapp{\name{\eta}{t}}{\alpha}{[\vec{u}]}
  &
  \lnamedapp{\lam y.t}{\alpha}{[\vec{u}]} = \lam y.\lnamedapp{t}{\alpha}{[\vec{u}]}
  &
  \lnamedapp{\name{\alpha}{t}}{\alpha}{[\vec{u}]}=
  \sum\limits_{
  ([\vec{w}^{\,1}],[\vec{w}^{\,2}])
  \textit{ w.c. of }[\vec{u}]}
  \name{\alpha}{\left(\lnamedapp{t}{\alpha}{[\vec{w}^{\,1}]}\right)[\vec{w}^{\,2}]}
  \\
  \\
  \lnamedapp{t[v_1,\dots,v_n]}{\alpha}{[\vec{u}]} =
  & 
  \sum\limits_{
  ([\vec{w}^{\,0}],\dots,[\vec{w}^{\,n}])
  \textit{ w.c. of }[\vec{u}]}
  &
  \left(\lnamedapp{t}{\alpha}{[\vec{w}^{\,0}]}\right)\left[\,\lnamedapp{v_1}{\alpha}{[\vec{w}^{\,1}]},\dots,\lnamedapp{v_n}{\alpha}{[\vec{w}^{\,n}]}\,\right].
 \end{array}
 \]
 \caption{Definition of linear named application}
 \label{fig:LinNamedApp}
\end{figure*}

Remark that, thus, if $\dg{\alpha}{t}=0$ then $\langle t\rangle_\alpha 1 := t$ and $\langle t\rangle_\alpha [v,\vec{u}] := 0$;
if $\dg{\alpha}{t} =: d \neq 0$ then:
   $\langle t\rangle_\alpha [\vec{u}]$ is the sum
  $\sum
   t\left\{\name{\alpha}{(\cdot)[\vec{s}^{\,1}]}/_{\name{\alpha}{\cdot}^{(1)}},\dots,\name{\alpha}{(\cdot)[\vec{s}^{\,d}]}/_{\name{\alpha}{\cdot}^{(d)}}\right\}$,
 where the sum is taken over all $([\vec{s}^{\,1}],\dots,[\vec{s}^{\,d}])$ w.c.\ of $[\vec{u}]$ of length $d$ and $\name{\alpha}{\cdot}^{(1)},\dots,\name{\alpha}{\cdot}^{(d)}$ is any fixed enumeration of the occurrences of $\alpha$ in $t$.

\begin{Definition}\label{def:resRed}
Define a reduction $\rightarrow_\mathrm{r}\,\subseteq\lamu^\mathrm{r}\times 2\langle\lamu^\mathrm{r}\rangle$ as the resource-context closure of the union $\to_\mathrm{base^\mathrm{r}}$ of:
\[
(\lam x.t)[\vec{u}]\rightarrow_{\lam^\mathrm{r}} t\langle[\vec{u}]/x\rangle
\qquad
\mu\gamma.\name{\alpha}{\mu\beta.\name{\eta}{t}}\rightarrow_{\rho^\mathrm{r}} \mu\gamma.(\name{\eta}{t}\set{\alpha/\beta})
\]
\[
(\mu\alpha.\name{\beta}{t})[\vec{u}]\rightarrow_{\mu^\mathrm{r}} \mu\alpha.\langle \name{\beta}{t}\rangle_\alpha [\vec{u}].
\]
We extend it to all $2\langle\lamu^\mathrm{r}\rangle\times2\langle\lamu^\mathrm{r}\rangle$ setting:
\[\rightarrow_\mathrm{r}\,:=\set{(t+\Sum{S},\Sum{T}+\Sum{S}) \st t\rightarrow_\mathrm{r}\Sum{T}\mathrm{ \ and \ }t\notin\Sum{S}}.\]
\end{Definition}

Observe that the analogue of Lemma~\ref{lamu-lm:writingM} holds for $\lamu^\mathrm{r}$-terms (in particular we will use the notion of head variable/redex).

The work~\cite{DBLP:journals/pacmpl/BarbarossaM20} is an example of how a resource calculus can be useful, as it enjoys strong properties such as linearity, strong normalisation and confluence.
In the resource $\lam$-calculus the last two properties are easy; as we are going to see, in our case they are more involved.
%From now, when the context is clear, we will sometimes simply write $\to_\lam$, $\to_\mu$, $\to_\rho$ instead of $\to_{\lam^r}$,$\to_{\mu^r}$,$\to_{\rho^r}$.

\subsection{Strong normalisation}

With $\to_{\lam^\mathrm{r}}$ we erase exactly one $\lam$, with $\to_{\rho^\mathrm{r}}$ we erase exactly one $\mu$. With $\to_{\mu^\mathrm{r}}$ however, the situation is more subtle: we are not creating nor erasing $\lam$'s or $\mu$'s (which remain thus in constant number), but we are eventually making the reduct grow by creating an arbitrarily large number of new applications.
However, in order to pass from the $\mu$-redex $(\mu\alpha.\name{\beta}{t})[\vec{u}]$ to a reduct $t'\in\mu\alpha.\lnamedapp{\name{\beta}{t}}{\alpha}{[\vec{u}]}$, we: first, decompose $[\vec{u}]$ in several blocks; then, erase $[\vec{u}]$; finally, put each block \emph{inside} a certain \emph{named} subterm of $\name{\beta}{t}$.
We replaced thus a bag with many new bags which are at a ``deeper depth''.
As we will see in Remark \ref{rm:depthDegree}, it will be immediate to recognize that actually this depth is necessary bounded by the number of $\mu$-occurrences in the term, which is invariant under $\to_{\mu^\mathrm{r}}$, so the former subtracted to the latter should decrease.
Remark that in the case $[\vec{}u]=1,\dg{\mu}{\name{\beta}{t}}=0$ we do not create new applications but we simply erase one already existing one, so we have to make sure our measure decreases in this case as well.

\begin{Definition}
Let $t$ be a $\lamu$-term and let $b$ be an \emph{occurrence} of a bag or of a subterm of $t$.
The \emph{depth} $d_t(b)\in\N$ of $b$ in $t$ is the number of \emph{named} subterms of $t$ containing $b$.
\end{Definition}

\begin{Remark}\label{rm:depthDegree}
By definition of the grammar of the $\lamu$-calculus there are as many named subterms of $t$ as $\mu$-abstractions in $t$, i.e. $\dg{\mu}{t}$.
So we must have: $d_t(b)\leq\dg{\mu}{t}$.
\end{Remark}

\begin{Definition}\label{def:Measure}
Define the \emph{multiset measure} $\ms{t}\in\,\fmsets{\N}$ of a $\lamu^\mathrm{r}$-term $t$ as:
\[
\ms{t}:=\dg{\mu}{t}-[\,d_t(b)\mid b\textit{ occurrence of bag in }t\,].
\]
\end{Definition}

Remark \ref{rm:depthDegree} assures that $\ms{t}\in\,\fmsets{\N}$ (and not in $\fmsets{\mathbb{Z}}$).
This is crucial because it allows us to reason by induction w.r.t.\ the multiset order on it.
The measure $\ms{\cdot}$ is ``almost'' the good one for strong normalization:

\begin{Proposition}\label{lamu-prop:MuNormalizes!}
If $t\to_{\mu^\mathrm{r}} t'+\Sum T$ then $\ms{t}>\ms{t'}$.
\end{Proposition}
\begin{proof}
If $t\to_{\mu^\mathrm{r}} t'+\Sum T$ then $t=c\hole{(\mu\alpha.\name{\beta}{s})b_0}%[\vec{u}]}
$ and $t'=c\hole{h}$ with $h\in \mu\alpha.\lnamedapp{\name{\beta}{s}}{\alpha}{b_0}$ %}{[\vec{u}]}$
and $c$ a single-hole resource context.
%Call $b_0$ the written occurrence of $[\vec{u}]$ in $t$,
Call $k:=\dg{\mu}{t}=\dg{\mu}{t'}$ and consider $\dg{\alpha}{\name{\beta}{s}}\in\N$.
The are two cases:

- Case $\dg{\alpha}{\name{\beta}{s}}=0$.
 By definition of $\to_{\mu^\mathrm{r}}$ this is possible only if $b_0%$[\vec{u}]
 =1$ (otherwise $t\to_{\mu^\mathrm{r}} 0$) and $h=\mu\alpha.\name{\beta}{s}$.
 So in $t$ there are the exact same occurrences of bags as in $t'$ and they are at the same depth, except for $b_0$ which is in $t$ but not in $t'$.
 This means that $\ms{t}=\ms{t'}*[\,k-d_{t}(b_0)\,]>\ms{t'}$.

- Case $\dg{\alpha}{\name{\beta}{s}}=:n\geq1$.
 Then:
 \[
 h=\mu\alpha.\name{\beta}{s}\Big\{
 \name{\alpha}{(\cdot)b_1}/_{{\name{\alpha}{\cdot}}^{(1)}}
 ,\dots,
 \name{\alpha}{(\cdot)b_{n}}/_{{\name{\alpha}{\cdot}}^{(n)}}
 \Big\}
 \]
 for a w.c.\ $(b_1,\dots,b_{n})$ of $b_0$.
 %$[\vec{u}]$.
 So $\ms{t'} = k - A'$ and $\ms{t} = k - A$, with $A'$ and $A$ respectively the multisets:
 \[
 \arraycolsep=1pt
 \begin{small}
 \begin{array}{ccccccc}
  B^c_{t'} & * &
  B^s_{t'} & * &
  [\,d_{t'}(b) \mid b \textit{ in a }v\in b_i\textit{ for an }i\,]  & * &
  [\,d_{t'}(b_1),\dots,d_{t'}(b_{n})\,]
  \\
  B^c_{t} & * &
  B^s_{t} & * &
  [\,d_{t}(b) \mid b \textit{ in a }v\in b_0]\,] & * &
  [\,d_{t}(b_0)\,],
 \end{array}
 \end{small}
 \]
 where we put $B^c_{t}:=[\,d_{t}(b) \mid b\textit{ in }c]$ (and analogously for $s,t'$).
 Now for $i=1,\dots,n$ we have:
 $d_{t'}(b_i)=d_{t'}(h)+d_h(b_i)> d_{t}(b_0)$ since as one sees from the expression of $h$, we have $d_{t'}(h) = d_{t}(b_0)$ and $d_{h}(b_i) > 0$.
 Also, it is easily understood that for all $b$ occurring in $c$, or occurring in $s$, we have:
 $d_{t'}(b)=d_t(b)$.
 Finally, observe that since $(b_1,\dots,b_{n})$ is a w.c.\ of $b_0$, then: $b$ occurs in some $v\in b_0$ iff $b$ occurs in some $v\in b_i$ for some $i$. And for all such $b$ we have:
 $
  d_{t'}(b)=d_{t'}(v)+d_v(b)>d_{t}(v)+d_{v}(b)=d_t(b)
 $
 since $d_{t'}(v)=d_{t'}(b_i)>d_{t}(b_0)=d_{t}(v)$.
 All these considerations precisely mean $\ms{t}>\ms{t'}$.\qedhere
\end{proof}

Analogously we find:

\begin{Proposition}\label{lamu-prop:LamNormalizes!}
If $t\to_{\lam^\mathrm{r}} t'+\Sum T$ then $\ms{t}>\ms{t'}$.
\end{Proposition}

However, only $\ms{t}$ is not enough to prove strong normalization.
In fact (reasoning similarly as before):

\begin{Proposition}\label{lamu-prop:RhoNormalizes!}
 If $t\to_{\rho^\mathrm{r}} t'+\Sum T$ then $\ms{t}\geq\ms{t'}$, and there are cases in which the equality holds, such as (for $\beta\neq\eta$): $\ms{\mu\gamma.\name{\alpha}{\mu\beta.\name{\eta}{x}}}=1=\ms{\mu\gamma.\name{\eta}{x}}$ with $\mu\gamma.\name{\alpha}{\mu\beta.\name{\eta}{x}} \to_{\rho^{\mathrm{r}}} \mu\gamma.\name{\eta}{x}$.
\end{Proposition}

That is why, in order to get a strongly normalising measure, we add another component:

\begin{Definition}\label{lamu-cor:SNm}
We define the measure\index{Measure!$\SNm{\cdot}$ multiset-}:
\[\SNm{t}:=(\ms{t},\dg{\mu}{t})\in \, ! \N\times \N\] 
ordered by the (well-founded) lexicographic order.
\end{Definition}

\begin{Corollary}[SN]\label{lamu-cor:SN}
 If $t\to_{\mathrm{r}} t'+\Sum T$ then $\SNm{t}> \SNm{t'}$.
Therefore, the resource reduction $\to_\mathrm{r}$ on sums is strongly normalising.
\end{Corollary}
\begin{proof}
 The only case in which $\ms{\cdot}$ may remain constant is along a $\rho^{\mathrm{r}}$-reduction, but in this case $\dg{\mu}{t}$ strictly decreases.
\end{proof}

Before turning to the confluence, let us see some properties of the measure $\ms{\cdot}$ that we will use in the following.

\begin{Lemma}\label{lamu-lm:mDecreasesOnSubterms}
Let $c=c\hole{\xi}$ be a single-hole context and $t$ a $\lamu$-term. Then: $\ms{c\hole{t}}\geq \ms{t}$.
\end{Lemma}
\begin{proof}
We have $\ms{c\hole{t}}=A*[\,\dg{\mu}{c\hole{t}}-d_{c\hole{t}}(b) \mid b \textit{ in }t\,]$
where $A:=[\,\dg{\mu}{c\hole{t}}-d_{c\hole{t}}(b) \mid b \textit{ in }c\,]$.
But $\dg{\mu}{c\hole{t}}=\dg{\mu}{c}+\dg{\mu}{t}$ and, for all occurrence $b$ in $t$, we have:
$
 d_{c\hole{t}}(b)= d_t(b)+d_c(\xi)\leq d_t(b)+\dg{\mu}{c}
$.
Thus, for all occurrence $b$ of bag in $t$, we have:
$
 \dg{\mu}{c\hole{t}}-d_{c\hole{t}}(b)\geq
 \dg{\mu}{t}-d_t(b)
$
and this last integer is exactly a generic element of $\ms{t}$ (if it is non-empty). Hence $\ms{c\hole{t}}\geq A*\ms{t}\geq\ms{t}$.
\end{proof}

However, there are cases in which $\ms{c\hole{t}}=\ms{t}$ even if $c\neq\xi$.
For example, taking $c=\lam x.\xi$ one has $\ms{c\hole{t}}=1=\ms{t}$ for all $t\in\lamur$ \emph{not} containing any bags.
This is exactly why, in the following, we will consider a slightly different size, called $\emph{\textbf{ms}}$ (defined in~\autoref{lamu-cor:MSMeasure}).

\begin{Lemma}\label{lamu-lm:mOnCT}
 Let $c=c\hole{\xi}$ be a single-hole resource context and $t,s\in\lamu$. Then:
\begin{enumerate}
 \item $\ms{c\hole{t}}
  =
  (\dg{\mu}{t}+\ms{c})
  *
  ((\dg{\mu}{c}-d_c(\xi)+\ms{t})$.
 \item If $\dg{\mu}{s}\leq\dg{\mu}{t}$ and $\ms{s}<\ms{t}$, then $\ms{c\hole{s}}<\ms{c\hole{t}}$.
\end{enumerate}
\end{Lemma}
\begin{proof}[Proof sketch]
Easily checked, thanks to the clear fact that if $b$ is the occurrence of a bag in $c$, then $d_{c\hole{t}}(b)=d_{c}(b)$.
\end{proof}

In the following, we will need a strong normalising measure which, in addition, satisfies the properties of the following Corollary~\ref{lamu-cor:MSMeasure}.
However, we have seen with some lines above that $\SNm{\cdot}$ is not adapted for that.
This is why we operate a last slight modification.
First, let us consider the size\index{Size!-of a $\lamu$-term} $\size{t}\in\N_{\geq 1}
$ of resource $\lamu$-terms: $\size x := 1$, $\size{\lam x.t} := 1 + \size t =: \size{\mu\alpha.\name{\beta}{s}}$, $\size{t_0[t_1,\dots,t_k]} := 1 + k + \sum\limits_{i=0}^k \size{t_i}$.
Of course $\size{t}=1$ iff $t$ is a variable, and for all $c$ single-hole context, $\size{c\hole{t}}\geq\size{t}$ where the equality holds iff $c=\xi$.

\begin{Corollary}\label{lamu-cor:MSMeasure}
Define a measure $\textbf{ms}(\cdot)$\index{Measure!$\textbf{ms}(\cdot)$ multiset-} of $\lamu$-terms as:
\[\textbf{ms}(t):=(\SNm{t}, \size{t})\in\,\fmsets{\N}\times\N\times\N\]
ordered lexicographically (and thus well-founded).
Then:
\begin{enumerate}
	\item $t$ is a variable iff $\textbf{ms}(t)$ takes its minimal value $(1,0,1)$.
	\item For all single-hole context $c=c\hole{\xi}$, we have $\textbf{ms}(c\hole{t})\geq\textbf{ms}(t)$, and the equality holds iff $c=\xi$.
	\item If $t\to_\mathrm{r} t'+\Sum T$ then $\textbf{ms}(t)>\textbf{ms}(t')$.
\end{enumerate}
\end{Corollary}

\subsection{Confluence}

Due to the presence of three different reductions, the confluence or our resource $\lamu$-calculus is not easy.
Another difficulty is raised from the fact that we placed ourselves in a qualitative setting, that is, with idempotent sums, so that we cannot always reduce a sum component-wise.
This is why we split the problem of the confluence in two steps: first, we show that the \emph{quantitative} resource $\lamu$-calculus (that is, where sum is \emph{not} idempotent, and thus coefficients matter) is confluent (\autoref{lamu-sec:+confl}); second, we show that its confluence implies the confluence of the calculus with no coefficients (\autoref{lamu-sec:from+tor}).
Before all that, let us precisely explain the notion of quantitative resource calculus:

\begin{Definition}\label{lamu-def:qualitlamur}
 The \emph{quantitative resource $\lamu$-calculus} $\N\langle \lamur \rangle$ is built as the qualitative one ($2\langle \lamur \rangle$, Definition~\ref{def:extensionToSums}) except for taking now ``$+$'' \emph{non-idempotent}.
 We define the three base-case reductions $\to^+_{\lam^\mathrm{r}},\to^+_{\mu^\mathrm{r}},\to^+_{\rho^\mathrm{r}}$ in $\lamur \times \N\langle \lamur \rangle$: the reduction $\to^+_{\rho^\mathrm{r}}$ is defined as usual, while $\to^+_{\lam^\mathrm{r}}$ and $\to^+_{\mu^\mathrm{r}}$ are defined as in~\autoref{def:resRed}, except for the fact that the linear substitution and linear named application are replaced with a modified version of them, denoted respectively $t\langle[\vec{u}]/x\rangle^+$ and $\langle t\rangle^+_\alpha [\vec{u}]$, and defined in the next~\autoref{lamu-def:linSubst^+}.
The contextual union of the base-reductions $\to^+_{\lam^\mathrm{r}},\to^+_{\mu^\mathrm{r}},\to^+_{\rho^\mathrm{r}}$ forms a reduction $\tor^+$ on $\lamur\times\N\langle \lamur \rangle$ which is extended to all $\N\langle \lamur \rangle \times \N\langle \lamur \rangle$ by taking $\set{(t+\Sum S , \Sum T+\Sum S) \mid t \tor^+ \Sum T}$ (remark that we dropped the annoying condition ``$t\notin \Sum S$'', since now coefficients matter; it is the main reason why we turn to this calculus).
\end{Definition}

\begin{Notation}
If $[u_1,\dots,u_k]$ is a bag -- with the written enumeration of (possibly multiple) elements -- and $W$ is a function $W:\set{1,\dots,k}\longto I=:\set{i_0< \dots < i_n}$, we will sometimes denote it by $W:(u_1,\dots,u_k)\longto I$, or by $W:(\vec{u})\longto I$. 
When we use such notation we mean that $W$ generates the w.c.\ $([u_j\mid j\in W^{-1}(i_0)],\dots,[u_j\mid j\in W^{-1}(i_n)])$ of $[u_1,\dots,u_k]$, and denoted by $([\vec{w}^{\,i_0}],\dots,[\vec{w}^{\,i_n}])$.
In the case $[\vec{u}]=1$, we write $W:()\longto I$ and we say that there is exactly one w.c.\ generated by $W$, namely $(1,\overset{(n+1\textit{ times})}{\dots},1)$.
\end{Notation}

\begin{Definition}\label{lamu-def:linSubst^+}
The quantitative version $t\langle[\vec{u}]/x\rangle^+$ of the linear substitution is defined exactly as in~\autoref{fig:LinSub} but by replacing the sum on all the $([\vec{w}^{\,0}],\dots,[\vec{w}^{\,n}])$ w.c.\ of $[\vec{u}]$ with the sum on all $W:(\vec{u})\longto\set{0,\dots,n}$, and by taking the above w.c.'s as the ones generated by $W$.
The quantitative version $\langle t\rangle^+_\alpha [\vec{u}]$ of the linear named application is defined exactly as in~\autoref{fig:LinNamedApp} but by replacing, in the case of an application, the sum on all the $([\vec{w}^{\,0}],\dots,[\vec{w}^{\,n}]) \textit{ w.c.\ of }[\vec{u}]$ with the sum on all $W:(\vec{u})\longto\set{0,\dots,n}$, and by taking the above w.c.'s as the ones generated by $W$.
Analogously for the case of a named term, where we use $W:(\vec{u})\longto\set{1,2}$.
\end{Definition}

For instance:
$(\mu\alpha.\name{\alpha}{\mu\eta.\name{\alpha}{x}})[y,y]
\tor^+$
$\mu\alpha.\name{\alpha}{(\mu\eta.\name{\alpha}{x1})[y,y]}$
 $+
 2 \, \mu\alpha.\name{\alpha}{(\mu\eta.\name{\alpha}{x[y]})[y]}
 +
\mu\alpha.\name{\alpha}{(\mu\eta.\name{\alpha}{x[y,y]})1}$.

In the following, $\supp{\Sum T} \in 2\langle\lamur\rangle$ is the \emph{support} of a $\Sum T \in \N\langle\lamur\rangle$, that is, the \emph{set} of its addends (with no coefficients: $\supp{\Sum T}$ is $\Sum T$ when considered with an idempotent ``$+$'').

\begin{Remark}
 It is clear by the definitions that if, for $t\in\lamur$, one has $t \tor \Sum T$ (in $2\langle\lamur\rangle$) and $t\tor^+ \Sum S$ (in $\N\langle\lamur\rangle$) by reducing the \emph{same} redex, then $\supp{\Sum S}= \Sum T$.
That is, the two reductions only differ for the coefficients.
Said differently, the qualitative substitutions $t\langle[\vec{u}]/x\rangle$ and $\langle t\rangle_\alpha [\vec{u}]$ are just the quantitative substitutions $t\langle[\vec{u}]/x\rangle^+$ and $\langle t\rangle^+_\alpha [\vec{u}]$ taken with boolean coefficients.
\end{Remark}

\begin{Remark}\label{lamu-rmk:SN}
Using the fact that the reduction $\tor$ is strongly normalising in $\lamur$ (\autoref{lamu-cor:SN}), we can prove that the reduction $\tor^+$ is strongly normaling in $\N\langle \lamur \rangle$.
It suffices to extend the strongly normalising measure $\SNm{\cdot}$ of $\lamur$ (\autoref{lamu-cor:SNm}) to $\N\langle \lamur \rangle$ by setting $\SNm{\Sum T} := [\SNm{t} \mid t \in \Sum T] \, \in ! (! \N\times \N)$, and use the multiset order.
\end{Remark}

\begin{Remark}[Embedding inside the differential $\lamu$-calculus]
In~\cite{DBLP:phd/hal/Vaux07}, Vaux defines a differential $\lamu$-calculus, let us call it $(\de\lamu,\to_\partial)$ in this remark, and proves its confluence.
Our resource $\lamu$-calcului $2\langle \lamur \rangle$ and $\N\langle \lamur \rangle$ are strictly related to it, as they translate into $\de\lamu$ via\footnote{Here we are considering that the reader knowns the syntax of $\de\lamu$.} $\de{(\cdot)} :\lamur \longto \de\lamu$ defined as:
$
 \de x := x, \de{(\lam x.t)} := \lam x.\de t, \de{(\mu\alpha.\name{\beta}{t})} := \mu\alpha.\name{\beta}{\de t},
 (t[u_1,\dots,u_k])^{\partial}$
$:= \left(\prog{D}^k \, t^{\partial} \bullet (u_1^{\partial},\dots,u_k^{\partial})\right)0$.
We can extend it to sums, both in $2\langle\lamur\rangle$ and in $\N\langle\lamur\rangle$, by linearity.
In the qualitative case (that is, if we consider $\de{(\cdot)} :2\langle\lamur\rangle \longto \de\lamu$), it is not a well-behaved embedding, because it does not preserve reductions.
On the contrary, it does in the quantitative case (that is, if we consider $\de{(\cdot)} :\N\langle\lamur\rangle \longto \de\lamu$), in the sense that: if $t\to_{\lamur}^+ \Sum T$ in $\N\langle \lamur \rangle$, then $t^{\partial} \msto[\partial] \Sum T^{\partial}$ in $\de{\lamu}$ (``$\msto$'' is the reflexive transitive closure of $\to$).

One may wonder if it is possible to use the confluence of $(\de\lamu,\to_\partial)$ to infer the confluence of our calculi.
In fact, it is possible to show that the local confluence of $\to_{\lamur}^+$ follows from the confluence of $\to_{\partial}$.
However, as the reader has probably noticed, we only talked about $\to^+_{\lamur}$, and not about the whole $\tor^+ \,\, = \,\, \to^+_{\lamu^{\mathrm{r}}} \cup \to^+_{\rho^{\mathrm{r}}}$.
This is simply because in~\cite{DBLP:phd/hal/Vaux07} the $\rho$-reduction is not considered.
Remark that, even if it is possible to prove the confluence of $\to_{\rho^{\mathrm{r}}}$ by itself, we cannot use it in order to entail the confluence of $\tor^+ \,\, = \,\, \to^+_{\lamu^{\mathrm{r}}} \cup \to^+_{\rho^{\mathrm{r}}}$ by invoking the well-known Hindley-Rosen lemma.
This is because $\to_{\rho^{\mathrm{r}}}^+$ and $\to_{\lamur}^+$ do \emph{not} commute, as the following example shows (where $\gamma\neq\eta\neq\alpha$):
$\mu\alpha.\name{\alpha}{(\mu\gamma.\name{\eta}{x})1}
 \,\,{}_{\mu^\mathrm{r}}\!\!\leftarrow^{\!\!\!\!\!\!\!\!\!\!\!+}
 \,\,\,\,(\mu\alpha.\name{\alpha}{\mu\gamma.\name{\eta}{x}})1
 \to^+_{\rho^\mathrm{r}}
 (\mu\alpha.\name{\eta}{x})1
 \to^+_{\rho^\mathrm{r}}
 \mu\alpha.\name{\eta}{x}$,
but $\mu\alpha.\name{\alpha}{(\mu\gamma.\name{\eta}{x})1} \not\to^+_{\rho^\mathrm{r}} \mu\alpha.\name{\eta}{x}$.
In the previous ``non-reduction'', the blocked $\rho$-redex can be unblocked by performing a $\mu$-reduction (and the diagram closes).
O. Laurent suggests (private communication) that we could still use the confluence of $(\de\lamu,\to_\partial)$ in order to obtain the confluence of $\tor^+ \,\, = \,\, \to^+_{\lamu^{\mathrm{r}}} \cup \to^+_{\rho^{\mathrm{r}}}$ passing through a factorization lemma: if $t\msto[\mathrm{r}]^+\Sum T$ then $t\msto[\lamu^\mathrm{r}]^+\Sum T' \msto[\rho^\mathrm{r}]^+\Sum T$, for some $\Sum T'$.
\end{Remark}

\subsubsection{Confluence of $(\N\langle\lamur\rangle,\tor^+)$}\label{lamu-sec:+confl}

We present here a proof which essentially consists in closing the diagrams of all the possible critical pairs.

\begin{Remark}\label{lamu-rm:easyTakeOutSum}
We can extend the definition of linear substitution and linear named application to sums by linearity.
Analogously, the renaming of a sum $\Sum T \set{\alpha/\beta}$ is defined component-wise.
With these definitions in place one checks that base-step-reduction lifts to sums\index{Reduction!-for $\lamu$-sums}, i.e.
$\left(\mu\alpha.\name{\beta}{\Sum{T}}\right)[\vec{\Sum{U}}]\msto[\mu^\mathrm{r}]^+ \mu\alpha.\left\langle \Sum{T} \right\rangle^+_\alpha[\vec{\Sum{U}}]$ and analogously for $\left(\lam x.\Sum T\right)[\vec{\Sum{U}}]$ and $\mu\alpha.\name{\beta}{\mu\gamma.\name{\eta}{\Sum T}}$.
One can also check that $\tor^+$ on $\N\langle\lamur\rangle$ is contextual.
\end{Remark}

\begin{Notation}\label{notation:delta}
In this section we will sometimes use the following notation: for $\alpha,\beta,\eta$ names, we set $\delta_\eta^\alpha(\beta)$ to be $\alpha$ if $\beta=\eta$, or $\eta$ otherwise.
\end{Notation}

The following is the crucial technical lemma.

\begin{Lemma}\label{lamu-lm:ForLocalConfluence}
Let $t,s\in\lamu^\mathrm{r}$, $x$ a variable, $\alpha,\beta$ names and $[\vec{u}]$ a bag. 
If $s\rightarrow^+_\mathrm{r} \Sum S$ then:
\begin{enumerate}
    \item $s\set{\alpha/\beta}\msto[\mathrm{r}]^+ \Sum S\set{\alpha/\beta}$
    \item $t\langle [s,\vec{u}]/x\rangle^+\msto[\mathrm{r}]^+ t\langle [\Sum S,\vec{u}]/x\rangle^+$
	\item $s\langle [\vec{u}]/x\rangle^+\msto[\mathrm{r}]^+ \Sum S\langle [\vec{u}]/x\rangle^+$
	\item $\langle t\rangle^+_\alpha [s,\vec{u}] \msto[\mathrm{r}]^+ \langle t \rangle^+_\alpha [\Sum S,\vec{u}]$
	\item $\mu\alpha.\langle \name{\beta}{t}\rangle^+_\alpha [s,\vec{u}] \msto[\mathrm{r}]^+ \mu\alpha. \langle \name{\beta}{t} \rangle^+_\alpha [\Sum S,\vec{u}]$
	\item $\langle s\rangle^+_\alpha [\vec{u}] \msto[\mathrm{r}]^+ \langle \Sum S\rangle^+_\alpha [\vec{u}]$.
	\item $\mu\alpha.\langle \name{\beta}{s}\rangle^+_\alpha [\vec{u}] \msto[\mathrm{r}]^+ \mu\alpha. \langle \name{\beta}{\Sum S}\rangle^+_\alpha [\vec{u}]$.
\end{enumerate}
\end{Lemma}

Before proving it, let us remark that, in the qualitative setting, it is false.
For instance, if $s\tor s'$, then $(x[x])\langle [s,s]/x\rangle = s[s] \not\msto[\mathrm{r}] s[s']+s'[s] =(x[x])\langle [s,s']/x\rangle$.
In the quantitative case, instead, $(x[x])\langle [s,s]/x\rangle^+= 2\, s[s] \msto[\mathrm{r}]^+ s[s']+s'[s] =(x[x])\langle [s,s']/x\rangle^+$.

\begin{proof}[Proof sketch of Lemma~\ref{lamu-lm:ForLocalConfluence}]
1).
Induction on $s$.
The only interesting cases are:

- Case $s=\mu\gamma.\name{\eta}{s'}$: we have two subcases:
Subcase $s\to_{\mathrm{r}}^+ \Sum S$ is performed by reducing $s'$: easy by inductive hypothesis.
Subcase $s'=\mu\gamma'.\name{\eta'}{s''}$ and $s\to_{\mathrm{r}}^+ \Sum S$ is performed by reducing its leftmost $\rho$-redex:
then $s=\mu\gamma.\name{\eta}{\mu\gamma'.\name{\eta'}{s''}}$, $\Sum S=\mu\gamma.\name{\eta'}{s''}\set{\eta/\gamma'}$ and we have the four sub-subcases $\eta=\beta$ and $\eta'=\beta$, or $\eta=\beta$ and $\eta'\neq\beta$, or $\eta\neq\beta$ and $\eta'=\beta$, or $\eta\neq\beta$ and $\eta'\neq\beta$.
They are all similar, let us only show the second one, for which we have:
\[
	\begin{array}{rl}
	\Sum S\set{\alpha/\beta}=&\mu\gamma.\name{\eta'}{s''}\set{\alpha/\beta,\alpha/\gamma'}=\mu\gamma.\name{\delta_{\eta'}^\alpha(\gamma')}{s''\set{\alpha/\beta,\alpha/\gamma'}} \\
	s\set{\alpha/\beta}=&\mu\gamma.\name{\alpha}{\mu\gamma'.\name{\eta'}{s''\set{\alpha/\beta}}}\to_\rho^+ \mu\gamma.\name{\eta'}{s''\set{\alpha/\beta}}\set{\alpha/\gamma'}\\
	=&\mu\gamma.\name{\delta_{\eta'}^\alpha(\gamma')}{s''\set{\alpha/\beta,\alpha/\gamma'}}=\Sum S\set{\alpha/\beta}.
	\end{array}
\]
- Case $s=s'[\vec{v}]$: we have four subcases depending on how the reduction $s\to_{\mathrm{r}}^+ \Sum S$ is performed. The only interesting one is the subcase $s'=\mu\gamma.\name{\eta}{s''}$ and $s\to_{\mathrm{r}}^+ \Sum S$ is performed by reducing the $\mu$-redex $s$, for which we have:
$\Sum S=\mu\gamma.\coefflnamedapp{\name{\eta}{s''}}{\gamma}{[\vec{v}]}$ and
$
s\set{\alpha/\beta} 
= 
(\mu\gamma.\name{\delta_\eta^\alpha(\beta)}{s''\set{\alpha/\beta}})[\vec{v}\set{\alpha/\beta}]$
$\to_{\mu^{\mathrm{r}}} $-reduces to
$\mu\gamma.\coefflnamedapp{\name{\delta_\eta^\alpha(\beta)}{s''\set{\alpha/\beta}}}{\gamma}{[\vec{v}\set{\alpha/\beta}]}$
which in turn coincides with the sum $\mu\gamma.\coefflnamedapp{\name{\eta}{s''}\set{\alpha/\beta}}{\gamma}{[\vec{v}\set{\alpha/\beta}]}=\Sum S\set{\alpha/\beta}.
$

(2).
Induction on $t$.
The only non-trivial case is when $t$ is $v_0[v_1,\dots,v_n]$.
In this case we can write $t\langle [s,\vec{u}]/x\rangle^+ $ as:
\[
 \sum\limits_{W}\,
 \sum\limits_{j=0}^n
 (\, v_0\langle [\vec{w}^{\,0}]*[s]_0^j/x \rangle \,)\,[\dots,v_i\langle [\vec{w}^{\,i}]*[s]_i^j/x \rangle,\dots]
\]
where $W:(\vec{u})\longto\set{1,\dots,n}$ and we put $[s]_i^j$ to be the singleton multiset $[s]$ if $i=j$, and the empty mulitset $1$ if $i\neq j$..
Fix now a $W:(\vec{u})\longto\set{1,\dots,n}$ (together with its generated w.c.) and consider each of the $n+1$ elements of the sum on $j$. 
We write the case for $j=0$, but the other cases are exactly the same. 
Since $j=0$, the element is $(\, v_0\langle [\vec{w}^{\,0}]*[s]/x \rangle \,)\,[\dots,v_i\langle [\vec{w}^{\,i}]/x \rangle,\dots]$ and by inductive hypothesis it $\msto[\mathrm{r}]^+$-reduces to
$
 (\, v_0\langle [\vec{w}^{\,0}]*[\Sum S]/x \rangle \,)\,[\dots,v_i\langle [\vec{w}^{\,i}]/x \rangle,\dots].
$
Now summing up all the elements for $j=0,\dots,n$ and $W:(\vec{u})\longto\set{1,\dots,n}$ we obtain the following sum:
\[
 \sum\limits_{W}\,
 \sum\limits_{j=0}^n
  (\, v_0\langle [\vec{w}^{\,0}]*[\Sum S]_0^j/x \rangle \,)[\dots,v_i\langle [\vec{w}^{\,i}]*[\Sum S]_i^j/x \rangle,\dots].
\]
which can be shown to be the desired
$
 (\,v_0[v_1,\dots,v_n]\,)\,\langle [\Sum S,\vec{u}]/x\rangle^+ .
$

(3).
Induction on $s$. We only show the case $s=\mu\alpha.\name{\beta}{s'}$, which splits in two subcases:
the subcase where $s\to_{\mathrm{r}}^+ \Sum S$ is performed by reducing $s'$ is immediate.
The subcase where $s'=\mu\gamma.\name{\eta}{s''}$ and $s\to_{\mathrm{r}}^+ \Sum S$ is performed by reducing its leftmost $\rho$-redex goes as follows:
we have $\Sum S=\mu\alpha.\name{\eta}{s''}\set{\beta/\gamma}$ and\\
$
  s\langle [\vec{u}]/x\rangle^+ 
  = 
  \mu\alpha.\name{\beta}{\mu\gamma.\name{\eta}{s''\langle [\vec{u}]/x \rangle^+}} $\\
  $\to_{\rho^{\mathrm{r}}} 
  \mu\alpha.\name{\eta}{s''\langle [\vec{u}]/x \rangle^+}\set{\beta/\gamma}$
 $= 
 \mu\alpha.\name{\eta}{s''}\langle [\vec{u}]/x \rangle^+\set{\beta/\gamma} $\\
 $= 
 \mu\alpha.\name{\eta}{s''}\set{\beta/\gamma}\langle [\vec{u}]/x \rangle^+ 
  = 
  \Sum S\langle [\vec{u}]/x \rangle^+.
$

(4).
Induction on $t$.
Similar to point (2).

(5).
It is easy discriminating the cases $\alpha=\beta$ and $\alpha\neq\beta$ and concluding by point (4).

(6).
Induction on $s\in\lamu$.
The only interesting cases are:

- Case $s=\mu\beta.\name{\gamma}{s'}$.
We have two subcases:
the subcase where $s\to_{\mathrm{r}}^+ \Sum S$ is performed by reducing $s'$, so $\Sum S=\mu\beta.\name{\gamma}{\Sum S'}$ with $s'\to_\mathrm{r} \Sum S'$, is easy by inductive hypothesis (however remark that we \emph{cannot} immediately apply the inductive hypothesis on $\name{\gamma}{s'}$, simply because the named term $\name{\gamma}{s'}\notin\lamu^{\mathrm{r}}$).
The subcase where $s'=\mu\gamma'.\name{\eta}{s''}$ (with $\gamma\neq\gamma'$) and $s\to_{\mathrm{r}}^+ \Sum S$ is performed by reducing its leftmost $\rho$-redex goes as follows:
we have $\Sum S=\mu\beta.\name{\eta}{s''}\set{\gamma/\gamma'}$ and we split in two sub-subcases depending whether $\alpha\neq\gamma$ or $\alpha=\gamma$.
Let us only show this last sub-subcase:
We have (putting $W:(\vec{u})\longto\set{1,2}$):
\[\begin{array}{rllr}
 \coefflnamedapp{s}{\alpha}{[\vec{u}]}
 & = &
 \sum\limits_{W}
 \mu\beta.\name{\alpha}{
  \,(\,
   \mu\gamma'.
   \coefflnamedapp{
    \name{\eta}{s''}
   }{\alpha}{[\vec{w}^{\,1}]}
  \,)\,
  [\vec{w}^{\,2}]
 }
 &
 \\
 & \mstor{\mu}^+ &
 \sum\limits_{W}
 \mu\beta.\name{\alpha}{
  \mu\gamma'.\coefflnamedapp{
   \coefflnamedapp{\name{\eta}{s''}}{\alpha}{[\vec{w}^{\,1}]}
  }{\gamma'}{[\vec{w}^{\,2}]}
 } &
 \\
 & \mstor{\rho}^+ &
 \sum\limits_{W}
 \mu\beta.
 \coefflnamedapp{
  \coefflnamedapp{
   \name{\eta}{s''}
  }{\alpha}{[\vec{w}^{\,1}]}
 }{\gamma'}{[\vec{w}^{\,2}]}
 \, \set{\alpha/\gamma'} &
 \\
 & = &
 \sum\limits_{W}
 \coefflnamedapp{
  \coefflnamedapp{
   \mu\beta.\name{\eta}{s''}
  }{\alpha}{[\vec{w}^{\,1}]}
 }{\gamma'}{[\vec{w}^{\,2}]}
 \, \set{\alpha/\gamma'}
 \\
 & = &
 \coefflnamedapp{\mu\beta.\name{\eta}{s''}\,\set{\alpha/\gamma'}}{\alpha}{[\vec{u}]}
 &
 \\
 & = &
 \coefflnamedapp{\Sum S}{\alpha}{[\vec{u}]}.
 &
\end{array}\]
- Case $s=s'[v_1,\dots,v_n]$: we have four subcases depending on how the reduction $s\to_{\mathrm{r}}^+ \Sum S$ is performed.
We only show the one in which $s'=\mu\gamma.\name{\eta}{s''}$ (with $\gamma\neq\alpha$) and $s\to_{\mathrm{r}}^+ \Sum S$ is performed by reducing the $\mu$-redex $s$.
In this subcase we have $\Sum S = \mu\gamma.\coefflnamedapp{\name{\eta}{s''}}{\gamma}{[\vec{v}]}$
and (putting $W:(\vec{u})\longto\set{0,\dots,n}$):
\[\small\begin{array}{rcl}
 \coefflnamedapp{s}{\alpha}{[\vec{u}]}
 & = &
 \sum\limits_{W}
 (\coefflnamedapp{\mu\gamma.\name{\eta}{s''}}{\alpha}{[\vec{w}^{\,0}]})\,[\dots,\coefflnamedapp{v_i}{\alpha}{[\vec{w}^{\,i}]},\dots]
 \\
 & \mstor{\mu}^+ &
 \mu\gamma.
 \sum\limits_{W}
 \coefflnamedapp{
 \coefflnamedapp{\name{\eta}{s''}}{\alpha}{[\vec{w}^{\,0}]}}
 {\gamma}
 {[\dots,\coefflnamedapp{v_i}{\alpha}{[\vec{w}^{\,i}]},\dots]}
 \\
 & = &
 \mu\gamma.
 \coefflnamedapp{
 \coefflnamedapp{
 \name{\eta}{s''}}{\gamma}{[\vec{v}]}
 }{\alpha}{[\vec{u}]}
 \\
 & = &
 \coefflnamedapp{\Sum S}{\alpha}{[\vec{u}]}.
\end{array}\]

(7).
It is immediate by discriminating the cases $\alpha=\beta$ and $\alpha\neq\beta$ and then concluding by point (6).\qedhere
\end{proof}

\begin{Proposition}\label{lamu-prop:lamu^+Confl}
The reduction $\tor^+$ is locally confluent in $\N\langle \lamur \rangle$.
\end{Proposition}
\begin{proof}[Proof sketch]
We show, by induction on a single-hole resource context $c$, that if
$t\rightarrow^+_{\mathrm{base}^\mathrm{r}}\Sum T$ and $c\hole{t} \tor^+ \Sum{T}_2$,
then there is $\Sum T'\in\N\langle\lamur\rangle$ s.t.\ $c\hole{\Sum T} \msto[\mathrm{r}]^+ \Sum T'\, {}^+_{\mathrm{r}}\!\!\!\twoheadleftarrow \Sum{T}_2$.
The proof crucially uses Lemma~\ref{lamu-lm:ForLocalConfluence} and Remark~\ref{lamu-rm:easyTakeOutSum}.
All the cases of the induction are either easy by induction, or they reduce to the case $c=\xi$, so this is the only one we sketch below.

We have $c\hole{t}=t\rightarrow^+_{\mathrm{base}^\mathrm{r}} \Sum T$ and we only have the three base-cases of Definition~\ref{lamu-def:qualitlamur}.

Case $t=(\lam x.s)[\vec{u}]$ and $\Sum T = s\langle[\vec{u}]/x \rangle^+$. 
Then $c\hole{t}=t\tor^+\Sum T_2$ (on a different redex than $t$) can only be performed either by reducing $s$, or by reducing an element $w$ of $[\vec{u}]$.
We have thus the two easy respective diagrams.

Case $t=(\mu\alpha.\name{\beta}{s})[\vec{u}]$ and $\Sum T = \mu\alpha.\langle \name{\beta}{s} \rangle^+_\alpha[\vec{u}]$.
Then $c\hole{t}=t\tor^+\Sum T_2$ (on a different redex than $t$) can only be performed either by reducing $s$, giving rise to an easy diagram, or by reducing an element $w$ of $[\vec{u}]$, giving rise to an easy diagram, or if $s=\mu\gamma.\name{\eta}{s'}$ and we reduce the $\rho$-redex $...\name{\beta}{\mu\gamma.\,...}$. In the latter case we split into the case $\alpha\neq\beta$, the case $\alpha=\beta,\gamma\neq\eta,\eta=\alpha$, the case $\alpha=\beta,\gamma\neq\eta,\eta\neq\alpha$, and the case $\alpha=\beta,\eta=\gamma$ (with necessarily $\gamma\neq\alpha$).
These four cases respectively correspond to four non-trivial (but similar) diagrams, of which we only show the one corresponding to the case $\alpha=\beta,\gamma\neq\eta,\eta=\alpha$:
$(\mu\alpha.\name{\alpha}{\mu\gamma.\name{\alpha}{s'}})[\vec{u}]$ reduces both to $\Sum U:=\mu\alpha.\coefflnamedapp{\name{\alpha}{\mu\gamma.\name{\alpha}{s'}}}{\alpha}{[\vec{u}]}$ and to $v:=(\mu\alpha.\name{\alpha}{s'\set{\alpha/\gamma}})[\vec{u}]$.
Now, $v \tor^+ \mu\alpha.\coefflnamedapp{\name{\alpha}{s'\set{\alpha/\gamma}}}{\alpha}{[\vec{u}]} = \sum\limits_{W}
\mu\alpha.\name{\alpha}{(\coefflnamedapp{s'\set{\alpha/\gamma}}{\alpha}{[\vec{w}^{\,0}]})[\vec{w}^{\,1}]}$ $=:\Sum V$ (with $W:(\vec{u})\to\set{1,2}$),
while it is easy to see that (with $W:(\vec{u})\to\set{1,2}, D:(\vec{w}^{\,0})\to\set{1,2}$) $\Sum U$ $\msto[\mathrm{r}]^+$-reduces to $\sum\limits_{W}
\sum\limits_{D}
\mu\alpha.\name{\alpha}{\mu\gamma.\name{\alpha}{\coefflnamedapp{(\coefflnamedapp{s'}{\alpha}{[\vec{d}^{\,0}]})[\vec{d}^{\,1}]}{\gamma}{[\vec{w}^{\,1}]}}}$ which in turn $\msto[\mathrm{r}]^+$-reduces to $\sum\limits_{W}
\sum\limits_{D}
\mu\alpha.\name{\alpha}{\coefflnamedapp{(\coefflnamedapp{s'}{\alpha}{[\vec{d}^{\,0}]})[\vec{d}^{\,1}]}{\gamma}{[\vec{w}^{\,1}]}}\set{\alpha/\gamma} =: \Sum U'$.
We can show that $\Sum V = \Sum U'$, so the diagram is closed.

Case $t=\mu\gamma.\name{\alpha}{\mu\beta.\name{\eta}{s}}$ and $\Sum T = \mu\gamma.\name{\eta}{s}\set{\alpha/\beta}$.
Then $c\hole{t}=t\tor^+\Sum T_2$ (on a different redex than $t$) can be only performed either by reducing $s$, which gives an easy diagram, or if $s=\mu\gamma'.\name{\eta'}{s'}$ and we reduce the $\rho$-redex $...\name{\eta}{\mu\gamma'.\,...}$.
Putting $\delta_0:=\delta^\alpha_\eta(\beta)$, $\delta'_1:=\delta^\alpha_{\eta'}(\beta)$, $\delta_1:=\delta^{\delta_0}_{\delta'_1}(\gamma')$, $\delta'_2:=\delta^{\eta}_{\eta'}(\gamma')$ and $\delta_2:=\delta^{\alpha}_{\delta'_2}(\beta)$, the latter case gives the following diagram:
$\mu\gamma.\name{\alpha}{\mu\beta.\name{\eta}{\mu\gamma'.\name{\eta'}{s'}}}$ reduces both to $\Sum U:=\mu\gamma.\name{\delta_0}{\mu\gamma'.\name{\delta'_1}{s'\set{\alpha/\beta}}}$ and to $\Sum V:=\mu\gamma.\name{\alpha}{\mu\beta.\name{\delta'_2}{s'\set{\eta/\gamma'}}}$, while $\Sum U$ $\msto[\mathrm{r}]^+$-reduces to $\mu\gamma.\name{\delta_1}{s'\set{\alpha/\beta}\set{\delta_0/\gamma'}}$ and
$\Sum V$ $\msto[\mathrm{r}]^+$-reduces to
$\mu\gamma.\name{\delta_2}{s'\set{\eta/\gamma'}\set{\alpha/\beta}}$.
We can show those sums equal, so the diagram is closed.
\qedhere
\end{proof}

\begin{Corollary}\label{lamu-cor:UNF+}
The reduction $\tor^+$ is confluent in $\N\langle \lamur \rangle$.
\end{Corollary}
\begin{proof}
By the well-known Newman Lemma, thanks to~\autoref{lamu-rmk:SN} and~\autoref{lamu-prop:lamu^+Confl}.
\end{proof}

\subsubsection{From the confluence of $(\N\langle\lamur\rangle,\tor^+)$ to the confluence of $(2\langle\lamur\rangle,\tor)$}\label{lamu-sec:from+tor}

\begin{Definition}\label{lamu-def:bigstep+}
The reduction $\Rightarrow \subseteq \N\langle\lamur\rangle \times \N\langle\lamur\rangle$ is defined as the contextual closure of the relation: \[\set{(m \, t+\Sum S , m \, \Sum T+\Sum S) \mid m\in\N, \, t \tor^+ \Sum T, \, t \notin \supp{\Sum S}}.\]
\end{Definition}

We have $\Rightarrow \, \subseteq \, \twoheadrightarrow^+_{\mathrm{r}}$.
It is also easily seen that if $\Sum T \tor \Sum S$ in $2\langle\lamur\rangle$, then for all $m_t\in\N$ (with $t\in\Sum T$), we have $\sum\limits_{t\in\Sum T} m_t t \Rightarrow \Sum S'$ (in $\N\langle\lamur\rangle$), with $\supp{\Sum S'}=\Sum S$.

\begin{Corollary}\label{lamu-cor:lamuConfl}
 The reduction $\tor$ in $2\langle\lamur\rangle$ is locally confluent.
\end{Corollary}
\begin{proof}
Let $\Sum T_1 \,\, {}_{\mathrm{r}}\!\!\leftarrow t \tor \Sum T_2$ in $2\langle\lamur\rangle$.
Since we know that $\tor$ is strongly normalising (\autoref{lamu-cor:SN}), there are (in $2\langle\lamur\rangle$) reductions $\Sum T_1 \msto[\mathrm{r}] \Sum S_1$ and $\Sum T_2 \msto[\mathrm{r}] \Sum S_2$, with $\Sum S_1,\Sum S_2$ $\mathrm{r}$-normal.
Therefore we have (in $\N\langle\lamur\rangle$) reductions $t \Rightarrow\cdots\Rightarrow \Sum S'_1$ and $t \Rightarrow\cdots\Rightarrow \Sum S'_2$, for some $\Sum S'_1,\Sum S'_2\in\N\langle\lamur\rangle$ s.t.\ $\supp{\Sum S'_i}=\Sum S_i$.
But then, since $\Sum S_i$ is $\mathrm{r}$-normal, $\Sum S'_i$ must be $\tor^+$-normal.
Now because of~\autoref{lamu-cor:UNF+}, it must be $\Sum S'_1=\Sum S'_2$, and therefore $\Sum S_1=\supp{\Sum S'_1}=\supp{\Sum S'_2}=\Sum S_2$.
Hence, we found a common reduct of $\Sum T_1,\Sum T_2$.
\end{proof}

\begin{Corollary}[Confluence]\label{lamu-cor:confluence}
The reduction $\tor$ is confluent on $2\langle\lamur\rangle$.
\end{Corollary}
\begin{proof}
By Newman Lemma, thanks to~\autoref{lamu-cor:SN} and~\autoref{lamu-cor:lamuConfl}.
\end{proof}

\section{Qualitative Taylor expansion}

\subsection{Crucial properties}

The calculus and its resource sensitive version are almost the same; the Taylor expansion map makes us pass from one to the other.

\begin{Definition}
The (qualitative) Taylor expansion is the map $\mathcal{T}:\lamu\to\mathcal{P}(\lamu^\mathrm{r})$ defined as:
\begin{enumerate}
	\item[] $\Te{x}:=\set{x}$ \qquad $\Te{\lam x.M}:=\set{\lam x.t \st t\in\Te{M}}$
	\item[] $\Te{\mu\alpha.\name{\beta}{M}}:=\set{\mu\alpha.\name{\beta}{t} \st t\in\Te{M}}$
	\item[] $\Te{MN}:=\set{t[\vec{u}] \st t\in\Te{M},\,[\vec{u}]\in\,\fmsets{\Te{N}}}$.
\end{enumerate}
\end{Definition}

Since $\rightarrow_\mathrm{r}$ is confluent and strongly normalising, all resource terms $t$ have a unique $r$-normal form $\nf[\mathrm{r}]{t}\in 2\langle\mathfrak{\lamu^\mathrm{r}}\rangle$ (it can be $0$).
Therefore, for all $M\in\lamu$ there always exists $\NFT{M}:=\bigcup\limits_{t\in\Te{M}}\nf[\mathrm{r}]{t}\subseteq\lamu^\mathrm{r}$ (in general infinite, thus not a sum).
This allows to endow $\lamu$ with a preorder:
\[
 M\leq N\textit{ iff }\,\NFT{M}\subseteq\NFT{N}.
\]

\begin{theorem}[Monotonicity]\label{thm: Monotonicity}
For $C$ a context, the map $C\hole{\cdot}:\lamu\rightarrow\lamu$ is monotone w.r.t. $\leq$.
\end{theorem}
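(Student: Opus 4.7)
The plan is to reduce monotonicity to a ``commutation'' of Taylor expansion with context-filling, and then to push normalisation inside the holes using confluence and strong normalisation of $\to_\mathrm{r}$.

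\textbf{Step 1: Taylor expansion of a context.} By induction on the single-hole $\lamu$-context $C$, I would define a set $\Te{C}$ of \emph{multi-hole resource contexts} (of variable arity $k\geq 0$) and prove the commutation identity
\[
 \Te{C\hole M} \,=\, \bigl\{\,c\hole{t_1,\dots,t_k} \,:\, c\in\Te{C},\ k=\textit{arity of }c,\ t_i\in\Te{M}\,\bigr\}.
\]
For $C=\xi$, $C=\lam x.C'$, $C=\mu\alpha.\name{\beta}{C'}$ and $C=C'\,N$ (hole outside the argument), the outer constructor lifts linearly to resource contexts, so that $\Te{C}$ is obtained by wrapping the single-hole contexts of $\Te{C'}$ accordingly. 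The genuinely new case is $C=P\,C'$ with the hole inside the argument: by definition of $\Te{\cdot}$ on applications, this argument position becomes a bag, so a typical $c\in\Te{C}$ has the shape $p[c'_1,\dots,c'_k][\vec v]$ where $p\in\Te P$, $c'_i\in\Te{C'}$ and $k\geq 0$ is arbitrary, and its overall holes correspond to the $k$ Taylor approximants of $M$ placed inside that bag.

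\textbf{Step 2: Normal form commutes with hole-filling.} For every multi-hole resource context $c=c\hole{\xi_1,\dots,\xi_k}$ and $t_1,\dots,t_k\in\lamu^\mathrm{r}$,
\[
 \nf[\mathrm{r}]{c\hole{t_1,\dots,t_k}} \,=\, \bigcup_{t'_i\,\in\,\nf[\mathrm{r}]{t_i}} \nf[\mathrm{r}]{c\hole{t'_1,\dots,t'_k}}.
\]
This follows by first performing all $\to_\mathrm{r}$-reductions inside the holes, which lift to reductions of $c\hole{t_1,\dots,t_k}$ by contextuality of $\to_\mathrm{r}$ and its linear extension to sums, and then invoking uniqueness of $\mathrm{r}$-normal forms coming from \autoref{lamu-cor:SN} and \autoref{lamu-cor:confluence}.

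\textbf{Step 3: Conclusion.} Combining the two steps,
\[
 \NFT{C\hole M} \,=\, \bigcup\bigl\{\,\nf[\mathrm{r}]{c\hole{t'_1,\dots,t'_k}} \,:\, c\in\Te{C},\ t'_i\in\NFT M\,\bigr\},
\]
where the ``$\supseteq$'' direction uses that each $t'_i\in\NFT M$ is $\nf[\mathrm{r}]{u_i}$ for some $u_i\in\Te M$, so $c\hole{\vec u}\in\Te{C\hole M}$ by Step~1 and $\nf[\mathrm{r}]{c\hole{\vec u}}\supseteq\nf[\mathrm{r}]{c\hole{\vec{t'}}}$ by Step~2. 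The right-hand side is visibly monotone in $\NFT M$, hence $\NFT M\subseteq\NFT N$ yields $\NFT{C\hole M}\subseteq\NFT{C\hole N}$. The main obstacle is Step~1: carrying out the inductive definition and commutation identity cleanly in all syntactic cases of $\lamu$, notably when the hole of $C$ sits under a $\mu$ beneath a named subterm, where hole-replacement has to interact correctly with the linear named application of \autoref{def:resRed}.
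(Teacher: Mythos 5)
Your proposal is correct and follows essentially the same route as the paper, which proves Monotonicity by induction on $C$ as in \cite{DBLP:journals/pacmpl/BarbarossaM20} and develops exactly your Step~1 (Taylor expansion of contexts and the filling identity) as Lemma~\ref{lm:Te_of_contexts} in Appendix~A.5, using ``rigid'' (list-based) contexts to make hole-filling in bags unambiguous. The only caveat is notational: in the case $C=P\,C'$ each $c'_i$ may itself contain several holes, so the arity of $c$ is the sum of their arities rather than $k$, but this does not affect the argument.
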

\begin{proof}
Induction on $C$, as in \cite{DBLP:journals/pacmpl/BarbarossaM20}.
\end{proof}

The following technical lemma says that Taylor expansion behaves well w.r.t. substitutions.

\begin{Lemma}\label{lm:TaylorBehavesSubst}
One has:
\begin{enumerate}
\item 
$
\Te{M\set{\alpha/\beta}}=\Te{M}\set{\alpha/\beta}
$
\item
$
\Te{M\set{N/x}}=\bigcup\limits_{t\in\Te{M}}\bigcup\limits_{\vec{u}\in\,\fmsets{\Te{N}}} t\langle[\vec{u}]/x\rangle
$
\item
$
\Te{\namedapp{M}{\alpha}{N}}=\bigcup\limits_{t\in\Te{M}}\bigcup\limits_{\vec{u}\in\,\fmsets{\Te{N}}} \langle t\rangle_\alpha[\vec{u}].
$
\end{enumerate}
\end{Lemma}
\begin{proof}[Proof sketch]
(1). Straightforward induction on $M$.\\
(2). Induction on $M$ as one does for $\lam$-calculus. The only new case is $M=\mu\beta.\name{\alpha}{P}$ but it is done straightforwardly exactly as the case $M=\lam x.P$.\\
(3). Induction on $M$. Not more difficult than (2).
\end{proof}

The following important ``simulation property'' says in which sense the elements of $\Te{M}$ approximate $M$.

\begin{Proposition}\label{prop:LiftAssSimulation}
If $M\rightarrow_\mathrm{base} N$, then:
\begin{enumerate}\label{supposition}
	\item for all $s\in\Te{M}$ there exist $\Sum{T}\subseteq\Te{N}$ s.t.\ $s\msto[\mathrm{r}]\Sum{T}$
	\item for all $s'\in\Te{N}$ there is $s\in\Te{M}$ s.t.\ $s\msto[\mathrm{r}]s'+\Sum{T}$ for some sum $\Sum{T}\subseteq\Te{N}$.
\end{enumerate}
Furthermore, the same property lifts to all $\to$, that is, if $M\to N$ then point (1) and (2) hold.
\end{Proposition}
\begin{proof}[Proof sketch]
Points 1) and 2) are easy using Lemma \ref{lm:TaylorBehavesSubst}.
The ``furthermore'' part is by induction on the single-hole context $C$ s.t.\ $M=C\hole{M'}$, $N=C\hole{N'}$ and $M'\rightarrow_\mathrm{base} N'$.
\end{proof}

The following technical lemma is an adaptation of~\cite[Theorem 20]{DBLP:journals/tcs/EhrhardR08}.

\begin{Lemma}\label{lm:ForInjectivity}
Let $P,Q$ be $\lamu$-terms, $p,p'\in\Te{P}$ and $[\vec{d}],[\vec{d}\,']\in\,\fmsets{\Te{Q}}$.
Then $p=p'$ and $[\vec{d}]=[\vec{d}\,']$ are entailed by any of the following three\footnote{
Remark that point 3. (used in the proof of Theorem \ref{thm:Injectivity}) is \emph{not} an inductive step of point 2., simply because $\name{\eta}{p}\notin\lam\mu^{\mathrm{r}}$.
Therefore we treat is separately.
This is due to the fact that we are in $\lam\mu$-calculus and not in Saurin's $\Lambda\mu$-calculus.
} conditions:
 \begin{enumerate}
  \item if $\,p\lsubst{x}{\vec{d}}\cap p'\lsubst{x}{\vec{d}\,'}\neq\emptyset$
  \item if $\,\lnamedapp{p}{\gamma}{[\vec{d}]}\cap \lnamedapp{p'}{\gamma}{[\vec{d}\,']}\neq\emptyset$
  \item if $\lnamedapp{\name{\eta}{p}}{\gamma}{[\vec{d}]}\cap \lnamedapp{\name{\eta}{p'}}{\gamma}{[\vec{d}\,']}\neq\emptyset$.
 \end{enumerate}
\end{Lemma}

The following ``non-interference property'' (Theorem \ref{thm:Injectivity}) was first proved by Ehrhard and Regnier in \cite[Theorem 22]{DBLP:journals/tcs/EhrhardR08} for the $\lam$-calculus. It is known that it fails in MELL. A natural question, to which we do not have an answer yet, is what is the threshold, between $\lam$-calculus and MELL, where this property starts failing.
It is important also because somehow it is linked to the possibility of defining a \emph{coherence} on resource terms for which Taylor expansion is a maximal clique.
We show below that the result holds in $\lamu$-calculus.

\begin{theorem}\label{thm:Injectivity}
If $t,s\in\Te{M}$, $t\neq s$, then $\nf[\mathrm{r}]{t}\cap\nf[\mathrm{r}]{s}=\emptyset$.
\end{theorem}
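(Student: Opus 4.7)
The plan is to prove the contrapositive: if $\nf[\mathrm{r}]{t}\cap\nf[\mathrm{r}]{s}\neq\emptyset$ with $t,s\in\Te{M}$, then $t=s$. I would induct on $M$ together, nested, with the measure $\textbf{ms}(t)+\textbf{ms}(s)$ (which decreases strictly through any $\tor$-step by Corollary~\ref{lamu-cor:MSMeasure}(3)), exposing at each step the head shape given by Lemma~\ref{lamu-lm:writingM}: writing $M=\lam\vec{x}_1.\mu\alpha_1.\name{\beta_1}{\cdots\lam\vec{x}_k.\mu\alpha_k.\name{\beta_k}{R\vec{Q}}}$, every element of $\Te{M}$ has the matching prefix $\lam\vec{x}_1.\mu\alpha_1.\name{\beta_1}{\cdots r[\vec{q}_1]\cdots[\vec{q}_n]}$ with $r\in\Te{R}$ and $[\vec{q}_i]\in\fmsets{\Te{Q_i}}$, and similarly for $s$ with primed data. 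Because this outer prefix is $\tor$-inert (no $\rho$-redex by Lemma~\ref{lamu-lm:writingM}, no $\lam$ in applicative position, no $\mu$ facing a bag), any common normal reduct keeps it verbatim, reducing the problem to showing coincidence of the two ``head bodies'' $r[\vec{q}_1]\cdots[\vec{q}_n]$ and $r'[\vec{q}'_1]\cdots[\vec{q}'_n]$.

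If $R$ is the head \emph{variable} $y$, then $r=r'=y$ is normal and the head application itself is inert. A common element of the two normal forms must therefore be of shape $y[u_1]\cdots[u_n]$ where each $u_i$ is a normal reduct of both $[\vec{q}_i]$ and $[\vec{q}'_i]$. Unpacking these bags multiset-wise and applying the outer inductive hypothesis on each strict subterm $Q_i$, one obtains $[\vec{q}_i]=[\vec{q}'_i]$ for every $i$, hence $t=s$.

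If $R$ is a head \emph{redex}, I perform the corresponding head step $t\to_\mathrm{r}\Sum T$ and $s\to_\mathrm{r}\Sum S$ using the appropriate base rule; a common reduct $u\in\nf[\mathrm{r}]{t}\cap\nf[\mathrm{r}]{s}$ lies in $\nf[\mathrm{r}]{t^\ast}\cap\nf[\mathrm{r}]{s^\ast}$ for some $t^\ast\in\Sum T$, $s^\ast\in\Sum S$. The three types of head redex are matched exactly by the three clauses of Lemma~\ref{lm:ForInjectivity}: a $\lam^\mathrm{r}$-redex produces a linear substitution (clause~1), a $\mu^\mathrm{r}$-redex produces a linear named application either in or outside a named-term context (clauses~2/3), and a $\rho^\mathrm{r}$-redex produces a renaming (a direct syntactic injectivity argument, analogous to Lemma~\ref{lamu-lm:ForLocalConfluence}(1) in spirit). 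Applying the relevant clause on the inner induction hypothesis (valid because $\textbf{ms}(t^\ast)<\textbf{ms}(t)$ and $\textbf{ms}(s^\ast)<\textbf{ms}(s)$), I recover equality of the pre-reduct data, and thus $t=s$.

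The main obstacle is maintaining a well-founded induction through a head reduction of $M$: the reduct $M'$ need not be syntactically smaller, which is why the pair $(M, \textbf{ms}(t)+\textbf{ms}(s))$ with the $\textbf{ms}$-component strictly decreasing along $\tor$ steps is the right induction parameter. A secondary subtlety, already anticipated by the \emph{separate} clause~(3) of Lemma~\ref{lm:ForInjectivity}, is that named terms $\name{\eta}{t}$ are not themselves resource terms, so when the head redex is a $\mu^\mathrm{r}$-redex the linearised named application has to be unrolled through the outer naming rather than on the inner body alone --- a quirk specific to $\lamu$ that does not arise in the Ehrhard--Regnier $\lam$-calculus proof.
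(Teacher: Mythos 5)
Your proposal follows the paper's proof almost step for step: contrapositive, well-founded induction driven by the $\textbf{ms}$-measure (the paper inducts on $\textbf{ms}(t)$ alone and lets $M$ vary along reducts, which is what your ``nested'' induction degenerates to, since in the head-redex case only the measure component can decrease), the head decomposition of Lemma~\ref{lamu-lm:writingM}, the head-variable case by matching bags up to permutation and applying the inductive hypothesis elementwise, and the head-redex case by one head step followed by the inductive hypothesis on the reducts and Lemma~\ref{lm:ForInjectivity} to recover the pre-reduct data.

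There is, however, one concrete misstep. Lemma~\ref{lamu-lm:writingM} does \emph{not} say that the head of $M$ is free of $\rho$-redexes --- only $\lamu\rho$-\emph{normal} terms have that property; for a general $M$ the prefix $\lam\vec{x}_1.\mu\alpha_1.\name{\beta_1}{\lam\vec{x}_2.\mu\alpha_2.\dots}$ contains a $\rho$-redex whenever some $\vec{x}_{i+1}$ is empty. So your claim that the outer prefix is ``$\tor$-inert'' and kept ``verbatim'' in any common normal reduct is false, and your later decision to treat $\rho$-redexes as a possible shape of the head redex $R$ contradicts both the lemma (which allows $R$ to be only a variable, a $\lam$-redex or a $\mu$-redex) and your earlier inertness claim. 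The paper absorbs this in the head-variable case by letting the normal form carry a possibly different head $\vec{\lamu}'$: since $t$ and $s$ inherit the \emph{same} head from $M$, its $\rho$-normalisation affects both identically and the comparison reduces to the bags; in the redex cases the untouched head is simply carried along. If you instead contract a head $\rho$-redex explicitly, be aware that undoing it requires injectivity of the renaming $\set{\beta_i/\alpha_{i+1}}$, which fails on arbitrary resource terms (distinct terms can be identified by a renaming) and holds here only because both renamed bodies approximate the same subterm of $M$ --- a point your ``direct syntactic injectivity argument'' needs to make explicit.
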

\begin{proof}
By induction on $\textbf{ms}(t)$ we prove that for all $s\in\lamu^\mathrm{r}$, if $t,s\in\Te{M}$ for some $M\in\lamu$, and if there is $h\in\nf[\mathrm{r}]{t}\cap\nf[\mathrm{r}]{s}$, then $t=s$.

 Case $\textbf{ms}(t)=(1,0,1)$.
Then (Corollary \ref{lamu-cor:MSMeasure}) $t$ is a variable, thus $M$ is the same variable and therefore $s=t$.

 Case $\textbf{ms}(t)>(1,0,1)$.
By Lemma \ref{lamu-lm:writingM}, $M$ has shape:
\[
	M=\lam\vec{x}_1\mu\alpha_1.\name{\beta_1}{\dots\lam\vec{x}_{k}\mu\alpha_{k}.\name{\beta_k}{RQ_1\dots Q_n}}
\] for $R$ either a variable, or a $\lam$-redex or a $\mu$-redex. 
Since the series of $\lam$ and $\mu$ abstractions (with their respective namings) will play no role in the following, in this proof we shorten  $\lam\vec{x}_1\mu\alpha_1.\name{\beta_1}{\dots\lam\vec{x}_{k}\mu\alpha_{k}.\name{\beta_k}{\dots}}$ to just $\vec{\lamu}\name{}{\dots}$.
So:
$
 t=\vec{\lamu}\name{}{t'[\vec{u}^{\,1}]\dots[\vec{u}^{\,n}]}
$ and 
$
 s=\vec{\lamu}\name{}{s'[\vec{v}^{\,1}]\dots[\vec{v}^{\,n}]}
$
for $t',s'\in\Te{R}$ and $[\vec{u}^{\,i}],[\vec{v}^{\,i}]\in\,\fmsets{\Te{Q_i}}$.
We have now three subcases depending on the shape of $R$.

Subcase $R$ variable, say $R=x$.
Then $t'=s'=x$.
\emph{W.l.o.g.} $n\geq1$, otherwise it is trivial that $t=s$.
Now say $[\vec{u}^{\,i}]=:[u^{\,i}_{1},\dots,u^{\,i}_{m_i}]$ and $[\vec{v}^{\,i}]=:[v_{1}^{\,i},\dots,v_{m'_i}^{\,i}]$ for $i=1,\dots,n$. 
By confluence we have
$
 h \in \nf[\mathrm{\mathrm{r}}]{\vec{\lamu}\name{}{x \,\nf[\mathrm{r}]{[\vec{u}^{\,1}]}\dots\nf[\mathrm{r}]{[\vec{u}^{\,n}]}}}
$,
so 
$
h\in \nf[\mathrm{\mathrm{r}}]{\vec{\lamu}\name{}{x [\vec{d}^{\,1}]\dots[\vec{d}^{\,n}]}}
$ for some $d_{j}^{\,i}\in\nf[\mathrm{r}]{u_{j}^{\,i}}$.
Similarly, we get:
$
 h \in
\nf[\mathrm{\mathrm{r}}]{\vec{\lamu}\name{}{x [\vec{d}^{'\,1}]\dots[\vec{d}^{'\,n}]}}
$ for some $d_{j}^{'\,i}\in\nf[\mathrm{r}]{v_{j}^{\,i}}$.
So it must be $m_i=m'_i$ ($i=1,\dots,n$) and:
\[
 h=\vec{\lamu}'\name{}{x [d_{1}^{1},\dots,d_{m_1}^{1}]\cdots[d_{1}^{n},\dots,d_{m_n}^{n}]}
\]
for some head $\vec{\lamu}'$, some $d_{j}^{i}\in\nf[\mathrm{r}]{u_{j}^{i}}\cap\nf[\mathrm{r}]{v_{\sigma_i(j)}^{i}}$ and permutations $\sigma_i$ on $m_i$ elements.
But $u_{j}^{i},v_{j}^{i}\in\Te{Q_i}$ and $\textbf{ms}(u_{j}^{i})<\textbf{ms}(t)$ since $u_{j}^{i}$ is a strict subterm of $t$.
So we can apply the inductive hypothesis to each $u_{j}^{i}$ and obtain $u_{j}^{i}=v_{\sigma_i(j)}^{i}$.
Hence, $t=s$.

Subcase $R=(\lam y.P)N$.
It is the same argument as the following subcase, so we skip it.

Subcase $R=(\mu\gamma.\name{\eta}{P})N$.
Then $t'=(\mu\gamma.\name{\eta}{p})[\vec{d}]$ and $s'=(\mu\gamma.\name{\eta}{p'})[\vec{d}\,']$ for $p,p'\in\Te{P}$ and $[\vec{d}],[\vec{d}\,']\in\,\fmsets{\Te{N}}$.
By confluence on $\lamu^\mathrm{r}$ we have:
\[
 \nf[\mathrm{r}] t=\nf[\mathrm{r}]{\vec{\lamu}\name{}{(\mu\gamma.\lnamedapp{\name{\eta}{p}}{\gamma}{[\vec{d}]})[\vec{u}^{\,1}]\dots[\vec{u}^{\,n}]}}
.\]
So there is $h_1\in \mu\gamma.\lnamedapp{\name{\eta}{p}}{\gamma}{[\vec{d}]}$ s.t.\ $h\in\nf[\mathrm{r}]{\widetilde{h}_1}$
where:
$
 \widetilde{h}_1:=\vec{\lamu}\name{}{h_1[\vec{u}^{\,1}]\dots[\vec{u}^{\,n}]}.
$
Analogously we find a $h_2\in\mu\gamma.\lnamedapp{\name{\eta}{p'}}{\gamma}{[\vec{d}\,']}$ s.t.\ $h\in\nf[\mathrm{r}]{\widetilde{h}_2}$, where:
$
 \widetilde{h}_2:=\vec{\lamu}\name{}{h_2[\vec{v}^{\,1}]\dots[\vec{v}^{\,n}]}.
$
By Lemma \ref{lm:TaylorBehavesSubst} we have $h_1,h_2\in\Te{\mu\gamma.\namedapp{\name{\eta}{P}}{\gamma}{N}}$ and so
$
 \widetilde{h}_1,\widetilde{h}_2$ belong to $\Te{\vec{\lamu}\name{}{(\mu\gamma.\namedapp{\name{\eta}{P}}{\gamma}{N})Q_1\cdots Q_n}}.
$
This and the fact that $h\in\nf[\mathrm{r}]{\widetilde{h}_1}\cap\nf[\mathrm{r}]{\widetilde{h}_2}$ mean that $\widetilde{h}_1$ satisfies both the hypotheses of the inductive hypothesis.
Moreover, since $t'\to_{\mu^\mathrm{r}} h_1+\Sum T$ for some sum $\Sum T$, then $\ms{h_1}<\ms{t'}$. And since the number of $\mu$'s is constant under $\mu$-reduction, $\dg{\mu}{t'}=\dg{\mu}{h_1}$. Therefore we can apply Lemma \ref{lamu-lm:mOnCT}(2) and obtain:
$
 \ms{\widetilde{h}_1}
 =
 \ms{\vec{\lamu}\name{}{h_1[\vec{u}^{\,1}]\dots[\vec{u}^{\,n}]}}
 <
 \ms{\vec{\lamu}\name{}{t'\,[\vec{u}^{\,1}]\dots[\vec{u}^{\,n}]}}$
 $=
 \ms{t}.
$
So $\textbf{ms}(\widetilde{h}_1)<\textbf{ms}(t)$ and we can safely apply the inductive hypothesis obtaining $\widetilde{h}_1=\widetilde{h}_2$.
Looking at the definition of $\widetilde{h}_1,\widetilde{h}_2$, we get $h_1=h_2$ as well as $[\vec{u}^{\,i}]=[\vec{v}^{\,i}]$ ($i=1,\dots,n$).
But now we have:
\[
\mu\gamma.\lnamedapp{\name{\eta}{p}}{\gamma}{[\vec{d}]}\ni h_1=h_2\in \mu\gamma.\lnamedapp{\name{\eta}{p}}{\gamma}{[\vec{d}\,']}
\]
so Lemma \ref{lm:ForInjectivity} gives $p=p'$ and $[\vec{d}]=[\vec{d}\,']$, i.e. $t'=s'$.
If we look at the shape of $t,s$, this last information together with $[\vec{u}^{\,1}]=[\vec{v}^{\,1}],\dots,[\vec{u}^{\,n}]=[\vec{v}^{\,n}]$, precisely means $t=s$.\qedhere
\end{proof}

We conclude with a useful property (Corollary~\ref{cor:Ass3+Prop+Cor}).
It follows from the following proposition, which in turn easily follows by Lemma~\ref{lm:TaylorBehavesSubst}.

\begin{Proposition}\label{lamu-prop:FIRSTPART}
If $\Te{M}\ni t\rightarrow_{\mathrm{base}} \Sum{T}'$ then there is $N\in\lamu$ s.t.\ $\Sum{T}'\subseteq\Te{N}$ and $M\to N$.
\end{Proposition}

\begin{Corollary}\label{cor:Ass3+Prop+Cor}
For all $\Sum{T}\subseteq\Te{M}$, there exist $N\in\lamu$ s.t.\ $M\msto N$ and $\nf[\mathrm{r}]{\Sum{T}}\subseteq\Te{N}$.
\end{Corollary}
\begin{proof}[Proof sketch]
One first generalises Proposition~\ref{lamu-prop:FIRSTPART} to sums (instead of a term $t$ in the statement); then, we prove the desired result by induction on the length of a maximal reduction $\Sum{T}\msto[\mathrm{r}]\nf[\mathrm{r}]{\Sum T}$. 
\end{proof}

\subsection{The $\lamu$-theory $\NFTeq$}

Mimicking the definitions for $\lam$-calculus we say that:
\begin{Definition}
\begin{enumerate}
\item An equivalence $\mathcal{R}$ on $\lamu$ is a \emph{congruence} iff $\mathcal{R}$ is contextual.
\item A congruence $\mathcal{R}$ is a \emph{$\lamu$-theory} iff $\mathcal{R}\supseteq\,=_{\lamu\rho}$.
\item The \emph{term algebra} of a $\lamu$-theory $\mathcal{R}$ is the quotient $\lamu/_{\mathcal{R}}$. A $\lamu$-theory $\mathcal{R}$ is \emph{non-trivial} iff $\lamu/_{\mathcal{R}}\neq\set{*}$.
\end{enumerate}
\end{Definition}

It is clear that $=_{\lamu\rho}$ is a $\lamu$-theory. Now fix the equivalence $M \NFTeq N$ iff $\NFT{M}=\NFT{N}$.
Actually, $\NFTeq$ is a non-trivial $\lamu$-theory.
In fact, the contextuality follows immediately from the Theorem \ref{thm: Monotonicity}; the fact that it contains $=_{\lamu\rho}$ easily follows from confluence and Proposition \ref{prop:LiftAssSimulation}; and it is clearly non-trivial: $\lam x.x\not\NFTeq\emptyset\NFTeq(\lam x.xx)(\lam x.xx)=:\Omega$.

In  $\lam$-calculus, $\NFTeq$ is the $\lam$-theory equating B\"ohm trees. In particular, it is sensible (i.e. it equates all unsolvables). We will see (Corollary \ref{lamu-cor:sensible!}) that in our case it is still sensible.

\begin{Definition}\label{lamu-def:hnf-ToHaveAhnf}
 A $\lamu$-term $M$ is a \emph{head normal form}\index{Term!Head normal-!$\lamu$-} (\emph{hnf} for short) iff there are no $\rho$-redexes in its head (remember Lemma~\ref{lamu-lm:writingM}) and it has a head variable.
 We define the exact same notion for $\lamu^\mathrm{r}$.
\end{Definition}

\begin{Definition}\label{lamu-def:headReduction}
 The \emph{head reduction}\index{Reduction!Head- for $\lamu$-term} is the partial function $\mathrm{H}:\lamu\to\lamu$ obtained defining $\mathrm{H}(M)$ via the following algorithm:
 \begin{enumerate}
  \item
  $\rho$-reduce the leftmost $\rho$-redex in the head of $M$, if any
  \item otherwise, $\lamu$-reduce the head redex of $M$, if any
  \item otherwise, $\mathrm{H}(M)$ is not defined.
 \end{enumerate}
 $\mathrm{H}(M)$ is \emph{not} defined iff $M$ is a hnf.
 We say that \emph{head reduction starting on $M$ terminates} iff there is a (necessarily unique) $n\geq0$ s.t.\ $\mathrm{H}^n(M)$ is a hnf.
Here we mean as usual that $\mathrm{H}^0(M):=M$.

 We extend the same definitions to resource terms\index{Reduction!Head- for $\lamu$-resource-term}, and set $\mathrm{H}(t):=\emptyset$ whenever $t$ is a hnf.
Moreover, we set $\mathrm{H}^0(t):=t \in 2\langle \Lamr \rangle$ and, for $n\geq 0$:
 \[\mathrm{H}^{n+1}(t):=\sum\limits_{t_1\in \mathrm{H}(t)} \sum\limits_{t_2\in \mathrm{H}(t_1)} \cdots \sum\limits_{t_{n+1}\in \mathrm{H}(t_n)} t_{n+1} \, \, \, \in 2\langle \Lamr \rangle.\]
We have $\mathrm{H}^{1}(t)=\mathrm{H}(t)$ and $\mathrm{H}^{n+1}(t)=\sum\limits_{t' \in \mathrm{H}(t)} \mathrm{H}^{n}(t')$.
 %$H^{n+2}(t):=\set{s\in\lamu^\mathrm{r}\mid s\in H(u_1)\textit{ for }u_1\in H(u_2),\dots,u_{n+1}\in H(t)}$.
 %$s\in H^{n+2}(t)$ iff $s\in H(u_1)\textit{ for }u_1\in H(u_2),u_2\in H(u_3)\dots,u_{n+1}\in H(t)$.
\end{Definition}

\begin{Lemma}\label{lamu-lm:Reviewer}
 If $s$ only contains empty bags (if any) and $s\in\nf[\mathrm{r}]{t}$, then $s\in H^n(t)$ for some~$n\geq 0$.
\end{Lemma}
\begin{proof}[Proof sketch]
If $t$ is a hnf, $s\in\nf[\mathrm{r}]{t}$ entails that $t$ already contains only empty bags, as any eventual bag of $s$ is empty and reductions cannot erase non-empty bags;
but in a hnf the reduction can only take place \emph{inside} some bag, so it must be $s=t$ and we are done.
If $t$ is \emph{not} hnf, by confluence $t \to_{\mathrm{r}} \mathrm{H}(t) \msto[\mathrm{r}] \nf[\mathrm{r}]{t}\ni s$.
So there is a $t_1\in\mathrm{H}(t)$ s.t.\  $s\in\nf[\mathrm{r}]{t_1}$.
Now we reason as in the beginning: if $t_1$ is hnf we are done; if $t_1$ is not, we repeat the argument finding some $t_2$.
By the  well-foundedness  of $\SNm{.}$, we cannot repeat the argument forever and we must end on a hnf, so we conclude.
\end{proof}

Set $\mathrm{H}(\Te{M}):=\bigcup\limits_{t\in\Te{M}} \mathrm{H}(t)\subseteq \lam\mu^\mathrm{r}$.
The following lemma is easy using Definition~\ref{lamu-def:hnf-ToHaveAhnf} and Lemma~\ref{lm:TaylorBehavesSubst}.

\begin{Lemma}\label{lamu-lm:TaylorCommutesHead}
 If $M\in\lamu$ with $\mathrm{H}(M)$ defined, we have: \[\Te{\mathrm{H}(M)}=\mathrm{H}(\Te{M}).\]
\end{Lemma}

The following proposition shows that $\lamu$-calculus enjoys a notion of \emph{solvability} analogue to the one of $\lam$-calculus.

\begin{Proposition}\label{lamu-prop:solvability!}
 For $M\in\lamu$, the following are equivalent:
 \begin{enumerate}
  \item $M=_{\lamu\rho} H$ with $H$ hnf
  \item Head reduction starting on $M$ terminates
  \item $\NFT{M}\neq\emptyset$.
 \end{enumerate}
We call $M\in\lamu$ \emph{solvable}\index{Term!Solvable-!$\lamu$-} iff it satisfies any of the previous equivalent conditions. Otherwise, $M$ is called \emph{unsolvable}.
\end{Proposition}
\begin{proof}
(1$\Rightarrow$2).
  By confluence $M$ and $H$ have a common $\lamu\rho$-redex $M_0$. Since $H$ is a hnf, $M_0$ is too. Let $s_0$ be the unique resource $\lamu$-term in $\Te{M_0}$ s.t.\ all its bags (if any) are empty. This term clearly exists. Note that, by construction, $s_0$ is $\mathrm{r}$-normal. By repeatedly applying Proposition~\ref{prop:LiftAssSimulation} one can check that we obtain an $s\in\Te{M}$ s.t.\ $s_0\in\nf[\mathrm{r}]{s}$. Now, by Lemma \ref{lamu-lm:Reviewer}, $s_0\in\mathrm{H}^n(s)$ for some $n\geq 0$. By repeatedly applying Lemma \ref{lamu-lm:TaylorCommutesHead}, we find that $s_0\in\mathrm{H}^n(\Te{M})=\Te{\mathrm{H}^n(M)}$.
Finally, since $s_0$ is a hnf, so it must be $\mathrm{H}^n(M)$.

(2$\Rightarrow$3).
Easy.

(3$\Rightarrow$1).
If $\NFT{M}\neq 0$ there is $t\in\Te{M}$ s.t.\ $\nf[\mathrm{r}]{t} \neq 0$.
By Corollary~\ref{cor:Ass3+Prop+Cor}, $M\msto N$ for some $N\in\lamu$ s.t.\ $\nf[\mathrm{r}]{t}\subseteq\Te{N}$. So $\Te{N}$ contains at least a hnf, and thus $N$ must be a hnf too.\qedhere
\end{proof}

\begin{Corollary}\label{lamu-cor:sensible!}
 The $\lamu$-theory $\NFTeq$ is \emph{sensible} (that is, it equates all unsolvable terms).
\end{Corollary}

\section{Applying the approximation theory}

\subsection{Stability}

The Stability Property gives sufficient conditions for a context to commute with intersections in $\lamu/_{\NFTeq}$, i.e. (the intersections are defined below):
\[C\hole{\,\bigcap_{i_1}{N_{i_1}} \ ,\dots,\bigcap_{i_n}{N_{i_n}}}\NFTeq\bigcap_{i_1\dots,i_n} C\hole{N_{i_1},\dots,N_{i_n}}.\]
Given a non-empty subset $\cX\subseteq\lamu$, call its \emph{$\mathcal{T}$-infimum} the set $\bigcap\cX := \bigcap\limits_{M\in\cX} \NFT{M}\subseteq\lamu^\mathrm{r}$.
We say that $\cX$ is \emph{bounded} iff there exists an $L\in\lamu$ such that $M\leq L$ for all $M\in\cX$.
Write $M\NFTeq\bigcap\cX$ instead of $\NFT{M}=\bigcap\cX$.
Observe that (in case it exists) an $M$ s.t.\ $M\NFTeq\bigcap\cX$ need \emph{not} to be unique, so $\bigcap\cX$ does not identify a unique $\lam$-term.

\begin{theorem}[Stability]\label{lamu-th:TeStability}
Let $C$ be an $n$-ary $\lamu$-context and fix non empty bounded $\cX_1,\dots,\cX_n\subseteq\lamu^\mathrm{r}$.
For all $M_1,\dots,M_n\in\lamu$ s.t.\ $M_i\NFTeq\bigcap{\cX_i}$ ($i=1,\dots,n$) we have:
\[
C\hole{M_1,\dots,M_n}\NFTeq\bigcap_{\underset{N_n\in\cX_n}{\underset{\dots}{N_1\in\cX_1}}} C\hole{N_1,\dots,N_n}.
\]
\end{theorem}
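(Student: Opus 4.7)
The inclusion $\NFT{C\hole{M_1,\dots,M_n}} \subseteq \bigcap_{\vec{N}} \NFT{C\hole{N_1,\dots,N_n}}$ is immediate. Iterating Monotonicity (Theorem~\ref{thm: Monotonicity}) coordinate by coordinate shows that $C\hole{\cdot_1,\dots,\cdot_n}$ is monotone in each argument, and the hypothesis gives $M_i \leq N_i$ for every $N_i \in \cX_i$; hence $\NFT{C\hole{\vec{M}}} \subseteq \NFT{C\hole{\vec{N}}}$ for all $\vec{N}\in\cX_1\times\cdots\times\cX_n$, and intersecting over $\vec{N}$ yields the claim.

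For the reverse inclusion I would adapt the $\lam$-calculus argument of~\cite{DBLP:journals/pacmpl/BarbarossaM20}, whose engine is Injectivity (Theorem~\ref{thm:Injectivity}). Fix $s$ in the intersection on the right-hand side. For each $\vec{N}$ pick some $t^{\vec{N}} \in \Te{C\hole{\vec{N}}}$ with $s \in \nf[\mathrm{r}]{t^{\vec{N}}}$; Injectivity forces $t^{\vec{N}}$ to be uniquely determined by $\vec{N}$ and $s$. Using Lemma~\ref{lm:TaylorBehavesSubst} one can moreover verify structurally that every element of $\Te{C\hole{\vec{N}}}$ decomposes as a ``resource skeleton'' inherited from the shape of $C$, whose slots -- at each occurrence of $\xi_i$ in $C$ -- are filled either by a single resource term (at non-argument occurrences) or by a bag of resource terms (at argument occurrences), always taken from $\Te{N_i}$.

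The heart of the proof is a two-step construction of a witness $t^{\star} \in \Te{C\hole{\vec{M}}}$ with $s \in \nf[\mathrm{r}]{t^{\star}}$. First, one argues that the skeleton of $t^{\vec{N}}$ does not depend on $\vec{N}$, so that one can speak of a common skeleton $c$; this rigidity uses Injectivity together with the boundedness of each $\cX_i$. Second, one replaces the $\vec{N}$-dependent fillings by approximants in $\Te{M_i}$: for each slot of $c$, the normal form of the filling appearing in $t^{\vec{N}}$ is a piece of $s$ that, as $N_i$ ranges over $\cX_i$, must lie in $\NFT{N_i}$ for every $N_i$, hence in $\bigcap_{N_i \in \cX_i} \NFT{N_i} = \NFT{M_i}$; by the very definition of $\NFT{M_i}$ this piece is the normal form of some approximant already in $\Te{M_i}$. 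Plugging such approximants back into $c$ yields the desired $t^{\star}$.

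I expect the main obstacle to be the rigidity step: unlike in the $\lam$-calculus, a single $\mu^{\mathrm{r}}$-step can duplicate and nest bags arbitrarily deep inside named subterms (this is precisely the reason for the subtle measure $\ms{\cdot}$ of Definition~\ref{def:Measure} and for the intricate $\mu$-redex subcase of the proof of Theorem~\ref{thm:Injectivity}), so it is not obvious a priori that the shape of $t^{\vec{N}}$ is $\vec{N}$-invariant. Exploiting the $\mu$-case of Injectivity to force agreement of the two candidate skeletons -- after having suitably brought them into a comparable position via the upper bound provided by boundedness -- is the delicate point.
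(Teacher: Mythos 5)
Your proposal follows essentially the same route as the paper's proof (given in Appendix A.6): Monotonicity for the easy inclusion and, for the converse, a decomposition of each witness $t^{\vec N}$ into a rigid skeleton from $\Te{C}$ with fillings from $\Te{N_i}$, made $\vec N$-invariant via Injectivity and the bounds $L_i$, after which the fillings are replaced by approximants of the $M_i$. One step is glossed over, and it is precisely where the paper has to work: your claim that the normal form of each filling ``is a piece of $s$'', and hence lies in $\NFT{N_i}$ for \emph{every} $N_i\in\cX_i$ simultaneously, is not automatic --- a priori the reduction of $t^{\vec N}$ to $s$ could carve $s$ up differently for different $\vec N$, so each normalized filling is only known to lie in $\NFT{N_i}$ for its own $N_i$. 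The paper obtains the required invariance by \emph{two} applications of Theorem~\ref{thm:Injectivity}: a first one, after lifting the normalized fillings into $\Te{L_i}$ via boundedness, which fixes the skeleton and the bags of lifted fillings independently of $\vec N$; and a second one, after passing to reducts $L'_i$ of the bounds via Corollary~\ref{cor:Ass3+Prop+Cor} so that the \emph{normalized} fillings themselves lie in a common Taylor expansion $\Te{C\hole{L'_1,\dots,L'_n}}$, which fixes the bags of normal forms. Only then does intersecting over $N_i\in\cX_i$ land each filling in $\NFT{M_i}$. Your instinct that rigidity is the delicate point is correct, but the delicacy is discharged by this double use of Injectivity on a common Taylor expansion rather than by any new analysis of $\mu^{\mathrm{r}}$-reduction: the $\mu$-specific difficulties are already encapsulated in Theorem~\ref{thm:Injectivity} and in the rigid-context machinery of Lemma~\ref{lm:Te_of_contexts}, and the Stability argument itself is then carried out exactly as in the $\lam$-calculus case.
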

\begin{proof}
 Non-trivial, but exactly as done in \cite{DBLP:journals/pacmpl/BarbarossaM20} for $\lam$-calculus (using Theorem \ref{thm:Injectivity}).
\end{proof}

Using the usual encoding of booleans and pairs ($\prog{True}:=\lam x y.x$, $\prog{False}:=\lam x y.y$, $\langle M,N \rangle:= \lam z.zMN$) we have the non-implementability of the following parallel-or.

\begin{Corollary}\label{lamu-cor:Por}
There is \emph{no} $\prog{Por}\in\lamu$ s.t.\ for all $M,N\in\lamu$,
\[
\left\{
\begin{array}{rlll}
\prog{Por}\langle M,N \rangle & \NFTeq & \prog{True} & \mathrm{if \ }M\not\NFTeq\Omega\mathrm{ \ or \ }N\not\NFTeq\Omega \\
\prog{Por}\langle M,N \rangle & \NFTeq & \Omega & \mathrm{if \ }M\NFTeq N\NFTeq\Omega.
\end{array}
\right.
\]
\end{Corollary}
\begin{proof}
Otherwise, for the context $C:=\prog{Por}\,\xi$, by Theorem \ref{lamu-th:TeStability} we would have the contradiction:
$\prog{True} \NFTeq C\hole{\langle \prog{True},\Omega \rangle} \cap C\hole{\langle \Omega,\prog{True} \rangle}
 \NFTeq C\hole{\langle \prog{True},\Omega \rangle \cap \langle \Omega,\prog{True} \rangle} 
 \NFTeq C\hole{\langle \Omega,\Omega \rangle}
 \NFTeq \Omega$.
\end{proof}

\subsection{The perpendicular Lines Property}

The perpendicular lines Property (PLP for short) states that, fixed a term $\lam z_1\dots z_n.F\in\lamu$, if the function $\vec{M}\in\lamu^n/_{\NFTeq}\to(\lam\vec{z}.F)\vec{M}\in\lamu/_{\NFTeq}$ is constant on $n$ ``perpendicular lines'' (in the sense of the statement, Theorem~\ref{lamu-thm:PLL}), then it is constant everywhere.
Lemma \ref{lamu-lm:ClaimPLL} below is the crucial ingredient for the proof of PLP, and we use in it all the strong properties of resource approximation (linearity, SN and confluence).

\begin{Lemma}\label{lamu-lm:ClaimPLL} 
Fix $\vec{z}:=z_1,\dots z_n$ distinct variables and let $t\in\lamu^\mathrm{r}$. Suppose that:
\begin{enumerate}
	\item[i.] $\nf[\mathrm{r}]{t}\neq 0$
	\item[ii.] there is $F\in\lamu$ s.t.\ $t\in\Te{F}$
	\item[iii.] there are $\set{M_{ij}}_{1\leq i\neq j\leq n}\subseteq\lamu$ s.t.\ the function mapping $\vec{M}\in\lamu^n/_{\NFTeq}$ to $(\lam\vec{z}. F)\vec{M}\in\lamu/_{\NFTeq}$ is constant on the following ``perpendicular lines'' of $\lamu^n/_{\NFTeq}$:
	\begin{equation}\label{lamu-eq:Claim}
	\begin{array}{l}
	\mathit{l}_1= \set{(Z, \ \ M_{12}, \ \dots\dots, \ M_{1 n}) \mid Z\in\lamu}\\
	\mathit{l}_2= \set{(M_{21}, \ Z, \ \ \dots\dots, \ M_{2 n}) \mid Z\in\lamu}\\
	\phantom{ \ \ \ \ \ \ (\lam\vec{z}.F)M_{21} \dots} \ddots\ \!\\
	\mathit{l}_n= \set{(M_{n1}, \ \dots, \ M_{n (n-1)}, \ Z) \mid Z\in\lamu}.
	\end{array}
\end{equation}
\end{enumerate}
Then $\dg{z_1}{t}=\dots=\dg{z_n}{t}=0$.
\end{Lemma}
\begin{proof}
Induction on the size $\textbf{ms}(t)$ of $t\in\lamu^\mathrm{r}$.

- Case $\textbf{ms}(t)=(1,0,1)$.
Then $t$ is a variable (Corollary \ref{lamu-cor:MSMeasure}).
If $t=z_i$ for some $i$ then the $i$-th line of (\ref{lamu-eq:Claim}) gives the contradiction:
\[N_i\NFTeq (\lam\vec{z}.z_i)M_{i1}\cdots M_{i(i-1)}ZM_{i(i+1)}\cdots M_{in}=_\tau Z\] for all $Z\in\lamu$.
Hence, it must be $\dg{z_1}{t}=\dots=\dg{z_n}{t}~=~0$.

- Case $\textbf{ms}(t)>(1,0,1)$.
By $(i)$ there is $u\in\nf[\mathrm{r}]{t}$.
As $u$ is normal, it has shape:
$
 u=\vec{\lamu}\name{}{y[\vec{u}^{\,1}]\dots[\vec{u}^{\,m}]}
$
for some $m\geq 0$, some variable $y$, some normal bags $[\vec{u}^{\,j}]$, and where we have shorten, as before, a series $\lam\vec{x}_1\mu\alpha_1.\name{\beta_1}{\dots\lam\vec{x}_{k}\mu\alpha_{k}.\name{\beta_k}{\dots}}$ of $\lam$ and $\mu$ abstraction by just $\vec{\lamu}\name{}{\dots}$.
By $(ii)$ $t\in\Te{F}$, so that by Corollary~\ref{cor:Ass3+Prop+Cor} there is $Q\in\lamu$ s.t.\ $F\msto[] Q$ and $u\in\Te{Q}$.
So $Q$ must have shape:
$
 Q=\vec{\lamu}\name{}{yQ_1\cdots Q_m}
$
for some $Q_j$'s in $\lamu$ s.t.\ $[\vec{u}_j]\in\,\fmsets{\Te{Q_j}}$ for all $j=1,\dots ,m$.
Now there are two possibilities: either $y=z_i$ for some $i=1,\dots ,n$, either $y\neq z_i$ for all $i$.

Suppose $y=z_i$.
Then, for $\vec{q}:=q_1,\dots,q_m$ fresh variables, we can chose $Z:=\lam\vec{q}.\prog{True}\in\lamu$ (or $Z:=\prog{True}$ if $m=0$) in the $i$-th line $\mathit{l}_i$ of (\ref{lamu-eq:Claim}), and since by $(iii)$ $\lam\vec{z}.F$ is constant (mod $\NFTeq$) on $\mathit{l}_i$, we can compute its value as:
\[\small\begin{array}{rl}
 & (\lam\vec{z}.F)M_{i1}\cdots M_{i(i-1)}\,(\lam\vec{q}.\prog{True})\,M_{i(i+1)}\cdots M_{in} \\
 \NFTeq &
 Q\set{M_{i1}/z_1,\dots,(\lam\vec{q}.\prog{True})/z_i,\dots,M_{in}/z_n}
 \\
 \NFTeq &
 \vec{\lamu}\name{}{(\lam\vec{q}.\prog{True})\widetilde{Q_{i1}} \cdots \widetilde{Q_{im}}}
 \\
 \NFTeq &
 \vec{\lamu}\name{}{\prog{True}}
\end{array}\]
where we set $\widetilde{Q_{ij}}:=Q_j\set{M_{i1}/z_1,\dots,(\lam\vec{q}.\prog{True})/z_i,\dots,M_{in}/z_n}$. The first equality holds because $F\msto[] Q$ and $\NFTeq$ is finer than $=_{\lamu\rho}$, and the second equality holds because $y=z_i$.
In the same way, choosing $Z:=\lam\vec{q}.\prog{False}\in\lamu$ in $\mathit{l}_i$ we find that the value (mod $\NFTeq$) of $\lam\vec{z}.F$ on $\mathit{l}_i$ is $\vec{\lamu}\name{}{\prog{False}}$.
But this is impossible because $\prog{True}\not\NFTeq \prog{False}$.

Therefore, it must be $y\neq z_i$ for all $i$.
Note that \emph{w.l.o.g.} $m\geq 1$ (indeed if $m=0$, from the fact that $y\neq z_i$ for all $i$ we already get $\dg{z_i}{u}=0$ and, as $u\in\nf[\mathrm{r}]{t}$ and in $\lamu^\mathrm{r}$ one cannot erase non-empty bags, we are done).
Now fix $i\in\set{1,\dots,n}$ and $Z',Z''\in\lamu$.
Similarly as before, choosing $Z:=Z'$ in $\mathit{l}_i$ and using what we found so far, putting $Q'_{ij}:=Q_j\set{M_{i1}/z_1,\dots,Z'/z_i,\dots,M_{in}/z_n}$, since $\lam\vec{z}.F$ is constant (mod $\NFTeq$) on $\mathit{l}_i$, we can compute its value as:
\[\begin{array}{rl}
 & (\lam\vec{z}.F)M_{i1}\dots M_{i(i-1)}Z'M_{i(i+1)}\dots M_{in} \\
 \NFTeq &
 Q\set{M_{i1}/z_1,\dots,Z'/z_i,\dots,M_{in}/z_n} 
 \\
 \NFTeq &
 \vec{\lamu}\name{}{yQ'_{i1} \dots Q'_{im}}
\end{array}\]
where the last equality holds since $y$ is \emph{not} one of the $z_i$'s.
Choosing $Z''$ instead of $Z'$ and putting $Q''_{ij}$ the same as $Q'_{ij}$ but with $Z''$ instead of $Z'$, one has that the value (mod $\NFTeq$) of $\lam\vec{z}.F$ on $\mathit{l}_i$ is $\vec{\lamu}\name{}{yQ''_{i1} \dots Q''_{im}}$.
So we have $ \vec{\lamu}\name{}{yQ'_{i1} \dots Q'_{im}}= \vec{\lamu}\name{}{yQ''_{i1} \dots Q''_{im}}$, which easily entails:
$Q'_{i1}\NFTeq Q''_{i1},
 \,
 \dots,
 \, Q'_{im}\NFTeq Q''_{im}$.
But by construction we have: 
\[\begin{array}{c}
Q'_{ij}\NFTeq (\lam\vec{z}. Q_j)M_{i1}\dots M_{i(i-1)}Z'M_{i(i+1)}\dots M_{in}
\\
Q''_{ij}\NFTeq (\lam\vec{z}. Q_j)M_{i1}\dots M_{i(i-1)}Z''M_{i(i+1)}\dots M_{in}.
\end{array}\]
So if we remember that $Z',Z''$ were generic in $\lamu$, the previous equalities $Q'_{ij}=Q''_{ij}$ precisely say that $\lam\vec{z}. Q_j$ is constant on the line $\mathit{l}_i$.
And since this holds for all $i=1,\dots,n$, we have just found that $\lam\vec{z}.Q_j$ satisfies $(iii)$. And since we have equalities $Q'_{ij}=Q''_{ij}$ for all $j=1,\dots,m$, we have that each $\lam\vec{z}.Q_1,\dots,\lam\vec{z}.Q_k$ satisfies $(iii)$.
We can now comfortably apply the induction hypothesis on any $s\in [\vec{u}^{\,j}]$. In fact, as $[\vec{u}^{\,j}]$ is normal, $\nf[\mathrm{r}]{s}\neq 0$, i.e.\ $(i)$; as $[\vec{u}^{\,j}]\in\,\fmsets{\Te{Q_j}}$, we have $s\in\Te{Q_j}$, i.e.\ $(ii)$; and we just found that $\lam\vec{z}.Q_j$ satisfies $(iii)$; finally, $s$ is a strict subterm of $u\in\nf[\mathrm{r}]t$, thus (Corollary \ref{lamu-cor:MSMeasure}) $\textbf{ms}(s)<\textbf{ms}(u)\leq\textbf{ms}(t)$.
Therefore, the inductive hypothesis gives $\dg{z_1}{s}=\dots=\dg{z_n}{s}=0$. 
Since this is true for all $s$ in all $[\vec{u}^{\,j}]$, $j=1,\dots,m$, we get $\dg{z_1}{u}=\dots=\dg{z_n}{u}=0$.
And now $u\in\nf[\mathrm{r}]{t}$ entails $\dg{z_1}{t}=\dots=\dg{z_n}{t}=0$.\qedhere
\end{proof}

\begin{theorem}[Perpendicular Lines Property\index{Property!Perpendicular Lines-!-for $\lamu/_{\NFTeq}$}]\label{lamu-thm:PLL}
Suppose that for some fixed $\set{M_{ij}}_{1\leq i\neq j\leq n}$, $\set{N_{i}}_{1\leq i\leq n}\subseteq\lamu$, the system of equations:
\[
	\left\{\begin{array}{llcr}
	(\lam z_1\dots z_n.F) \ \ Z \  M_{12} \ \dots\dots \ M_{1 n} & =_\tau & N_{1} \\
	(\lam z_1\dots z_n.F) \ M_{21} \ Z \ \ \dots\dots \ M_{2 n} & =_\tau & N_{2} \\
	\phantom{(\lam z_1\dots z_n.F)M_{21} \ \ \ \dots} \ \ddots&\ \vdots& \!\\
	(\lam z_1\dots z_n.F) \ M_{n1} \ \dots \ M_{n (n-1)} \ Z & =_\tau & N_{n}
	\end{array}\right.
\]
holds for all $Z\in\lamu$.
Then: \[(\lam z_1\dots z_n.F)Z_1\dots Z_n =_\tau N_{1}\]
%=_\tau \cdots =_\tau\!  N_{n}
for all $Z_1,\dots,Z_n\in\lamu$.
\end{theorem}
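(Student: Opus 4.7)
The plan is to apply Lemma~\ref{lamu-lm:ClaimPLL} to every Taylor approximant of $F$ that survives normalisation, so as to conclude that $\NFT{(\lam\vec{z}.F)Z_1\dots Z_n}$ does not depend on $Z_1,\dots,Z_n$; the first equation of the hypothesis will then pin this common value down to~$\NFT{N_1}$.

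First I would note that the system of equations in the statement is exactly condition~$(iii)$ of Lemma~\ref{lamu-lm:ClaimPLL} (reading $=_\tau$ as $\NFTeq$) with the very same $F$, $M_{ij}$ and $N_i$. Hence, for every $t\in\Te{F}$ with $\nf[\mathrm{r}]{t}\neq 0$, the lemma yields $\dg{z_1}{t}=\dots=\dg{z_n}{t}=0$. Next I would compute $\NFT{(\lam\vec{z}.F)Z_1\dots Z_n}$ directly: any approximant has the form $s=(\lam z_1\dots z_n.t)[\vec{u}^{\,1}]\dots[\vec{u}^{\,n}]$ with $t\in\Te{F}$ and $[\vec{u}^{\,i}]\in\fmsets{\Te{Z_i}}$, and performing the $n$ head $\lam^{\mathrm{r}}$-reductions confluence yields $\nf[\mathrm{r}]{s}=\nf[\mathrm{r}]{t\langle[\vec{u}^{\,1}]/z_1\rangle\dots\langle[\vec{u}^{\,n}]/z_n\rangle}$. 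Since reductions lift through linear substitution, $\nf[\mathrm{r}]{t}=0$ forces $\nf[\mathrm{r}]{s}=0$; so $\nf[\mathrm{r}]{s}\neq 0$ entails $\nf[\mathrm{r}]{t}\neq 0$, whence $\dg{z_i}{t}=0$ by the previous step, whence $[\vec{u}^{\,i}]=1$ for every $i$ (otherwise the linear substitution is itself already $0$), and finally $\nf[\mathrm{r}]{s}=\nf[\mathrm{r}]{t}$. Conversely, every $t\in\Te{F}$ with $\nf[\mathrm{r}]{t}\neq 0$ arises this way by taking all bags empty; therefore
\[
\NFT{(\lam\vec{z}.F)Z_1\dots Z_n}\;=\;\bigcup_{t\in\Te{F}}\nf[\mathrm{r}]{t}\;=\;\NFT{F},
\]
independently of $Z_1,\dots,Z_n\in\lamu$.

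To conclude, the first equation of the hypothesis with $Z:=Z_1$ reads $(\lam\vec{z}.F)Z_1 M_{12}\dots M_{1n}\NFTeq N_1$. Combining with the displayed equality above applied to $(Z_1,M_{12},\dots,M_{1n})$ yields $\NFT{F}=\NFT{N_1}$, and applying it once more to $(Z_1,\dots,Z_n)$ gives $(\lam\vec{z}.F)Z_1\dots Z_n\NFTeq N_1$, as desired.

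The main obstacle is the innocuous-looking step ``reductions lift through linear substitution''. Its quantitative analogue is essentially Lemma~\ref{lamu-lm:ForLocalConfluence}$(3)$; in the qualitative setting one can either transfer it \emph{via} the bridge developed in Section~\ref{lamu-sec:from+tor}, or prove it directly by a short induction on the reduction using the contextuality of $\to_{\mathrm{r}}$.
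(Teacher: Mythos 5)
Your proposal is correct and takes essentially the same route as the paper, whose proof of Theorem~\ref{lamu-thm:PLL} simply derives it from Lemma~\ref{lamu-lm:ClaimPLL} following the cited $\lam$-calculus argument: the lemma forces $\dg{z_i}{t}=0$ for every normalising $t\in\Te{F}$, whence $\NFT{(\lam\vec{z}.F)Z_1\dots Z_n}=\NFT{F}$ independently of the $Z_i$, and the first equation pins this down to $\NFT{N_1}$. You also correctly isolate the only delicate point (that $\nf[\mathrm{r}]{t}=0$ entails $\nf[\mathrm{r}]{t\langle[\vec{u}]/z\rangle}=0$ in the qualitative setting) and the right way to discharge it, namely via the quantitative bridge of Section~\ref{lamu-sec:from+tor}.
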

\begin{proof}
It follows from Lemma \ref{lamu-lm:ClaimPLL} as done in \cite{DBLP:journals/pacmpl/BarbarossaM20}.
\end{proof}

\begin{Corollary}\label{lamu-cor:Por'}
 There is \emph{no} $\prog{Por}'\in\lamu$ s.t.\ for all $Z\in\lamu$ one has at the same time $\prog{Por}'\,\prog{True}\,Z \NFTeq\prog{True}$,
 $\prog{Por}'\,Z\,\prog{True} \NFTeq\prog{True}$,
 $\prog{Por}'\,\prog{False}\,\prog{False} \NFTeq\prog{False}$.
\end{Corollary}

The non existence of parallelism in $\lamu$-calculus is known as folklore via arguments involving stable models: here we proved it solely via Taylor expansion.

\section{Conclusions and Future Works}
In \cite{Laurent04anote} Laurent studies the mathematics of (untyped) $\lamu$-calculus via its denotational semantics;
this paper does it by developing a theory of program approximation
based on Linear Logic resources.
In particular, we proved that the approximation theory satisfies strong normalisation and confluence (non-trivial results in this setting), the Monotonicity Property, the Non-Interference Property, that it induces a sensible $\lam\mu$-theory, and that it can be used as a tool in order to obtain the Stability Property and the Perpendicular Lines Property, and thus the impossibility of parallel computations in the language.
A first natural question immediately arises:

1- Does Taylor expansion allow to find interesting properties \emph{not} satisfied by $\lambda$-calculus, but that are enjoyed by the $\lamu$-calculus due to the presence of \prog{callcc}?

For future investigations, we believe that it would be interesting to integrate this approach with the \emph{differential} extension of $\lamu$-calculus defined in~\cite{DBLP:journals/tcs/Vaux07}, via the mentioned translation $\de{(\cdot)}$, in order to explore quantitative properties.

The following two questions are maybe the most significant ones:

2- Does it makes sense to introduce B\"ohm trees for the $\lamu$-calculus? 
For instance, for the call-by-value $\lam$-calculus, the Taylor expansion has provided in \cite{DBLP:journals/corr/abs-1809-02659} invaluable guidance for finding a meaningful notion of trees satisfying Ehrhard and Regnier's \emph{commutation formula}; the same methodology could maybe be applied here.
However, in \cite{DBLP:journals/jsyml/DavidP01} it is shown that $\lamu$-calculus does \emph{not} enjoy B\"ohm's separation property. 
David and Py's counterexample could hence be an indication that, instead, B\"ohm trees are not a ``good'' notion for $\lamu$-calculus.
%, since David and Py's counterexample indicates that equating B\"ohm trees would probably not produce a sensible theory.
The best way of proceeding would be, in that case, to consider the natural extension of $\lamu$-calculus given by Saurin's $\Lam\mu$-calculus \cite{SAURIN2012106}.
It was introduced precisely to satisfy B\"ohm's property and, as a matter of fact, in~\cite{SAURIN2012106} Saurin proposes a definition of B\"ohm trees for his $\Lambda\mu$-calculus.

3- Does $\Lam\mu$-calculus enjoys the same approximation theory developed in this paper? 
On one hand, many constructions we did in this chapter seem possible also in Saurin's calculus, on the other hand we used the fact that the number of $\mu$'s in a term is the same as the named subterms, e.g.\, in Remark~\ref{rm:depthDegree}.
In general, one could wonder which one, between $\lamu$ and $\Lam\mu$, should be the ``canonical lambda-mu-calculus'': from a proof-theoretical perspective $\lamu$-calculus precisely corresponds to Parigot's CD-derivations, but $\Lam\mu$-calculus satisfies more desirable properties (B\"ohm separation).
Moreover, in~\cite{DBLP:conf/flops/Saurin10}, Saurin adapts usual techniques of $\lam$-calculus to $\Lam\mu$-calculus: he studies the notion of solvability, proves a standardization theorem and studies more in detail the notion of B\"ohm trees.
A very interesting future direction of research would be, hence, to develop our theory of resource approximation for Saurin's calculus, and study its relation with his theory of B\"ohm approximation.
In any case, we look at the fact that the Taylor expansion works so nicely in $\lamu$-calculus -- and this regardless of a notion of B\"ohm trees -- as a \emph{a posteriori} confirmation of the high power of this form of approximation.

There are at least three other interesting points in relation with strictly related areas:

3- The $\lamu$-calculus can be translated in the $\lam$-calculus via the CPS-translations (see e.g. de Groote's one in~\cite{DBLP:conf/caap/Groote94}).
What does our theory of approximation correspond to under this translation?

4- In order to perform a deeper logical analysis, one should consider translations into Linear Logic. 
It is known from \cite{DBLP:journals/tcs/Laurent03} that $\lamu$-calculus translates into polarized proof nets.
Taylor expansion for proof-nets is possible, but the construction can be complex: in fact one of the interests in \emph{directly} defining a Taylor expansion for a certain ``$\lam$-calculus style'' programming language (as we did for $\lamu$-calculus, and as one does for $\lam$-calculus) is precisely to avoid that complexity. 
In our case we have just shown that, at the end of the day, the theory of resource approximation for $\lamu$-calculus can be developed with essentially the same methodology as in $\lam$-calculus. This leads to asking what makes a Taylor expansion ``easy'', and should be considered in relation to the already mentioned possibility of the existence of a coherence relation for which $\Te{M}$ is a clique. 
This motivates an investigation of the complexity of the definition of a Taylor expansion of a programming language/proof system, which may be related to the notion of connectedness of proof-nets, whose study starts in~\cite{DBLP:conf/rta/GuerrieriPF16}. 
Such question should be considered in relation with the so-called problem of the ``inversion of Taylor expansion''~\cite{DBLP:conf/wollic/GuerrieriPF19,DBLP:conf/csl/GuerrieriPF20} and the problem of ``injectivity'' of denotational models (in particular, the relational one) for Linear Logic.

5- The $\lamu$-calculus is not the only way of extending the Curry-Howard correspondence to classical logic. 
Another notable one is the already mentioned \emph{Krivine's classical realizability}, which is a ``machine to extract computational content from proofs + axioms'' (for almost all mathematics, such as the one formalizable in ZF+AC, see \cite{krivine2020program}).
There are translations between $\lamu$-calculus and Krivine's calculus, and \emph{vice-versa}.
What does our work say about Krivine's realisability?

%% The acknowledgments section is defined using the "acks" environment
%% (and NOT an unnumbered section). This ensures the proper
%% identification of the section in the article metadata, and the
%% consistent spelling of the heading.
\begin{acks}
Great help was given by Giulio Manzonetto and Lorenzo Tortora de Falco;
crucial remarks were given by the anonymous reviewers;
we had useful discussions with Olivier Laurent, Lionel Vaux and the DIAPASoN project.
Thanks to all of them, as well as to Federico Olimpieri, who first questioned us about Taylor expansion for $\lamu$-calculus.
\end{acks}

%%
%% The next two lines define the bibliography style to be used, and
%% the bibliography file.
\bibliographystyle{ACM-Reference-Format}
\bibliography{MyBibTeX}

\newpage
\phantom{altra pagina}
\newpage
%%
%% If your work has an appendix, this is the place to put it.
%---------------------------------------------------------- APPENDIX ----------------------------------------------------------
\appendix

\section{TECHNICAL APPENDIX}

Below, the technical appendix, where we report all the proofs not given in the main paper.
The appendix is organised in five sections (A.1 to A.6).
At the beginning of each of them, we indicate which are the results of the main paper that we are going to prove in it.
Remark that at the very end (after the Appendix A.6) there are some figures to which we refer during the appendices.

\subsection{APPENDIX - SECTION 2.1}

We give proofs of Proposition~\ref{lamu-prop:LamNormalizes!}, Proposition~\ref{lamu-prop:RhoNormalizes!} and Corollary~\ref{lamu-cor:SN}.

\begin{proof}[Proof of Proposition~\ref{lamu-prop:LamNormalizes!}]
 If $t\to_{\lam^\mathrm{r}} t'+\Sum T$ then $t=c\hole{(\lam x.s)b_0}$ and $t'=c\hole{h}$ with $h=s\set{u_{\sigma(1)}/x^{(1)},\dots,u_{\sigma(n)}/x^{(n)}}\in s\langle b_0/x\rangle$, for $c$ a single-hole resource context, $\sigma$ a permutation and $b_0=[u_1,\dots,u_n]$.
 Call $k:=\dg{\mu}{t}=\dg{\mu}{t'}$, and let us use the same notation $B_{t}^c$ as in the proof of Proposition~\ref{lamu-prop:MuNormalizes!}.
 Then we have that $\ms{t'} $ is the bag:
 \[ 
  k 
  - 
  B_{t'}^c
  *
  B_{t'}^s
  *
  [\,d_{t'}(b) \mid b \textit{ in some }u_i\textit{ in }b_0\,]
 \]
 and $\ms{t} $ is the bag:
 \[
  \ms{t} 
  = 
  k 
  - 
  B_{t}^c
  *
  B_{t}^s
  *
  [\,d_{t}(b) \mid b \textit{ in some }u_i\textit{ in }b_0\,]
  *
  [\,d_{t}(b_0)].
 \]
 Now, it is easily understood that if $b$ occurs in $c$, or $b$ occurs in $s$, then $d_{t'}(b) = d_{t}(b)$. 
 Furthermore, for all $b$ occurrence of bag in some $u_{\sigma(i)}$ belonging to $b_0$, one has: 
 \[
  d_{t'}(b)=d_{c}(\xi)+d_{s}(x^{(i)})+d_{u_{\sigma(i)}}(b)\geq d_{c}(\xi)+d_{u_{\sigma(i)}}(b)=d_{t}(b).
 \]
 Thus:
 $
 \ms{t} \geq \ms{t'}*[\,k-d_t(b_0)\,]>\ms{t'}.
 $
\end{proof}

\vspace{0.3cm}

\begin{proof}[Proof of Proposition~\ref{lamu-prop:RhoNormalizes!}]
 If $t\to_{\rho^\mathrm{r}} t'+\Sum T$ then $t=c\hole{h}$ and $t'=c\hole{h'}$, with $h=\mu\gamma.\name{\alpha}{\mu\beta.\name{\eta}{s}}$ and $h'=\mu\gamma.\name{\eta}{s}\set{\alpha/\beta}$ and $c$ a single-hole resource context.
 Therefore:
 \[
  \ms{t'}
  =
  [\,\dg{\mu}{t'}-d_{t'}(b) \mid b \textit{ in }c\,]
  *
  [\,\dg{\mu}{t'}-d_{t'}(b) \mid b \textit{ in }s\,].
 \]
 \[
  \ms{t}
  =
  [\,\dg{\mu}{t}-d_{t}(b) \mid b \textit{ in }c]
  *
  [\,\dg{\mu}{t}-d_{t}(b) \mid b \textit{ in }s].
 \]
 First, remark that $\dg{\mu}{t'}=1+\dg{\mu}{c}+\dg{\mu}{s}$ and $\dg{\mu}{t}=2+\dg{\mu}{c}+\dg{\mu}{s}$.
 
 Also, notice as usual that if $b$ occurs in $c$ then $d_{t'}(b) = d_{c}(b) = d_{t}(b)$.
 Putting these things together we have that, if $c$ \emph{contains at least one bag}:
 \[
  [\,\dg{\mu}{t'}-d_{t'}(b) \mid b \textit{ in }c\,]
  =
  (1+\dg{\mu}{s})+\ms{c}\]
 \[
  <
  (2+\dg{\mu}{s})+\ms{c}
  =
  [\,\dg{\mu}{t}-d_{t}(b) \mid b \textit{ in }c\,].
 \]
 On the other hand, if $c$ does \emph{not} contain any bag, the previous multisets are both empty, thus equal.
 
 Now let's see what happens if $b$ occurs in $s$.
 We have: $d_{t'}(b)=1+d_c(\xi)+d_s(b)$ and $d_{t}(b)=2+d_c(\xi)+d_s(b)$.
 Therefore:
 \[
  [\,\dg{\mu}{t'}-d_{t'}(b) \mid b \textit{ in }s\,]
  =
  (\dg{\mu}{c}-d_c(\xi))+\ms{s}\]
 \[
  =
  [\,\dg{\mu}{t}-d_{t}(b) \mid b \textit{ in }s\,].
 \]
 These facts immediately imply that $\ms{t}\geq\ms{t'}$.
\end{proof}

\vspace{0.3cm}

\begin{proof}[Proof of Corollary~\ref{lamu-cor:SN}]
 The only case in which $\ms{\cdot}$ may remain constant is along a $\rho^{\mathrm{r}}$-reduction, but in this case $\dg{\mu}{t}$ strictly decreases, so $t\to_{\mathrm{r}} t'+\Sum T$ entails $\SNm{t} > \SNm{t'}$.
Now one concludes as usual: if we associate the multiset $[\SNm{t} \stt t\in\Sum T]$ with any sum $\Sum T$ (in particular, a single element sum), it is immediate to see that if $\Sum T\to_\mathrm{r} \Sum S$, then the multiset associated with $\Sum T$ is strictly smaller, in the multiset order, than the one associated with $\Sum S$ (the empty multiset $1$ being associated with the empty sum $0$).
Therefore the well-foundeness of the multiset order (since the order on $\SNm{\cdot}$ is well-founded) gives the non-existence of infinite (non-trivial) reductions.
\end{proof}

\subsection{APPENDIX - SECTION 2.2}

We give complete proofs of Lemma~\ref{lamu-lm:ForLocalConfluence} and Proposition~\ref{lamu-prop:lamu^+Confl}.

In order to prove Lemma~\ref{lamu-lm:ForLocalConfluence}, we need a number of technical lemmas handling the interaction of two successive substitutions.
The proofs of these lemmas are tedious and long inductions and we do not report them here.
In the following, we will use Notation~\ref{notation:delta}.

\begin{Lemma}\label{Applamu-lm:nameCommutes}
Let $t\in\lamu^\mathrm{r}$ and $\alpha,\beta,\gamma,\eta$ names.

\noindent If $\alpha\neq\eta$, $\beta\neq\eta$ and $\beta\neq\gamma$, then $t\set{\alpha/\beta}\set{\gamma/\eta}=t\set{\gamma/\eta}\set{\alpha/\beta}$.
\end{Lemma}

The following lemma says how a renaming behaves with respect to the linear substitution and the linear named application.

\begin{Lemma}\label{Applamu-lm:outNames}
 Let $t\in\lamu^\mathrm{r}$ and $[\vec{u}]$ a bag. 
 Then:
\begin{enumerate}
 \item
  $
   t\langle[\vec{u}]/x\rangle^+ \set{\alpha/\beta}
   =
   t\set{\alpha/\beta}\langle[\vec{u}\set{\alpha/\beta}]/x\rangle^+.
  $
 \item If $\alpha\neq\gamma\neq\beta$ then:

  $
   (\,\coefflnamedapp{t}{\gamma}{[\vec{u}]}\,) \set{\alpha/\beta}
   =
   \coefflnamedapp{t\set{\alpha/\beta}}{\gamma}{[\vec{u}\set{\alpha/\beta}]}.
  $
 \item If $\alpha\neq\gamma\neq\beta$ then:

  $
   (\,\coefflnamedapp{\name{\eta}{t}}{\gamma}{[\vec{u}]}\,)\set{\alpha/\beta}
   =
   \coefflnamedapp{\name{\eta}{t}\set{\alpha/\beta}}{\gamma}{[\vec{u}\set{\alpha/\beta}]}.
  $
\end{enumerate}
\end{Lemma}

The next lemma says how two linear substitutions operate when applied consecutively.

\begin{Lemma}\label{Applamu-lm:1}.
 Let $t\in\lamu^{\mathrm{r}}$, $[\vec{v}]=:[v_1,\dots,v_n]$ and $[\vec{u}]$ bags and $y\neq x$ variables with $y$ \emph{not} occurring in $[\vec{u}]$. Then the sum $t\lsubst{y}{\vec{v}}\lsubst{x}{\vec{u}}$ is:
 \[\begin{small}
  \sum\limits_{W}
  t\lsubst{x}{\vec{w}^{\,0}}
  \lsubst{y}{v_1\lsubst{x}{\vec{w}^{\,1}},\dots,v_n\lsubst{x}{\vec{w}^{\,n}}}
 \end{small}\]
 with $W:(\vec{u})\longto\set{0,\dots,n}$.
\end{Lemma}

The condition $x\neq y$ in the previous lemma cannot be suppressed: for instance for $x=y$, $t=x$, $[\vec{v}]=1$ and $[\vec{u}]=[z]$, with $z\neq y$, the left hand-side of the equality becomes $x\langle 1/x\rangle^+ \langle [z]/x\rangle^+=0$ while the right hand-side becomes $x \langle [z]/x\rangle^+ \langle 1/x\rangle^+ =z$.

The next lemma says how a linear substitution operates on linear named application.

\begin{Lemma}\label{Applamu-lm:2}.
 Let $t\in\lamu^{\mathrm{r}}$, $[\vec{v}]=:[v_1,\dots,v_n]$ and $[\vec{u}]$ bags and $\alpha$ a name with\footnote{By ``$\dg{\beta}{[\vec{u}]}=0$'' we mean that $\dg{\beta}{u}=0$ for all $u\in [\vec{u}]$.} $\dg{\alpha}{[\vec{u}]}=0$. Then the sum $(\,\coefflnamedapp{t}{\alpha}{[\vec{v}]}\,)\,\lsubst{x}{\vec{u}}$ is:
 \[
  \sum\limits_{W}
  \coefflnamedapp{t\lsubst{x}{\vec{w}^{\,0}}}{\alpha}
  {[
  v_1\lsubst{x}{\vec{w}^{\,1}},
  \dots,
  v_n\lsubst{x}{\vec{w}^{\,n}}
  ]}
 \]
 with $W:(\vec{u})\longto\set{0,\dots,n}$.
\end{Lemma}

The following two remarks are easy:

\begin{Remark}\label{Applamu-rmk:inNextLemma}
 Let $\alpha$ be a name, $t\in\lamur$ and $[\vec{u}],[\vec{v}]$ bags. 
 If $\dg{\alpha}{[\vec{v}]}=0$ then one has: \[\coefflnamedapp{t[\vec{v}]}{\alpha}{[\vec{u}]}=(\coefflnamedapp{t}{\alpha}{[\vec{u}]}\,)\,[\vec{v}].\]
\end{Remark}

\begin{Remark}\label{Applamu-rmk:inNextLemma2}
 Let $\alpha\neq\beta$ be names, $t\in\lamur$ and $[\vec{v}]=:[v_1,\dots,v_n],[\vec{u}]$ bags. 
 If $\dg{\beta}{t}=0$ then one has: \[\coefflnamedapp{\coefflnamedapp{t}{\alpha}{[\vec{v}]}}{\beta}{[\vec{u}]}
 =
 \sum\limits_{W}
 \coefflnamedapp{t}{\alpha}{[\coefflnamedapp{v_1}{\beta}{[\vec{w}^{\,1}]},\dots,\coefflnamedapp{v_n}{\beta}{[\vec{w}^{\,n}]}]}.\]
with $W:(\vec{u})\longto\set{1,\dots,n}$.
\end{Remark}

The next lemma says in which sense and when, in some cases, one can swap the order of two linear applications.

\begin{Lemma}\label{Applamu-lm:commutelnamedapp}
 Let $t\in\lamur$, $[\vec{u}],[\vec{v}]$ bags and $\alpha\neq\beta$ names.
 \begin{enumerate}
  \item If $\dg{\alpha}{[\vec{v}]}=0$ and $\dg{\beta}{[\vec{u}]}=0$, then:
  \[\coefflnamedapp{\coefflnamedapp{t}{\alpha}{[\vec{u}]}}{\beta}{[\vec{v}]}=\coefflnamedapp{\coefflnamedapp{t}{\beta}{[\vec{v}]}}{\alpha}{[\vec{u}]}.\]
  \item If $\dg{\alpha}{[\vec{v}]}=0$ then, taking $\delta$ a fresh name, one has:
  \[\coefflnamedapp{\coefflnamedapp{t}{\alpha}{[\vec{u}]}}{\beta}{[\vec{v}]}
  =
  \sum\limits_{W}
  \coefflnamedapp{
    \coefflnamedapp{
      \coefflnamedapp{t}{\beta}{[\vec{w}^{\,1}]}
    }{\alpha}{[\vec{u}\set{\delta/\beta}]}
  }{\delta}{[\vec{w}^{\,2}]}
  \,\,\set{\beta/\delta}\]
with $W:(\vec{v})\longto\set{1,2}$.
  \item If $\dg{\beta}{[\vec{u}]}=0$ then, taking $\delta$ a fresh name, one has:
  \[
   \coefflnamedapp{\coefflnamedapp{t}{\alpha}{[\vec{u}]}}{\beta}{[\vec{v}]}
   =
   \coefflnamedapp{\coefflnamedapp{t}{\beta}{[\vec{v}\set{\delta/\alpha}]}}{\alpha}{[\vec{u}]}\,\,\set{\alpha/\delta}.
  \]
 \end{enumerate}
\end{Lemma}

The next lemma says how a linear named applications on a name operates on a renaming involving the same name.

\begin{Lemma}\label{Applamu-lm:3}
Let $t\in\lamu^{\mathrm{r}}$, $[\vec{u}]$ a bag and $\alpha\neq\beta$ names with $\dg{\beta}{[\vec{u}]}=0$. Then:
 \[
  \coefflnamedapp{t\set{\alpha/\beta}}{\alpha}{[\vec{u}]}
  =
  \sum\limits_{W}
  \coefflnamedapp{
  \coefflnamedapp{t}{\alpha}{[\vec{w}^{\,1}]}
  }{\beta}{[\vec{w}^{\,2}]}
  \,\set{\alpha/\beta}
 \]
with $W:(\vec{u})\longto\set{1,2}$.
\end{Lemma}

The condition $\alpha\neq\beta$ in the previous lemma cannot be suppressed: for instance, for $t=\mu\gamma.\name{\alpha}{x}$ and $[\vec{u}]=1$, the left hand-side of the equality becomes $\mu\gamma.\name{\alpha}{x1}$ while the right hand-side becomes:
\[
	\sum\limits_{W:()\longto\set{1,2}}\mu\gamma.\name{\alpha}{x\,[\vec{w}^{\,1}]\,[\vec{w}^{\,2}]}=\mu\gamma.\name{\alpha}{x\,1\,1}.
\]
Also the condition $\dg{\beta}{[\vec{u}]}=0$ cannot be suppressed: for instance, for $t=\mu\gamma.\name{\alpha}{x}$ and $[\vec{u}]=[\mu\gamma'.\name{\beta}{y}]$, the left hand-side becomes $\mu\gamma.\name{\alpha}{x[\mu\gamma'.\name{\beta}{y}]}$ while the right hand-side becomes $\mu\gamma.\name{\alpha}{x[\mu\gamma'.\name{\beta}{y1}]}$.

The following lemma says how a linear named application operates on a linear substitution.

\begin{Lemma}\label{Applamu-lm:4}
Let $t\in\lamur$, $[\vec{v}]=:[v_1,\dots,v_n],[\vec{u}]$ bags, $x$ a variable and $\alpha$ a name s.t.\ $\dg{x}{[\vec{u}]}=0$.
Then $\coefflnamedapp{
 t\lsubst{x}{\vec{v}}
 }{\alpha}{[\vec{u}]}$ is:
 \[
 \sum\limits_{W}
 (\coefflnamedapp{t}{\alpha}{[\vec{w}^{\,0}]})\lsubst{x}{\coefflnamedapp{v_1}{\alpha}{[\vec{w}^{\,1}]},\dots,\coefflnamedapp{v_n}{\alpha}{[\vec{w}^{\,n}]}}
 \]
 with $W:(\vec{u})\longto\set{0,\dots,n}$.
\end{Lemma}

The condition $\dg{x}{[\vec{u}]}=0$ in the previous lemma cannot be suppressed: for instance, for $t=\mu\gamma.\name{\alpha}{y}$ (with $y\neq x$), $[\vec{v}]=1$ and $[\vec{u}]=[x]$, the left hand-side becomes $\mu\gamma.\name{\alpha}{y[x]}$ while the right hand-side becomes $0$.

The next lemma says how two linear named application operate consecutively.

\begin{Lemma}\label{Applamu-lm:5}
Let $t\in\lamur$, $\alpha\neq\gamma$ names and $[\vec{v}]=:[v_1,\dots,v_n]$,\\$[\vec{u}]$ bags s.t.\ $\dg{\gamma}{[\vec{u}]}=0$.
Then:
 \[
 \coefflnamedapp{
 \coefflnamedapp{
 t}{\gamma}{[\vec{v}]}
 }{\alpha}{[\vec{u}]}
 =
 \sum\limits_{W}
 \coefflnamedapp{
 \coefflnamedapp{t}{\alpha}{[\vec{w}^{\,0}]}}
 {\gamma}
 {[\coefflnamedapp{v_1}{\alpha}{[\vec{w}^{\,1}]},\dots,\coefflnamedapp{v_n}{\alpha}{[\vec{w}^{\,n}]}]}
 \]
 with $W:(\vec{u})\longto\set{0,\dots,n}$.
 
 Furthermore, the same holds also for named terms. That is, under the same hypothesis and for $\eta$ a name, we have\footnote{The following is just an equality between sets of words -- the named terms.} that the sum $\coefflnamedapp{
 \coefflnamedapp{
 \name{\eta}{t}}{\gamma}{[\vec{v}]}
 }{\alpha}{[\vec{u}]}$ is:
 \[
 \sum\limits_{W}
 \coefflnamedapp{
 \coefflnamedapp{\name{\eta}{t}}{\alpha}{[\vec{w}^{\,0}]}}
 {\gamma}
 {[\coefflnamedapp{v_1}{\alpha}{[\vec{w}^{\,1}]},\dots,\coefflnamedapp{v_n}{\alpha}{[\vec{w}^{\,n}]}]}
 \]
 with $W:(\vec{u})\longto\set{0,\dots,n}$.
\end{Lemma}

The condition $\alpha\neq\gamma$ in the previous lemma cannot be suppressed: for instance, for $t=\mu\delta.\name{\alpha}{x}$, $[\vec{u}]=[\mu\delta.\name{\delta}{x}]$ and $[\vec{v}]=[\mu\delta.\name{\delta}{y}]$, the left hand-side becomes\\$\mu\delta.\name{\alpha}{x[\mu\delta.\name{\delta}{y}][\mu\delta.\name{\delta}{x}]}$ while the right hand-side becomes $\mu\delta.\name{\alpha}{x[\mu\delta.\name{\delta}{x}][\mu\delta.\name{\delta}{y}]}$.

Also the condition $\dg{\gamma}{[\vec{u}]}=0$ in the previous lemma cannot be suppressed: for instance, for $t=\mu\delta.\name{\alpha}{x}$, $[\vec{u}]=[\mu\delta.\name{\gamma}{x}]$ and $[\vec{v}]=[\mu\delta.\name{\delta}{x}]$, the left hand-side becomes $0$ while the right hand-side becomes $\mu\delta.\name{\alpha}{x[\mu\delta.\name{\gamma}{x[\mu\delta.\name{\delta}{x}]}]}$.

\begin{Remark}\label{Applamu-rmk:WCofSU}
Let $[p,\vec{q}]$ be a bag.
Then, every function $W':\set{p,u_1,\dots,u_k}\longto\set{0,\dots,n}$ is uniquely determined by the choice of $W'(p)\in\set{0,\dots,n}$ plus the choice of a function $W:\set{u_1,\dots,u_k}\longto\set{0,\dots,n}$.
This is reflected in the equality $(n+1)^{k+1}=(n+1)^{k}\cdot(n+1)$.
Now, for $0\leq i,j\leq n$, let us set $[p]_i^j$ the singleton multiset $[p]$ if $i=j$, and the empty mulitset $1$ if $i\neq j$.
Therefore, the w.c.\ of $[p,\vec{q}]$ generated by a $W':\set{p,u_1,\dots,u_k}\longto\set{0,\dots,n}$ is of shape $([\vec{w}^{\,0}]*[p]_0^{W'(p)},\dots,[\vec{w}^{\,n}]*[p]_n^{W'(p)})$, for $([\vec{w}^{\,0}],\dots,[\vec{w}^{\,n}])$ a w.c.\ of $[\vec{q}]$ generated by a $W:\set{u_1,\dots,u_k}\longto\set{0,\dots,n}$.
\end{Remark}

\vspace{0.2cm}

In the main paper, there is a sketch of the proof of Lemma~\ref{lamu-lm:ForLocalConfluence} with the most important cases.
As declared at the beginning of this Appendix section, we give below its complete proof.

\begin{proof}[Proof of Lemma~\ref{lamu-lm:ForLocalConfluence}]

(1).
Induction on $s$:

Case $s$ variable: impossible.

Case $s=\lam y.s'$: straightforward by inductive hypothesis.

Case $s=\mu\gamma.\name{\eta}{s'}$: we have two subcases\footnote{In the following we do not explicitly say it, but of course we will take bound names different from $\alpha$ and $\beta$, as well as from other bound names.}:

Subcase $s\to_{\mathrm{r}}^+ \Sum S$ is performed by reducing $s'$:\\
then $\Sum S=\mu\gamma.\name{\eta}{\Sum S'}$ with $s'\tor^+ \Sum S'$, and so (regardless of whether $\eta$ equals $\gamma$ or not):
\[\begin{array}{rcl}
s\set{\alpha/\beta}=\mu\gamma.\name{\delta_\eta^\alpha(\beta)}{s'\set{\alpha/\beta}}&\tor^+&\mu\gamma.\name{\delta_\eta^\alpha(\beta)}{\Sum S'\set{\alpha/\beta}}\\
&=&\mu\gamma.\name{\eta}{\Sum S'}\set{\alpha/\beta}=\Sum S\set{\alpha/\beta}
\end{array}\]
where we used the inductive hypothesis.

Subcase $s'=\mu\gamma'.\name{\eta'}{s''}$ and $s\to_{\mathrm{r}}^+ \Sum S$ is performed by reducing its leftmost $\rho$-redex:\\
Then $s=\mu\gamma.\name{\eta}{\mu\gamma'.\name{\eta'}{s''}}$, $\Sum S=\mu\gamma.\name{\eta'}{s''}\set{\eta/\gamma'}$ and we have four sub-subcases:

Sub-subcase $\eta=\beta$ and $\eta'=\beta$: then \[\Sum S\set{\alpha/\beta}=\mu\gamma.\name{\beta}{s''}\set{\alpha/\beta,\alpha/\gamma'}=\mu\gamma.\name{\alpha}{s''\set{\alpha/\beta,\alpha/\gamma'}}\] and:
\[
	\begin{array}{rcl}
s\set{\alpha/\beta}&=&\mu\gamma.\name{\alpha}{\mu\gamma'.\name{\alpha}{s''\set{\alpha/\beta}}}\\
&\to_\rho^+&\mu\gamma.\name{\alpha}{s''\set{\alpha/\beta}}\set{\alpha/\gamma'}\\
	&=&\mu\gamma.\name{\alpha}{s''\set{\alpha/\beta,\alpha/\gamma'}}=\Sum S\set{\alpha/\beta}.
	\end{array}
\]

Sub-subcase $\eta=\beta$ and $\eta'\neq\beta$: then 
\[
	\Sum S\set{\alpha/\beta}=\mu\gamma.\name{\eta'}{s''}\set{\alpha/\beta,\alpha/\gamma'}=\mu\gamma.\name{\delta_{\eta'}^\alpha(\gamma')}{s''\set{\alpha/\beta,\alpha/\gamma'}}
	\]
	 and:
\[
	\begin{array}{rcl}
	s\set{\alpha/\beta}&=&\mu\gamma.\name{\alpha}{\mu\gamma'.\name{\eta'}{s''\set{\alpha/\beta}}}\\
&\to_\rho^+&\mu\gamma.\name{\eta'}{s''\set{\alpha/\beta}}\set{\alpha/\gamma'}\\
	&=&\mu\gamma.\name{\delta_{\eta'}^\alpha(\gamma')}{s''\set{\alpha/\beta,\alpha/\gamma'}}=\Sum S\set{\alpha/\beta}.
	\end{array}
\]
Sub-subcase $\eta\neq\beta$ and $\eta'=\beta$: then 
\[
\Sum S\set{\alpha/\beta}=\mu\gamma.\name{\beta}{s''}\set{\eta/\gamma'}\set{\alpha/\beta}=\mu\gamma.\name{\alpha}{s''\set{\eta/\gamma'}\set{\alpha/\beta}}
\] 
and:
\[\begin{array}{rcl}
s\set{\alpha/\beta}&=&\mu\gamma.\name{\eta}{\mu\gamma'.\name{\alpha}{s''\set{\alpha/\beta}}}\\
&\to_\rho^+&\mu\gamma.\name{\alpha}{s''\set{\alpha/\beta}}\set{\eta/\gamma'}\\
&=&\mu\gamma.\name{\alpha}{s''\set{\alpha/\beta}\set{\eta/\gamma'}}
\end{array}\]
and the result follows by Lemma \ref{Applamu-lm:nameCommutes}.

Sub-subcase $\eta\neq\beta$ and $\eta'\neq\beta$: then $\Sum S\set{\alpha/\beta}$ is the term: 
\[\mu\gamma.\name{\delta_{\eta'}^{\eta}(\gamma')}{s''\set{\eta/\gamma'}}\set{\alpha/\beta}=\mu\gamma.\name{\delta_{\eta'}^{\eta}(\gamma')}{s''\set{\eta/\gamma'}\set{\alpha/\beta}}
\]
and:
\[\begin{array}{rcl}
s\set{\alpha/\beta}&=&\mu\gamma.\name{\eta}{\mu\gamma'.\name{\eta'}{s''\set{\alpha/\beta}}}\\
&\to_\rho^+&\mu\gamma.\name{\eta'}{s''\set{\alpha/\beta}}\set{\eta/\gamma'}\\
&=&\mu\gamma.\name{\delta_{\eta'}^{\eta}(\gamma')}{s''\set{\alpha/\beta}\set{\eta/\gamma'}}
\end{array}\]
and the result follows by Lemma \ref{Applamu-lm:nameCommutes}.

Case $s=s'[\vec{v}]$: we have four subcases:

Subcase $s\to_{\mathrm{r}}^+ \Sum S$ is performed by reducing the $s'$: straightforward by inductive hypothesis.

Subcase $s\to_{\mathrm{r}}^+ \Sum S$ is performed by reducing one $v_i\in[\vec{v}]$: straightforward by inductive hypothesis.

Subcase $s'=\lam x.s''$ and $s\to_{\mathrm{r}}^+ \Sum S$ is performed by reducing the $\lam$-redex $s$: then $\Sum S=s''\langle [\vec{v}]/x \rangle^+$ and:
\[\begin{array}{rcl}
s\set{\alpha/\beta}&=&(\lam x.s''\set{\alpha/\beta})[\vec{v}\set{\alpha/\beta}]\\
&\to_{\lam^{\mathrm{r}}}& s''\set{\alpha/\beta}\langle [\vec{v}\set{\alpha/\beta}]/x \rangle^+\\
&=&s''\langle [\vec{v}]/x \rangle^+\set{\alpha/\beta}=\Sum S\set{\alpha/\beta}
\end{array}\]
where the second-last equality holds thanks to Lemma \ref{Applamu-lm:outNames}.

Subcase $s'=\mu\gamma.\name{\eta}{s''}$ and $s\to_{\mathrm{r}}^+ \Sum S$ is performed by reducing the $\mu$-redex $s$: then $\Sum S=\mu\gamma.\coefflnamedapp{\name{\eta}{s''}}{\gamma}{[\vec{v}]}$ and:
\[\begin{array}{rll}
s\set{\alpha/\beta} &
= &
(\mu\gamma.\name{\delta_\eta^\alpha(\beta)}{s''\set{\alpha/\beta}})[\vec{v}\set{\alpha/\beta}] \\
&
\to_{\mu^{\mathrm{r}}} &
\mu\gamma.\coefflnamedapp{\name{\delta_\eta^\alpha(\beta)}{s''\set{\alpha/\beta}}}{\gamma}{[\vec{v}\set{\alpha/\beta}]} \\
& = &
\mu\gamma.\coefflnamedapp{\name{\eta}{s''}\set{\alpha/\beta}}{\gamma}{[\vec{v}\set{\alpha/\beta}]}
\end{array}\]
and the equality with $\Sum S\set{\alpha/\beta}$ follows from Lemma \ref{Applamu-lm:outNames}.

(2).
Induction on $t$.
The only non-trivial case is\footnote{In the case $t=y\neq x$, and if $\Sum S =0$ and $\vec{u}$ is empty, note that $t\langle [\Sum S,\vec{u}]/x\rangle^+=y\langle [0]/x\rangle^+=\sum\limits_{s'\in 0} y\langle [s']/x\rangle^+ = 0$ (and \emph{not} ``$y\langle [0]/x\rangle^+ = y\langle 1/x\rangle^+=y$''), so the result still holds in this case.}  $t=v_0[v_1,\dots,v_n]$.
In this case, by Remark \ref{Applamu-rmk:WCofSU}, we have that $(\,v_0[v_1,\dots,v_n]\,)\,\langle [s,\vec{u}]/x\rangle^+ $ is:
\[
 \sum\limits_{W}\,
 \sum\limits_{j=0}^n
 (\, v_0\langle [\vec{w}^{\,0}]*[s]_0^j/x \rangle \,)\,[\dots,v_i\langle [\vec{w}^{\,i}]*[s]_i^j/x \rangle,\dots].
\]
where $W:(\vec{u})\longto\set{1,\dots,n}$.
Fix now a $W:(\vec{u})\longto\set{1,\dots,n}$ (together with its generated w.c.) and consider each of the $n+1$ elements of the sum on $j$. 
We write the case for $j=0$, but the other cases are exactly the same. 
Since $j=0$, the element is $(\, v_0\langle [\vec{w}^{\,0}]*[s]/x \rangle \,)\,[\dots,v_i\langle [\vec{w}^{\,i}]/x \rangle,\dots]$ and by inductive hypothesis we have:
\[\begin{array}{l}
 (\, v_0\langle [\vec{w}^{\,0}]*[s]/x \rangle \,)\,[\dots,v_i\langle [\vec{w}^{\,i}]*[s]/x \rangle,\dots]\\
 \msto[\mathrm{r}]^+
 (\, v_0\langle [\vec{w}^{\,0}]*[\Sum S]/x \rangle \,)\,[\dots,v_i\langle [\vec{w}^{\,i}]/x \rangle,\dots].
\end{array}\]
Now summing up all the elements for $j=0,\dots,n$ and $W:(\vec{u})\longto\set{1,\dots,n}$ we obtain the following sum:
\[
 \sum\limits_{W}\,
 \sum\limits_{j=0}^n
  (\, v_0\langle [\vec{w}^{\,0}]*[\Sum S]_0^j/x \rangle \,)[\dots,v_i\langle [\vec{w}^{\,i}]*[\Sum S]_i^j/x \rangle,\dots].
\]
Using again Remark \ref{Applamu-rmk:WCofSU}, the above sum becomes:
\[
 (\,v_0[v_1,\dots,v_n]\,)\,\langle [\Sum S,\vec{u}]/x\rangle^+ 
\]
which is the desired result.

(3).
Induction on $s$.

Case $s$ variable: impossible.

Case $s=$ $\lam$-abstraction: straightforward by inductive hypothesis.

Case $s=\mu\alpha.\name{\beta}{s'}$: we have two subcases:

Subcase $s\to_{\mathrm{r}}^+ \Sum S$ is performed by reducing $s'$: immediate.

Subcase $s'=\mu\gamma.\name{\eta}{s''}$ and $s\to_{\mathrm{r}}^+ \Sum S$ is performed by reducing its leftmost $\rho$-redex:\\
then $\Sum S=\mu\alpha.\name{\eta}{s''}\set{\beta/\gamma}$ and (call $[\vec{u}]=:[u_1,\dots,u_k]$)
\[
 \begin{array}{rllr}
  s\langle [\vec{u}]/x\rangle^+ &
  = &
  \mu\alpha.\name{\beta}{\mu\gamma.\name{\eta}{s''\langle [\vec{u}]/x \rangle^+}} 
  &\\
   &
  \to_{\rho^{\mathrm{r}}} &
  \mu\alpha.\name{\eta}{s''\langle [\vec{u}]/x \rangle^+}\set{\beta/\gamma} 
  & \\
   & =
  & \mu\alpha.\name{\eta}{s''}\langle [\vec{u}]/x \rangle^+\set{\beta/\gamma} & \\
  & =
  & \mu\alpha.\name{\eta}{s''}\set{\beta/\gamma}\langle [\vec{u}]/x \rangle^+ & \\
   &
  = &
  \Sum S\langle [\vec{u}]/x \rangle^+. 
  & 
\end{array}
\]
Case $s=s'[\vec{v}]$: we have four subcases:

Subcase $s\to_{\mathrm{r}}^+ \Sum S$ is performed by reducing $s'$: straightforward by inductive hypothesis.

Subcase $s\to_{\mathrm{r}}^+ \Sum S$ is performed by reducing the $v_i$ in $[\vec{v}]$: straightforward by inductive hypothesis.

Subcase $s'=\lam y.s''$ and $s\to_{\mathrm{r}}^+ \Sum S$ is performed by reducing the $\lam$-redex $s$: follows by Lemma \ref{Applamu-lm:1} (where $x\neq y$ because $y$ is bound and $x$ is fixed).

Subcase $s'=\mu\alpha.\name{\beta}{s''}$ and $s\to_{\mathrm{r}}^+ \Sum S$ is performed by reducing the $\mu$-redex $s$:
follows by Lemma \ref{Applamu-lm:2} (where $\dg{\alpha}{[\vec{u}]}=0$ because $\alpha$ is bound and $[\vec{u}]$ is fixed).

(4).
Induction on $t$.
The non-trivial cases are $t=v_0[v_1,\dots,v_n]$ and $t=\mu\gamma.\name{\eta}{t'}$.
Both are done following the same argument as we did in point (2); apply Remark \ref{Applamu-rmk:WCofSU}, then the inductive hypothesis and thus close the argument by applying Remark \ref{Applamu-rmk:WCofSU}.

(5).
It is immediate discriminating the cases $\alpha=\beta$ and $\alpha\neq\beta$, and concluding by point (4).

(6).
Induction on $s\in\lamu$.

Case $s=$ variable. Impossible.

Case $s=$ $\lam$-abstraction. Straightforward by inductive hypothesis.

Case $s=\mu\beta.\name{\gamma}{s'}$: we have two subcases:

Subcase $s\to_{\mathrm{r}}^+ \Sum S$ is performed by reducing $s'$: then $\Sum S=\mu\beta.\name{\gamma}{\Sum S'}$ with $s'\to_\mathrm{r} \Sum S'$. We have $\coefflnamedapp{s}{\alpha}{[\vec{u}]}=\mu\beta.\coefflnamedapp{\name{\gamma}{s'}}{\alpha}{[\vec{u}]}$. Remark that we \emph{cannot} immediately apply the inductive hypothesis on $\name{\gamma}{s'}$, simply because $\name{\gamma}{s'}\notin\lamu^{\mathrm{r}}$ (it is a named term). However we can split in the two subcases whether $\gamma=\alpha$ or $\gamma\neq\alpha$ and now in both subcases we can conclude straightforwardly by inductive hypothesis.

Subcase $s'=\mu\gamma'.\name{\eta}{s''}$ (with $\gamma\neq\gamma'$) and $s\to_{\mathrm{r}}^+ \Sum S$ is performed by reducing its leftmost $\rho$-redex:
then $\Sum S=\mu\beta.\name{\eta}{s''}\set{\gamma/\gamma'}$.
We split in two sub-subcases:

Sub-subcase $\alpha\neq\gamma$: 
\[\begin{array}{rlll}
 \coefflnamedapp{s}{\alpha}{[\vec{u}]}
 & = &
 \mu\beta.\name{\gamma}{\mu\gamma'.\coefflnamedapp{\name{\eta}{s''}}{\alpha}{[\vec{u}]}} & 
 \\
 & \mstor{\rho}^+ &
 \mu\beta.\,
 \coefflnamedapp{\name{\eta}{s''}}{\alpha}{[\vec{u}]}\,
 \set{\gamma/\gamma'} &
 \\
 & = &
 \coefflnamedapp{\mu\beta.\name{\eta}{s''}\set{\gamma/\gamma'}}{\alpha}{[\vec{u}]}
 & \textit{(by Lemma \ref{Applamu-lm:outNames}}
\\
& \phantom{=} & \phantom{\coefflnamedapp{\mu\beta.\name{\eta}{s''}\set{\gamma/\gamma'}}{\alpha}{[\vec{u}]}} & \textit{\,plus $\dg{\gamma'}{[\vec{u}]}=0$)}
 \\
 & = &
 \coefflnamedapp{\Sum S}{\alpha}{[\vec{u}]}. &
\end{array}\]
Sub-subcase $\alpha=\gamma$.
We have:
\[\begin{array}{rlll}
 \coefflnamedapp{s}{\alpha}{[\vec{u}]}
 & = &
 \sum\limits_{W}
 \mu\beta.\name{\alpha}{
  \,(\,
   \mu\gamma'.
   \coefflnamedapp{
    \name{\eta}{s''}
   }{\alpha}{[\vec{w}^{\,1}]}
  \,)\,
  [\vec{w}^{\,2}]
 }
 & \\
& & \textit{where } W:(\vec{u})\longto\set{1,2} &
 \\ \\
 & \mstor{\mu}^+ &
 \sum\limits_{W}
 \mu\beta.\name{\alpha}{
  \mu\gamma'.\coefflnamedapp{
   \coefflnamedapp{\name{\eta}{s''}}{\alpha}{[\vec{w}^{\,1}]}
  }{\gamma'}{[\vec{w}^{\,2}]}
 } &
 \\
 & \mstor{\rho}^+ &
 \sum\limits_{W}
 \mu\beta.
 \coefflnamedapp{
  \coefflnamedapp{
   \name{\eta}{s''}
  }{\alpha}{[\vec{w}^{\,1}]}
 }{\gamma'}{[\vec{w}^{\,2}]}
 \, \set{\alpha/\gamma'} &
 \\
 & = &
 \sum\limits_{W}
 \coefflnamedapp{
  \coefflnamedapp{
   \mu\beta.\name{\eta}{s''}
  }{\alpha}{[\vec{w}^{\,1}]}
 }{\gamma'}{[\vec{w}^{\,2}]}
 \, \set{\alpha/\gamma'}
 \\
 & = &
 \coefflnamedapp{\mu\beta.\name{\eta}{s''}\,\set{\alpha/\gamma'}}{\alpha}{[\vec{u}]} \textit{ \ (by Lemma \ref{Applamu-lm:3})} &
 \\ 
 & = &
 \coefflnamedapp{\Sum S}{\alpha}{[\vec{u}]}\quad \textit{(since }\alpha=\gamma \textit{).} &
\end{array}\]
Case $s=s'[v_1,\dots,v_n]$: we have four subcases:

Subcase $s\to_{\mathrm{r}}^+ \Sum S$ is performed by reducing $s'$: straightforward by inductive hypothesis.

Subcase $s\to_{\mathrm{r}}^+ \Sum S$ is performed by reducing the $v_i$ in $[\vec{v}]$: straightforward by inductive hypothesis.

Subcase $s'=\lam x.s''$ and $s\to_{\mathrm{r}}^+ \Sum S$ is performed by reducing the $\lam$-redex $s$.
Then $\Sum S = s''\lsubst{x}{\vec{v}}$ and we have:
\[\begin{array}{rcl}
 \coefflnamedapp{s}{\alpha}{[\vec{u}]}
 & = &
 \sum\limits_{W}
 (\lam x. \coefflnamedapp{s''}{\alpha}{[\vec{w}^{\,0}]})\,[\dots,\coefflnamedapp{v_i}{\alpha}{[\vec{w}^{\,i}]},\dots]
\\
 & & \textit{ with } W:(\vec{u})\longto\set{0,\dots,n}
 \\
 \\
 & \mstor{\lam}^+ &
 \sum\limits_{W}
 (\coefflnamedapp{s''}{\alpha}{[\vec{w}^{\,0}]})\lsubst{x}{\dots,\coefflnamedapp{v_i}{\alpha}{[\vec{w}^{\,i}]},\dots}
 \\
 \\
 & = &
 \coefflnamedapp{
 s''\lsubst{x}{\vec{v}}
 }{\alpha}{[\vec{u}]} \qquad \textit{(by Lemma \ref{Applamu-lm:4})}
 \\
 \\
 & = &
 \coefflnamedapp{\Sum S}{\alpha}{[\vec{u}]}.
\end{array}\]
Subcase $s'=\mu\gamma.\name{\eta}{s''}$ (with $\gamma\neq\alpha$) and $s\to_{\mathrm{r}}^+ \Sum S$ is performed by reducing the $\mu$-redex $s$.
Then $\Sum S = \mu\gamma.\coefflnamedapp{\name{\eta}{s''}}{\gamma}{[\vec{v}]}$
and we have:
\[\small\begin{array}{rcl}
 \coefflnamedapp{s}{\alpha}{[\vec{u}]}
 & = &
 \sum\limits_{W}
 (\coefflnamedapp{\mu\gamma.\name{\eta}{s''}}{\alpha}{[\vec{w}^{\,0}]})\,[\dots,\coefflnamedapp{v_i}{\alpha}{[\vec{w}^{\,i}]},\dots]
\\
 & & \textit{with } W:(\vec{u})\longto\set{0,\dots,n}
 \\
 \\
 & \mstor{\mu}^+ &
 \mu\gamma.
 \sum\limits_{W}
 \coefflnamedapp{
 \coefflnamedapp{\name{\eta}{s''}}{\alpha}{[\vec{w}^{\,0}]}}
 {\gamma}
 {[\dots,\coefflnamedapp{v_i}{\alpha}{[\vec{w}^{\,i}]},\dots]}
 \\
 \\
 & = &
 \mu\gamma.
 \coefflnamedapp{
 \coefflnamedapp{
 \name{\eta}{s''}}{\gamma}{[\vec{v}]}
 }{\alpha}{[\vec{u}]} \qquad \textit{(by Lemma \ref{Applamu-lm:5})}
 \\
 \\
 & = &
 \coefflnamedapp{\Sum S}{\alpha}{[\vec{u}]}.
\end{array}\]

(7).
It is immediate by discriminating the cases $\alpha=\beta$ and $\alpha\neq\beta$, and then concluding by point (6).\qedhere
\end{proof}

\vspace{0.2cm}

In the main paper, there is a sketch of the proof of Proposition~\ref{lamu-prop:lamu^+Confl} with the most important cases.
As declared at the beginning of this Appendix section, we give below its complete proof.

\begin{proof}[Proof of Proposition~\ref{lamu-prop:lamu^+Confl}]
We show, by induction on a single-hole resource context $c$, that if:
\[t\rightarrow^+_{\mathrm{base}^\mathrm{r}}\Sum T \textit{ and } c\hole{t} \tor^+ \Sum{T}_2\] 
then there is $\Sum T'\in\N\langle\lamur\rangle$ s.t.\:
\[c\hole{\Sum T} \msto[\mathrm{r}]^+ \Sum T'\, {}^+_{\mathrm{r}}\!\!\!\twoheadleftarrow \Sum{T}_2.\]
In all the following diagrams we write ``$\to$'' but of course we mean ``$\tor^+$''.

(1). Case $c=\xi$.

So $c\hole{t}=t\rightarrow^+_{\mathrm{base}^\mathrm{r}} \Sum T$ and we only have the three base-cases of Definition~\ref{lamu-def:qualitlamur}.
In all the diagrams of this case, when not explicitly said differently or not clear by an easy reduction, the bottom-left reduction follows from Lemma \ref{lamu-lm:ForLocalConfluence} and the bottom-right from Remark \ref{lamu-rm:easyTakeOutSum}.

Subcase $t=(\lam x.s)[\vec{u}]$ and $\Sum T = s\langle[\vec{u}]/x \rangle^+$. 
Then $c\hole{t}=t\tor^+\Sum T_2$ (on a different redex than $t$) can only be performed either by reducing $s$, or by reducing an element $w$ of $[\vec{u}]$.
We have thus the following two diagrams:
\[\small{
\begin{tikzcd}[column sep=scriptsize]
& (\lam x.s)[\vec{u}] \ar[dl] \ar[dr, "(s\rightarrow\Sum S)"] & & \\
s\langle[\vec{u}]/x \rangle^+ \ar[->>, dr] & & (\lam x.\Sum S)[\vec{u}] \ar[->>, dl] \\
& \Sum S \langle[\vec{u}]/x \rangle^+ & &
\end{tikzcd}
}\]
\[\small{
\begin{tikzcd}[column sep=scriptsize]
& (\lam x.s)[w,\vec{u}'] \ar[dl] \ar[dr, "(w\rightarrow\Sum W)"] & & \\
s\langle[w,\vec{u}']/x \rangle^+ \ar[->>, dr] & & (\lam x.s)[\Sum W,\vec{u}'] \ar[->>, dl] \\
& s\langle[\Sum W,\vec{u}']/x \rangle^+ & &
\end{tikzcd}
}\]

Subcase $t=(\mu\alpha.\name{\beta}{s})[\vec{u}]$ and $\Sum T = \mu\alpha.\langle \name{\beta}{s} \rangle^+_\alpha[\vec{u}]$.
Then $c\hole{t}=t\tor^+\Sum T_2$ (on a different redex than $t$) can only be performed either by reducing $s$, or by reducing an element $w$ of $[\vec{u}]$, or if $s=\mu\gamma.\name{\eta}{s'}$ and we reduce the $\rho$-redex $...\name{\beta}{\mu\gamma.\,...}$. In the latter case we split into the case $\alpha\neq\beta$, the case $\alpha=\beta,\gamma\neq\eta,\eta=\alpha$, the case $\alpha=\beta,\gamma\neq\eta,\eta\neq\alpha$, and the case $\alpha=\beta,\eta=\gamma$ (with necessary $\gamma\neq\alpha$).\\
The first case is given by the diagram:
\[\small{
\begin{tikzcd}[column sep=scriptsize]
& (\mu\alpha.\name{\beta}{s})[\vec{u}] \ar[dl] \ar[dr, "(s\rightarrow\Sum S)"] & & \\
\mu\alpha.\langle \name{\beta}{s} \rangle^+_\alpha[\vec{u}] \ar[->>, dr] & & (\mu\alpha.\name{\beta}{\Sum S})[\vec{u}] \ar[->>, dl] \\
& \mu\alpha.\langle \name{\beta}{\Sum S} \rangle^+_\alpha[\vec{u}] & &
\end{tikzcd}
}\]
The second case is given by the diagram:
\[\small{
\begin{tikzcd}
& (\mu\alpha.\name{\beta}{s})[w,\vec{u}'] \ar[dl] \ar[dr, "(w\rightarrow\Sum W)"] & & \\
\mu\alpha.\langle \name{\beta}{s} \rangle^+_\alpha[w,\vec{u}'] \ar[->>, dr] & & (\mu\alpha.\name{\beta}{s})[\Sum W,\vec{u}'] \ar[->>, dl] \\
& \mu\alpha.\langle \name{\beta}{s} \rangle^+_\alpha[\Sum W,\vec{u}'] & &
\end{tikzcd}
}\]
The third case is given by the diagram in Figure~\ref{fig:6Diag.3}.\\
The fourth case is given by the diagram in Figure~\ref{fig:6Diag.4}.
The bottom right equality holds because, by~\autoref{Applamu-lm:3}, \autoref{Applamu-rmk:inNextLemma} and since $\dg{\gamma}{[\vec{u}]}=0$, each addend of the bottom right sum is:
\[\begin{array}{rl}
& \mu\alpha.\name{\alpha}{(\coefflnamedapp{s'\set{\alpha/\gamma}}{\alpha}{[\vec{w}^{\,0}]})[\vec{w}^{\,1}]} \\
 = &
\sum\limits_{\small{D:(\vec{w}^{\,0})\to\set{1,2}}}
\mu\alpha.\name{\alpha}{(\coefflnamedapp{\coefflnamedapp{s'}{\alpha}{[\vec{d}^{\,1}]}}{\gamma}{[\vec{d}^{\,0}]})[\vec{w}^{\,1}]}\set{\alpha/\gamma}
\\
 = &
\sum\limits_{D}
\mu\alpha.\name{\alpha}{\coefflnamedapp{(\coefflnamedapp{s'}{\alpha}{[\vec{d}^{\,1}]})[\vec{w}^{\,1}]}{\gamma}{[\vec{d}^{\,0}]}}\set{\alpha/\gamma}
\end{array}\] 
and since we are then summing up on all possible $W:(\vec{u})\to\set{1,2}$, the resulting sum is the same as the one at the bottom of the above diagram.\\
The fifth case is given by the diagram in Figure~\ref{fig:6Diag.5}.
The bottom right equality holds by~\autoref{Applamu-lm:3} and because $\dg{\gamma}{[\vec{u}]}=0$.\\
The sixth case is given by the diagram in Figure~\ref{fig:6Diag.6}.
The bottom right equality holds by~\autoref{Applamu-lm:3} and because $\dg{\gamma}{[\vec{u}]}=0$.

Subcase $t=\mu\gamma.\name{\alpha}{\mu\beta.\name{\eta}{s}}$ and $\Sum T = \mu\gamma.\name{\eta}{s}\set{\alpha/\beta}$.
Then $c\hole{t}=t\tor^+\Sum T_2$ (on a different redex than $t$) can be only performed either by reducing $s$, or if $s=\mu\gamma'.\name{\eta'}{s'}$ and we reduce the $\rho$-redex $...\name{\eta}{\mu\gamma'.\,...}$.
In the first case the diagram is:
\[\small{
\begin{tikzcd}[column sep=scriptsize]\centering
& \mu\gamma.\name{\alpha}{\mu\beta.\name{\eta}{s}} \ar[dl] \ar[dr, "(s\rightarrow\Sum S)"] & & \\
\mu\gamma.\name{\eta}{s}\set{\alpha/\beta} \ar[dr] & & \mu\gamma.\name{\alpha}{\mu\beta.\name{\eta}{\Sum S}} \ar[->>, dl] \\
& \mu\gamma.\name{\eta}{\Sum S}\set{\alpha/\beta} & &
\end{tikzcd}
}\]
In the second case the diagram is given in Figure~\ref{fig:twoDiag}.
In order to prove that the second diagram holds, let us prove the equality in the last diagram: we first show that $s'\set{\alpha/\beta}\set{\delta_0/\gamma'}=s'\set{\eta/\gamma'}\set{\alpha/\beta}$ and then that $\delta_1=\delta_2$.
For the former equality, we have: 

if $\beta\neq\eta$, then $\delta_0=\eta$ and the equality follows from~\autoref{Applamu-lm:nameCommutes}; 

if $\beta=\eta$ then $\delta_0=\alpha$ and the equality holds because both renamings $\set{\alpha/\beta}\set{\delta_0/\gamma'}$ and $\set{\eta/\gamma'}\set{\alpha/\beta}$ coincide with the unique renaming $\set{\alpha/(\beta,\gamma')}$\footnote{We mean here that both $\beta$ and $\gamma'$ get renamed with $\alpha$.}.

For the latter equality, we have:

if $\gamma'=\eta'$: then $\delta'_2=\eta$ and thus $\delta_2=\delta_0$.
Remark that it must be $\beta\neq\eta'$, because otherwise $\beta=\gamma'$ which is impossible. This means that $\delta'_1=\eta'$. But then $\delta_1=\delta_0$, so we are done.

if $\gamma'\neq\eta'$: then $\delta'_2=\eta'$ and thus $\delta_2=\delta'_1$.
Now if $\beta\neq\eta'$ then $\delta'_1=\eta'$ and thus $\delta_1=\eta'$; 
if $\beta=\eta'$ then $\delta'_1=\alpha$ and thus $\delta_1=\delta_\alpha^{\delta_0}(\gamma')=\alpha$, because it cannot be $\gamma'=\alpha$.
If we read what we just found about $\delta_1$, it precisely says that $\delta_1=\delta'_1$ so we are done.

(2). Case $c=\mu\alpha.\name{\beta}{c'}$ with either $c'\hole{t}$ \emph{not} a $\mu$-abstraction, or $c=\lam x.c'$.

Then the reduction $c\hole{t}\rightarrow\Sum{T}_2$ can only be performed via a reduction $c'\hole{t}\rightarrow\Sum{T}'_2$.
Both the diagrams for the two cases of $c$ have the exact same shape; let us only give the one for $c=\mu\alpha.\name{\beta}{c'}$:
\[
\begin{tikzcd}[column sep=scriptsize]
& \mu\alpha.\name{\beta}{c'\hole{t}} \ar[dl] \ar[dr, "(c'\hole{t}\rightarrow\Sum{T}'_2)"] & & \\
\mu\alpha.\name{\beta}{c'\hole{\Sum T}} \ar[->>, dr] & & \mu\alpha.\name{\beta}{\Sum{T}'_2} \ar[->>, dl] \\
& \mu\alpha.\name{\beta}{\widetilde{\Sum T}} & &
\end{tikzcd}
\]
where a sum $\widetilde{\Sum T}$ s.t.\ $c'\hole{\Sum T}\msto \widetilde{\Sum T}\twoheadleftarrow \Sum{T}'_2$ is given by inductive hypothesis on $c'$.

(3). Case $c=\mu\alpha.\name{\beta}{c'}$, with $c'\hole{t}$ a $\mu$-abstraction.

Then either $c'=\xi$ and $t=\mu\gamma.\name{\eta}{t'_0}$, or $c'=\mu\gamma.\name{\eta}{c''}$. 

In the first case, since by hypothesis $t\rightarrow^+_{\mathrm{base}^\mathrm{r}}\Sum T$, it must be $t=\mu\gamma.\name{\eta}{\mu\gamma'.\name{\eta'}{t'}}$ and $\Sum T=\mu\gamma.\name{\eta'}{t'}\set{\eta/\gamma'}$.
Therefore, the reduction $c\hole{t} \to_\mathrm{r} \Sum{T}_2$ (on a different redex than the one of $t\rightarrow^+_{\mathrm{base}^\mathrm{r}}\Sum T$) can only be performed either by reducing $t'$, or by reducing the $\rho$-redex $...\name{\beta}{\mu\gamma.\,...}$, or if $t'=\mu\widetilde{\gamma}.\name{\widetilde{\eta}}{\widetilde{t}}$ and we reduce the $\rho$-redex $...\name{\eta'}{\mu\widetilde{\gamma}.\,...}$.
The first and second situation of the previous list have been already treated in the case $c=\xi$ (with the notation used there, it corresponds to the two diagrams of the subcase $t=\mu\gamma.\name{\alpha}{\mu\beta.\name{\eta}{s}}$ and $\Sum T = \mu\gamma.\name{\eta}{s}\set{\alpha/\beta}$.).
Also the third situation corresponds to the exact same case just mentioned, because the external $\mu\alpha.\name{\beta}{...}$ is not modified in neither reductions.

In the second case, then the reduction $c\hole{t} \to_\mathrm{r} \Sum{T}_2$ can only be performed either by reducing $c''\hole{t}$, or by reducing the $\rho$-redex $c\hole{t}$.
In the first situation the diagram follows easily by inductive hypothesis as in the previous case, and in the second one the diagram is the following:
\[\begin{footnotesize}
\begin{tikzcd}[column sep=scriptsize]
& \mu\alpha.\name{\beta}{\mu\gamma.\name{\eta}{c''\hole{t}}} \ar[dl] \ar[dr] & & \\
\mu\alpha.\name{\beta}{\mu\gamma.\name{\eta}{c''\hole{\Sum T}}} \ar[->>, dr] & & \mu\alpha.\name{\eta}{c''\hole{t}}\set{\beta/\gamma} \ar[->>, dl] \\
& \mu\alpha.\name{\eta}{c''\hole{\Sum T}}\set{\beta/\gamma} & &
\end{tikzcd}
\end{footnotesize}\]
thanks to Remark \ref{lamu-rm:easyTakeOutSum} and Lemma \ref{lamu-lm:ForLocalConfluence}.

(4). Case $c=c'[\vec{u}]$.

Then $c\hole{t}=c'\hole{t}[\vec{u}]$ and the reduction $c\hole{t} \to_\mathrm{r} \Sum{T}_2$ can only be performed either by reducing $c'\hole{t}$, or reducing an element of $[\vec{u}]$, or the in case $c'\hole{t}$ is a $\lam$-abstraction or a $\mu$-abstraction and we reduce the $\lamu$-redex $c\hole{t}$. 

In the first case one can easily use inductive hypothesis.

In the second case one can easily write the diagram.

In the third case $c'\hole{t}$ is a $\lam$-abstraction.
Then either $c'=\xi$ and $t=\lam x.t'$, or $c'=\lam x.c''$.
But the first case is impossible, because by hypothesis $t\rightarrow^+_{\mathrm{base}^\mathrm{r}}\Sum T$, so $t$ cannot be a $\lam$-abstraction;
In the second case the diagram corresponds to the first diagram of the case 1 (with the notations used there, take $s:=c''\hole{t}$ and $\Sum S:=c''\hole{\Sum T}$).

In the fourth case $c'\hole{t'}$ is a $\mu$-abstraction.
Then either $c'=\xi$ and $t=\mu\alpha.\name{\beta}{t'}$, or $c'=\mu\alpha.\name{\beta}{c''}$.
But in the first situation, since by hypothesis $t\rightarrow^+_{\mathrm{base}^\mathrm{r}}\Sum T$, it must be $t=\mu\alpha.\name{\beta}{\mu\gamma.\name{\eta}{t''}}$ and $\Sum T=\mu\alpha.\name{\eta}{t''}\set{\beta/\gamma}$, and this situation has already been treated in the case $c=\xi$ (with the notations used there, it corresponds to the subcase $t=(\mu\alpha.\name{\beta}{s})[\vec{u}]$ and $\Sum T = \mu\alpha.\langle \name{\beta}{s} \rangle^+_\alpha[\vec{u}]$, where we consider the reduction of the $\rho$-redex\footnote{The diagrams are the last four of that subcase.}).
In the second case the diagram corresponds to the third diagram of the case 1 (with the notations used there, take $s:=c''\hole{t}$ and $\Sum S:=c''\hole{\Sum T}$).

(5). Case $c=v[c',\vec{u}]$.

Then $c\hole{t}=v[c'\hole{t},\vec{u}]$ and the reduction $c\hole{t} \to_r \Sum{T}_2$ can only be performed either by reducing $c'\hole{t}$, or reducing $v$, or reducing an element of $[\vec{u}]$, or in the case $v$ is a $\lam$-abstraction or a $\mu$-abstraction and we reduce the $\lamu$-redex $c\hole{t}$. 

In the first case one can trivially use the inductive hypothesis as done in the previous case.

In the second and third case one can easily write the diagram.

Let us look at the fourth case.

If $v$ is a $\lam$-abstraction, say $v=\lam x.v'$, then the diagram corresponds to the second diagram of the case 1 (with the notations used there, take $w:=c'\hole{t}$ and $\Sum W:=c'\hole{\Sum T}$).

If $v$ is a $\mu$-abstraction, say $v=\mu\alpha.\name{\beta}{v'}$, then the diagram corresponds to the fourth diagram of the case 1 (with the notations used there, take $w:=c'\hole{t}$ and $\Sum W:=c'\hole{\Sum T}$).
\qedhere
\end{proof}

\subsection{APPENDIX - SECTION 3.1}

We give proofs of Lemma~\ref{lm:TaylorBehavesSubst}, Proposition~\ref{prop:LiftAssSimulation}, Lemma~\ref{lm:ForInjectivity}, Proposition~\ref{lamu-prop:FIRSTPART}, Corollary~\ref{cor:Ass3+Prop+Cor}.

\begin{proof}[Proof of Lemma~\ref{lm:TaylorBehavesSubst}]
(1).
 Straightforward induction on $M$.

(2).
 Induction on $M$.
Nothing changes w.r.t.\ the proof one does in $\lam$-calculus, the only new case is $M=\mu\beta.\name{\alpha}{P}$ but it is done straightforwardly exactly as the case $M=\lam x.P$.

(3).
Induction on $M$.

Case $M=x$:
\[\begin{array}{rcl} 
  \Te{(M)_\alpha N}&=&\Te{(x)_\alpha N} \\
 &=&\set{x}\\
 &=&\langle x\rangle_\alpha 1\\
 &=&\bigcup\limits_{\vec{u}\in\,\fmsets{\Te{N}}} \langle x\rangle_\alpha[\vec{u}]\\
 &=&\bigcup\limits_{t\in\Te{M},\vec{u}\in\,\fmsets{\Te{N}}} \langle t\rangle_\alpha[\vec{u}].
\end{array}\]

Case $M=\lam x.P$:
\[
 \begin{array}{rcl}
  \Te{(M)_\alpha N} &
  = &
  \Te{\lam x.(P)_\alpha N} \\
   &
  = &
  \set{\lam x.s\stt s\in\Te{(P)_\alpha N}} \\
   &
  = & 
  \bigcup\limits_{p\in\Te{P}}\bigcup\limits_{[\vec{u}]\in\,\fmsets{\Te{N}}} \lam x.(\langle p\rangle_\alpha[\vec{u}]) \\
   &
  = &
  \bigcup\limits_{p\in\Te{P}}\bigcup\limits_{[\vec{u}]\in\,\fmsets{\Te{N}}} \langle \lam x.p\rangle_\alpha[\vec{u}] \\
   &
  = &
  \bigcup\limits_{t\in\Te{M}}\bigcup\limits_{[\vec{u}]\in\,\fmsets{\Te{N}}} \langle t\rangle_\alpha[\vec{u}].
\end{array}
\]
Case $M=\mu\beta.\name{\gamma}{P}$ (with $\beta,\gamma\neq\alpha$):
\[
 \begin{array}{rcl}
  \Te{(M)_\alpha N} &
  = &
  \Te{\mu\beta.\name{\gamma}{(P)_\alpha N}} \\
   & 
  = & 
  \set{\mu\beta.\name{\gamma}{s} \stt s\in\Te{(P)_\alpha N}} \\
   &
  = &
  \bigcup\limits_{p\in\Te{P}}\bigcup\limits_{[\vec{u}]\in\,\fmsets{\Te{N}}} \mu\beta.\name{\gamma}{\langle p\rangle_\alpha[\vec{u}]} \\
   &
  = &
  \bigcup\limits_{p\in\Te{P}}\bigcup\limits_{[\vec{u}]\in\,\fmsets{\Te{N}}} \langle\mu\beta.\name{\gamma}{p}\rangle_\alpha [\vec{u}] \\
   &
  = &
  \bigcup\limits_{t\in\Te{M}}\bigcup\limits_{[\vec{u}]\in\,\fmsets{\Te{N}}} \langle t\rangle_\alpha[\vec{u}].
\end{array}
\]
Case $M=\mu\beta.\name{\alpha}{P}$ (with $\beta\neq\alpha$):
\[
 \begin{array}{rcl}
  \Te{(M)_\alpha N} & 
  = &
  \Te{\mu\beta.\name{\alpha}{((P)_\alpha N)N}} \\
   & 
  = &
  \Big\{\mu\beta.\name{\alpha}{v[\vec{w}]} \stt [\vec{w}]\in\,\fmsets{\Te{N}},\\
&& \quad v\in\bigcup\limits_{p\in\Te{P},\vec{q}\in\,\fmsets{\Te{N}}} \langle p\rangle_\alpha[\vec{q}]\Big\} \\
   & 
  = &
  \Big\{\mu\beta.\name{\alpha}{v[\vec{w}]} \stt [\vec{w}]\in\,\fmsets{\Te{N}}, \ v\in \langle p\rangle_\alpha[\vec{q}], \\
&& \quad p\in\Te{P}, \ [\vec{q}]\in\,\fmsets{\Te{N}}\Big\} \\
   &
  = &
  \bigcup\limits_{p\in\Te{P}}\bigcup\limits_{[\vec{u}]\in\,\fmsets{\Te{N}}}\\
&&\quad
  \mu\beta.
  \sum\limits_{
  \begin{scriptsize}\begin{array}{c}
  ([\vec{w}],[\vec{q}])\\
  \textit{ w.c.\ of }[\vec{u}]
  \end{array}\end{scriptsize}
  }
  \name{\alpha}{(\lnamedapp{p}{\alpha}{[\vec{q}]})[\vec{w}]} \\
   &
  = &
  \bigcup\limits_{p\in\Te{P}}\bigcup\limits_{[\vec{u}]\in\,\fmsets{\Te{N}}}
  \mu\beta.\lnamedapp{\name{\alpha}{p}}{\alpha}{[\vec{u}]} \\
   &
  = &
  \bigcup\limits_{p\in\Te{P}}\bigcup\limits_{[\vec{u}]\in\,\fmsets{\Te{N}}}
  \lnamedapp{\mu\beta.\name{\alpha}{p}}{\alpha}{[\vec{u}]} \\
   &
  = &
  \bigcup\limits_{t\in\Te{M}}\bigcup\limits_{[\vec{u}]\in\,\fmsets{\Te{N}}} \langle t\rangle_\alpha[\vec{u}].
\end{array}
\]
Case $M=PQ$:
\[\small
 \begin{array}{rcl}
  \Te{(M)_\alpha N} & 
  = &
  \Te{((P)_\alpha N)((Q)_\alpha N)} \\
  \\
   & 
  = &
  \bigcup\limits_{n\in\N}
  \bigcup\limits_{p\in\Te{P}}\bigcup\limits_{[\vec{q}]\in\fmsets{\Te{Q}}}\bigcup\limits_{[\vec{s}^0],\dots,[\vec{s}^n]\in\fmsets{\Te{N}}} \\
  &&\quad \Big\{ v[w_1,\dots,w_n] \stt v\in\lnamedapp{p}{\alpha}{[\vec{s}^0]},\, w_i\in\lnamedapp{q_i}{\alpha}{[\vec{s}^i]} \Big\} 
  \\
  \\
   & 
  = &
  \bigcup\limits_{n\in\N}
  \bigcup\limits_{p\in\Te{P}}\bigcup\limits_{[\vec{q}]\in\fmsets{\Te{Q}}}
  \bigcup\limits_{[\vec{u}]\in\fmsets{\Te{N}}} \\
 &&\quad \sum\limits_{\begin{scriptsize}
  \begin{array}{c}
  ([\vec{s}^0],\dots,[\vec{s}^n])\\
  \textit{ w.c.\ of }[\vec{u}]
  \end{array}\end{scriptsize}
  }
  (\lnamedapp{p}{\alpha}{[\vec{s}^0]})[\dots,\lnamedapp{q_i}{\alpha}{[\vec{s}^i]},\dots] 
  \\ 
  \\
   &
  = &
  \bigcup\limits_{p\in\Te{P}}\bigcup\limits_{[\vec{q}]\in\fmsets{\Te{Q}}}
  \bigcup\limits_{[\vec{u}]\in\fmsets{\Te{N}}}
  \lnamedapp{p[\vec{q}]}{\alpha}{[\vec{u}]}
  \\
  \\
   &
  = &
  \bigcup\limits_{t\in\Te{M}}\bigcup\limits_{\vec{u}\in\,\fmsets{\Te{N}}} \langle t\rangle_\alpha[\vec{u}].
\end{array}\]\qedhere
\end{proof}

\vspace{0.3cm}

\begin{proof}[Proof of Proposition~\ref{prop:LiftAssSimulation}]
For the first part of the proof, we have three subcases, corresponding to the three base-cases of the reduction.

 Subcase $M=(\mu\alpha.\name{\beta}{P})Q, \ N=\mu\alpha.(\name{\beta}{P})_\alpha Q$.

 (1).
If $s\in\Te{(\mu\alpha.\name{\beta}{P})Q}$ then $s=(\mu\alpha.\name{\beta}{p})[\vec{q}]$ for $p\in\Te{P}$ and $[\vec{q}]\in\,\fmsets{\Te{Q}}$.
So $s\rightarrow_{\mathrm{r}} \mu\alpha.\langle \name{\beta}{p} \rangle_\alpha [\vec{q}]\subseteq \Te{\mu\alpha.(\name{\beta}{P})_\alpha Q}$ thanks to Lemma \ref{lm:TaylorBehavesSubst}.

(2). 
If $s'\in\Te{\mu\alpha.(\name{\beta}{P})_\alpha Q}$ then thanks to Lemma~\ref{lm:TaylorBehavesSubst}, $s'\in \mu\alpha.\langle \name{\beta}{p}\rangle_\alpha [\vec{q}]$ for $p\in\Te{P}$ and $[\vec{q}]\in\,\fmsets{\Te{Q}}$.
So $\Te{(\mu\alpha.\name{\beta}{P})Q} \ni (\mu\alpha.\name{\beta}{p})[\vec{q}]\rightarrow_{\mathrm{r}} \mu\alpha.\langle \name{\beta}{p}\rangle_\alpha [\vec{q}] \ni s'$, and $\mu\alpha.\langle \name{\beta}{p}\rangle_\alpha [\vec{q}] \subseteq \Te{\mu\alpha.(\name{\beta}{P})_\alpha Q}$.

Subcase $M=\mu\gamma.\name{\alpha}{\mu\beta.\name{\eta}{P}}, \ N=\mu\gamma.\name{\eta}{P}\set{\alpha/\beta}$.
(1) and (2) are straightforward using Lemma \ref{lm:TaylorBehavesSubst} as above.

Subcase $M=(\lam x.P)Q, \ N=P\set{Q/x}$.
(1) and (2) are straightforward using Lemma \ref{lm:TaylorBehavesSubst} as above.

The ``furthermore'' is by induction on the single-hole context $C$ s.t.\ $M=C\hole{M'}$, $N=C\hole{N'}$ and $M'\rightarrow_\mathrm{base} N'$ (such a $C$ exists because $\rightarrow$ is contextual).
The base case $C=\xi$ coincides with the above case.
The step cases are straightforwardly done  as one does in $\lam$-calculus, treating the case of $\mu$-abstraction exactly as a $\lam$-abstraction.
\end{proof}

\vspace{0.3cm}

\begin{proof}[Proof of Lemma~\ref{lm:ForInjectivity}]
Such a statement may seem strange at first sight, because one would expect (3) (which is the item used in the proof of Theorem~\ref{thm:Injectivity}, together with (1)) to be an indutive step of (2).
However, (3) is \emph{not} an inductive step of (2), simply because $\name{\eta}{p}\notin\lam\mu^{\mathrm{r}}$.
This is due to the fact that we are in $\lam\mu$-calculus and not in Saurin's $\Lambda\mu$-calculus.
One could be then tempted to state it in $\Lambda\mu$-calculus, so only with item (2).
In this case, since $p$ would be in $\Lambda\mu$ and not just in $\lamu$, (3) would in fact be an inductive step of (2), but still this is not what we need: in fact when we use (3) in the present paper, we want $p$ to be in $\lamu$, something which is not guaranteed by the statement in $\Lambda\mu$.
Let us now prove the Lemma.

 (1).
 Induction on $P$, similar to as it is done in \cite{DBLP:journals/tcs/EhrhardR08}.

(2).
 Induction on $P$. Let us see the case $P=\mu\alpha.\name{\beta}{M}$, the other cases being similar.
 Then $p=\mu\alpha.\name{\beta}{s}$ and $p'=\mu\alpha.\name{\beta}{s'}$, with $s,s'\in\Te{M}$.
 Let $h\in\,\lnamedapp{p}{\gamma}{[\vec{d}]}\cap \lnamedapp{p'}{\gamma}{[\vec{d}\,']}$.
 Choosing $\alpha\neq\gamma$, we have two subcases.

 Subcase $\beta\neq\gamma$: then $h=\mu\alpha.\name{\beta}{h_0}$ with $h_0\in \lnamedapp{s}{\gamma}{[\vec{d}]}\cap \lnamedapp{s'}{\gamma}{[\vec{d}\,']}$.
 We can easily conclude by inductive hypothesis.

 Subcase $\beta=\gamma$: then $h=\mu\alpha.\name{\gamma}{h_0[\vec{v}]}$, with $h_0\in \lnamedapp{s}{\gamma}{[\vec{w}]}\cap \lnamedapp{s'}{\gamma}{[\vec{w}\,']}$ and
 where $([\vec{v}],[\vec{w}])$ and $([\vec{v}],[\vec{w}\,'])$ are w.c.\ respectively of $[\vec{d}]$ and of $[\vec{d}\,']$.
 Thus by inductive hypothes we have $s=s'$, i.e. $p=p'$, and also $[\vec{w}]=[\vec{w}\,']$.
 Finally, $[\vec{d}]=[\vec{w}]*[\vec{v}]=[\vec{w}\,']*[\vec{v}]=[\vec{d}\,']$.
 
(3).
 Let $\alpha\neq\gamma$ and fresh. Then
 $\lnamedapp{\name{\eta}{p}}{\gamma}{[\vec{d}]}\cap \lnamedapp{\name{\eta}{p'}}{\gamma}{[\vec{d}\,']}\neq\emptyset$
 iff
 $\mu\alpha.\lnamedapp{\name{\eta}{p}}{\gamma}{[\vec{d}]}\cap \mu\alpha.\lnamedapp{\name{\eta}{p'}}{\gamma}{[\vec{d}\,']}\neq\emptyset$.
 The latter condition is
 $\lnamedapp{\mu\alpha.\name{\eta}{p}}{\gamma}{[\vec{d}]}\cap\lnamedapp{\mu\alpha.\name{\eta}{p'}}{\gamma}{[\vec{d}\,']}$, so it is of the shape considered by (2), and $\mu\alpha.\name{\eta}{p},\mu\alpha.\name{\eta}{p'}\in\lamu$. Therefore (2) gives $\mu\alpha.\name{\eta}{p}=\mu\alpha.\name{\eta}{p'}$, i.e. $p=p'$, as well as $[\vec{d}]=[\vec{d}\,']$.\qedhere
\end{proof}

\vspace{0.3cm}

\begin{proof}[Proof of Proposition~\ref{lamu-prop:FIRSTPART}]
We have the three base-case reductions:

Case $t=(\mu\alpha.\name{\beta}{p})[\vec{q}]$ and $\Sum{T}'=\mu\alpha.\langle \name{\beta}{p}\rangle_\alpha [\vec{q}]$.
So it must be $M=(\mu\alpha.\name{\beta}{P})Q$ with $p\in\Te{P}$ and $[\vec{q}]\in\,\fmsets{\Te{Q}}$. Now thanks to Lemma \ref{lm:TaylorBehavesSubst} we can take $N:=\mu\alpha.(\name{\beta}{P})_\alpha Q$.

Case $t=(\lam x.p)[\vec{q}]$ and $\Sum{T}'=p\langle[\vec{q}]/x\rangle$, or case $t=\mu\gamma.\name{\alpha}{\mu\beta.\name{\eta}{p}}$ and $\Sum T'=\mu\gamma.\name{\eta}{p}\set{\alpha/\beta}$.
Exactly as above, thanks to Lemma \ref{lm:TaylorBehavesSubst}.
\end{proof}

\vspace{0.3cm}

We now turn to Corollary~\ref{cor:Ass3+Prop+Cor}.
It relies on a result that we did not report in the main paper: Proposition~\ref{prop:ReduceSum}, which appears in the Appendix A.5.
It is the generalisation of Proposition~\ref{lamu-prop:FIRSTPART} mentioned in the proof of Corollary~\ref{cor:Ass3+Prop+Cor} in the main paper.
This result requires some discussion, so we postpone its proof and dedicate the whole Appendix A.5. to it.
In the meantime, below, we prove Corollary~\ref{cor:Ass3+Prop+Cor} using Proposition~\ref{prop:ReduceSum}.

\begin{proof}[Proof of Corollary~\ref{cor:Ass3+Prop+Cor}]
We prove, by induction on $\ell\in\N$, that if $\ell$ is the length of a maximal reduction (let's call it $\phi$) from a sum $\Sum{T}$ to its normal form $\nf[\mathrm{r}]{\Sum T}$, then the statement of Corollary \ref{cor:Ass3+Prop+Cor} holds.
If $\ell = 0$ then just take $N := M$.
If $\ell\geq1$, then $\phi$ factorizes as $\Sum T\to_{\mathrm{r}} \Sum T'\msto[\mathrm{r}] \nf[\mathrm{r}]{\Sum{T}}$ for some sum $\Sum T'$.
Since $\Sum T\in\Te{M}$ by hypothesis, by Proposition~\ref{prop:ReduceSum} there exist $N'\in\lamu$ and $\Sum T''\subseteq\Te{N'}$ such that $M\to N'$ and $\Sum T'\msto[\mathrm{r}] \Sum T''$. Let $k\geq0$ be the length of this last reduction. By confluence, we have also $\Sum T''\msto[\mathrm{r}]\nf[\mathrm{r}]{\Sum T}$. Now take the maximal reduction from $\Sum T''$ to $\nf[\mathrm{r}]{\Sum T''}=\nf[\mathrm{r}]{\Sum T}$ and let $\ell'$ be its length. Due to the maximality of $\phi$, it must be $\ell'+k+1\leq\ell$, so $\ell'<\ell$, and now we can apply the inductive hypothesis to $\ell'$ (because it is the maximal length of a reduction between a sum $\Sum{T''}$ and its normal form). Since we already found that $\Sum{T''}\subseteq\Te{N'}$, we get an $N\in\lamu$ such that $M\to N'\msto N$ and $\nf[\mathrm{r}]{\Sum T}=\nf[\mathrm{r}]{\Sum T''}\subseteq\Te{N}$. 
\end{proof}

\subsection{APPENDIX - SECTION 3.2}

We give proofs of Lemma~\ref{lamu-lm:Reviewer} and of Lemma~\ref{lamu-lm:TaylorCommutesHead}.

\begin{proof}[Proof of Lemma~\ref{lamu-lm:Reviewer}]
If $t$ is a hnf, we can easily conclude: since any eventual bag of $s$ is empty, and since reductions cannot erase non-empty bags (because \emph{linear}), the fact that $s\in\nf[\mathrm{r}]{t}$ entails that already $t$ contains only empty bags.
But in a hnf the reduction can only take place inside some bag, so it actually must be $s=t$.
Therefore, $s \in s = t = \mathrm{H}^0(t)$ and we are done taking $n:=0$.

So we are left with the case in which $t$ is \emph{not} hnf.
In this case we know that $t \to_{\mathrm{r}} \mathrm{H}(t)$.
By confluence, $\mathrm{H}(t) \msto[\mathrm{r}] \nf[\mathrm{r}]{t}$.
But since $s\in\nf[\mathrm{r}]{t}$, there is some $t_1\in\mathrm{H}(t)$ s.t.\  $s\in\nf[\mathrm{r}]{t_1}$.
Now we can reason as in the beginning, splitting in two cases:
either $t_1$ is a hnf, in which case we reason exactly as in the first four lines of the proof: by linearity we get $s = t_1 \in \mathrm{H}(t)$, and we are done taking $n:=1$.
Or $t_1$ is not a hnf.
In this case we can reason again as before, obtaining a $t_1 \to_{\mathrm{r}} \mathrm{H}(t_1)$, $\mathrm{H}(t_1) \msto[\mathrm{r}] \nf[\mathrm{r}]{t_1}$ and a $t_2\in\mathrm{H}(t_1)$ s.t.\ $s\in\nf[\mathrm{r}]{t_2}$.
We can keep going with the same splits: if $t_2$ is hnf, we are done taking $n:=2$; if $t_2$ is not hnf, we obtain a new $t_3$ as before.
Now, the generation of such a new $t_{i+1}$ from the previously generated $t_i$ cannot continue forever: we claim that there must be some $m\in\N$ for which $t_m$ is hnf. 
If this is the case, the proof is concluded because we can take $n:=m$ as already mentioned.
To see that such an $m$ does exist, one simply remarks that at each time we have $t_{i} \to_{\mathrm{r}} \mathrm{H}(t_i)$ and $t_{i+1}\in\mathrm{H}(t_i)$.
But the well-founded measure $\ms{\cdot}$ is strictly decreasing along reductions, which precisely means that we obtain the strictly decreasing sequence:
\[\ms{t} > \ms{t_1} > \ms{t_2} > \cdots \]
Therefore, it must terminate (because the order is well-founded) on some $\ms{t_m}$, for some $m\in\N$, and we are done as already explained.
\end{proof}

The previous proof could clearly be given in an inductive way, but we chose to give it in this way to (maybe) let the reader better see the reasoning.

\vspace{0.2cm}

\begin{proof}[Proof of Lemma~\ref{lamu-lm:TaylorCommutesHead}]
 We have to show that $\Te{\mathrm{H}(M)}=\bigcup\limits_{t\in\Te{M}} \mathrm{H}(t)$.
 By Lemma \ref{lamu-lm:writingM} we know that: \[M=\lam \vec{x_1}.\mu\alpha_1.\name{\beta_1}{\dots\lam \vec{x_k}.\mu\alpha_k.\name{\beta_k}{RQ_1\dots Q_n}}\]
with the condition that either there is a $\rho$-redex in the head of $M$, or there is no such $\rho$-redex and $R$ is \emph{not} a variable (thus $R$ is either a $\lam$-redex or a $\mu$-redex). 
We have just said in a different fashion that $M$ is \emph{not} a hnf.
 Now let us show the two inclusions.

($\subseteq$).
 Take $s\in\Te{\mathrm{H}(M)}$. We have three cases:
 
 Case in which there is a $\rho$-redex in the head of $M$. Therefore there is also a leftmost $\rho$-redex $\dots\name{\beta_i}{\mu\alpha_{i+1}.\dots}$ in the head of $M$.
 Then $\mathrm{H}(M)$ is the term: 
 \[\lam \vec{x_1}.\mu\alpha_1.\name{\beta_1}{\dots\lam \vec{x_i}.\mu\alpha_i.\name{\beta_{i+1}}{\dots
 \lam\vec{x}_{k}\mu\alpha_{k}.\name{\beta_k}{R\vec{Q}}
 }\set{\beta_i/\alpha_{i+1}}
 }.\]
 So $s$ is the term:
\[\begin{small}\lam \vec{x_1}.\mu\alpha_1.\name{\beta_1}{\dots\lam \vec{x_i}.\mu\alpha_i.\name{\beta_{i+1}}{\dots
 \lam\vec{x}_{k}\mu\alpha_{k}.\name{\beta_k}{r[\vec{q}^1]\dots[\vec{q}^n]}
 }\set{\beta_i/\alpha_{i+1}}
 }\end{small}\]
 for $r\in\Te{R}$ and $[\vec{q}^{\,i}]\in\fmsets{\Te{Q_i}}$.
 But $\Te{M}$ contains the element: \[\lam \vec{x_1}.\mu\alpha_1.\name{\beta_1}{\dots\lam \vec{x_k}.\mu\alpha_k.\name{\beta_k}{r[\vec{q}^1]\dots[\vec{q}^n]}} =: t\] and thus $s=\mathrm{H}(t)$, i.e. $s\in\bigcup\limits_{t\in\Te{M}} \mathrm{H}(t)$.
 
 Case there are no $\rho$-redexes and $R = (\mu \gamma.\name{\eta}{P})D$.
 Then $\mathrm{H}(M)=\vec{\lamu}.\mid (\mu\gamma.\namedapp{\name{\eta}{P}}{\gamma}{D})\vec{Q} \mid$.
 So by Lemma \ref{lm:TaylorBehavesSubst} $s\in\vec{\lamu}.\mid (\mu\gamma.\lnamedapp{\name{\eta}{p}}{\gamma}{[\vec{d}]})[\vec{q}^1]\dots[\vec{q}^n] \mid$ for a $p\in\Te{P}$, $[\vec{d}]\in\fmsets{\Te{D}}$ and $[\vec{q}^{\,i}]\in\,\fmsets{\Te{Q_i}}$.
 So $s\in\mathrm{H}(t)$, with 
 $t$ begin the term: \[\lam \vec{x_1}.\mu\alpha_1.\name{\beta_1}{\dots\lam \vec{x_k}.\mu\alpha_k.\name{\beta_k}{(\mu \gamma.\name{\eta}{p})[\vec{d}][\vec{q}^1]\dots[\vec{q}^n]}}\] which belongs to $\Te{M}$, i.e. $s\in\bigcup\limits_{t\in\Te{M}} \mathrm{H}(t)$.

 Case there are no $\rho$-redexes and $R = (\lam x.P)D$.
 Exactly as above using Lemma \ref{lm:TaylorBehavesSubst}.
 
($\supseteq$).
 One can follow the exact same kind of argument as before: the fact that Taylor expansion preserves the structure of the term, plus Lemma \ref{lm:TaylorBehavesSubst}, is what makes us able to transport one step of the head reduction from terms to resource terms.\qedhere
\end{proof}

\subsection{APPENDIX - Rigids associated with a resource term}\label{sec:rigid}

This appendix section does not correspond to a section in the main paper.
Its aim is to prove Proposition~\ref{prop:ReduceSum}, a result that is needed in the proof of Corollary~\ref{cor:Ass3+Prop+Cor}, as previously explained in Appendix A.3.

We report here a proof involving the ``rigid resource terms'': words built in the same way as resource terms but taking lists instead of bags (multiset of terms).

We present these constructions in detail also because they are needed in the proof of Stability (Theorem~\ref{lamu-th:TeStability}), which is done exactly as in \cite[Theorem 5.11]{DBLP:journals/pacmpl/BarbarossaM20}, and whose detailed proof is given in the next Appendix A.6 for the seek of completeness; in particular, the following Lemma \ref{lm:NeedInStability} and Lemma \ref{lm:Te_of_contexts}, involving these constructions, are used in the proof of Theorem~\ref{lamu-th:TeStability}.

\begin{Definition}
The set of \emph{rigid terms} is defined by:
\[
 t::=
 x \mid
 \lam x.t \mid
 \mu\alpha.\name{\beta}{t} \mid
 t\langle t\dots,t\rangle \qquad \textit{(here $\langle \dots \rangle$ means a list).}
\]
The set of rigid $k$-context is defined as expected adding the clause ``$\xi_1 \mid \dots \mid \xi_k$'' for the holes.
\end{Definition}

\begin{Definition}
Let $c$ be a resource-$k$-context. We define a set $\mathrm{Rigid}(c)$ of \emph{rigid $k$-contexts}, whose elements are called the \emph{rigids of $c$}, by induction on $c$ as follows:
\begin{enumerate}
 \item $\mathrm{Rigid}(\xi_i)=\set{\xi_i}$
 \item $\mathrm{Rigid}(x)=\set{x}$
 \item $\mathrm{Rigid}(\lam x.c_0)=\set{\lam x.\linc_0\mid \linc_0\in\mathrm{Rigid}(c_0)}$
 \item $\mathrm{Rigid}(\mu\alpha.\name{\beta}{c_0})=\set{\mu\alpha.\name{\beta}{\linc_0}\mid \linc_0\in\mathrm{Rigid}(c_0)}$
 \item $\mathrm{Rigid}(c_0[c_1,\dots,c_k])=\set{\,\linc_0\langle\linc_{\sigma(1)}\dots,\linc_{\sigma(k)}\rangle
 \mid
 \linc_i\in\mathrm{Rigid}(c_i)\textit{ and }
 \sigma \textit{ permutation on }k \textit{ elements }}$.
\end{enumerate}
\end{Definition}

The rigids of a resource $k$-context $c$ are the rigid $k$-contexts which can be ``canonically associated'' with $c$.
Rigid-resource-calculus has been considered, e.g., in \cite{DBLP:journals/corr/abs-2008-02665} and its study sheds light on the combinatorial role of the factorial coefficients in the full (i.e. quantitative) Taylor expansion of a term.
Our set $\mathrm{Rigid}(c)$ is the preimage $F^{-1}(c)$ of $c$ under the surjection $F$\footnote{In \cite{DBLP:journals/corr/abs-2008-02665} $F$ is called $\parallel .\parallel$, but here write $F$, for ``forgetful''.} from rigid contexts to resource context simply forgetting the order of the lists.
Its graph is what, in \cite{DBLP:journals/corr/abs-2008-02665}, is called the ``representation relation''.
In the following definition we precisely operate such a forgetful operation, but in addition we consider terms filling the holes.

\begin{Definition}\label{def:rigidtores}
Let $\linc$ be a rigid of a resource-$k$-context $c$ and, for $i=1,\dots,k$, let $\vec{v}^{\,i}:=\langle v_1^i,\dots,v_{\dg{\xi_i}{c}}^i\rangle$ be a list\footnote{If $\dg{\xi_i}{c}=0$ we mean the empty list.} of resource terms.
We define, by induction on $c$, a resource term $\linc\hole{\vec{v}^1,\dots,\vec{v}^k}$ as follows:
\begin{enumerate}
 \item If $c=\xi_i$ then $\linc=\xi_i$; we set $\linc\hole{\langle\rangle,\dots,\langle\rangle,\langle v^i_1\rangle,\langle\rangle,\dots,\langle\rangle}$ to be $v^i_1$.
 \item If $c=x$ then $\linc=x$; we set $\linc\hole{\langle\rangle,\dots,\langle\rangle}:=x$.
 \item If $c=\lam x.c_0$ then $\linc=\lam x.\linc_0$ where $\linc_0$ is a rigid of $c_0$; we set $\linc\hole{\vec{v}^1,\dots,\vec{v}^k}=\lam x.\linc_0\hole{\vec{v}^1,\dots,\vec{v}^k}$.
 \item If $c=\mu\alpha.\name{\beta}{c_0}$ then $\linc=\mu\alpha.\name{\beta}{\linc_0}$ where $\linc_0$ is a rigid of $c_0$; we set $\linc\hole{\vec{v}^1,\dots,\vec{v}^k}=\mu\alpha.\name{\beta}{\linc_0\hole{\vec{v}^1,\dots,\vec{v}^k}}$.
 \item If $c=c_0[c_1,\dots,c_n]$, then $\linc=\linc_0\langle \linc_{\sigma(1)},\dots,\linc_{\sigma(n)} \rangle$ where $\linc_i$ is a rigid of $c_i$.
 So each list $\vec{v}^{\,i}$ factorizes as a concatenation $\vec{w}^{\,i0}\vec{w}^{\,i1}\cdots\vec{w}^{\,in}$ of lists where $\vec{w}^{\,ij}$ has exactly $\dg{\xi_i}{c_j}$ elements\footnote{This is a concise way of saying that we take $\vec{w}^{\,i1}:=\langle v^i_1,\dots,v^i_{\dg{\xi_i}{c_1}} \rangle$, $\vec{w}^{\,i2}:=\langle v^i_{1+\dg{\xi_i}{c_1}},\dots,v^i_{\dg{\xi_i}{c_2}+\dg{\xi_i}{c_1}} \rangle$ etc.}; we set
$\linc\hole{\vec{v}^1,\dots,\vec{v}^k}$ to be the term:
\[\begin{array}{c}
\linc_0\hole{\vec{w}^{10},\dots,\vec{w}^{k0}}[\linc_{\sigma(1)}\hole{\vec{w}^{\,11},\dots,\vec{w}^{\,k1}},\\
\qquad\dots,\linc_{\sigma(n)}\hole{\vec{w}^{\,1n},\dots,\vec{w}^{\,kn}}].
 \end{array}\]
\end{enumerate}
\end{Definition}

\begin{Remark}
One clearly has that if $v\msto[r]\Sum V$ then:
 \[\linc\hole{\dots,\langle \dots,v,\dots \rangle,\dots}\]
  $\msto[r]$-reduces to:
  \[\sum\limits_{w\in\Sum V}
  \linc\hole{\dots,\langle \dots,w,\dots \rangle,\dots}=:\linc\hole{\dots,\langle \dots,\Sum V,\dots \rangle,\dots}.
 \]
\end{Remark}

Let us extend the definition of Taylor expansion to resource $k$-contexts by adding, in its definition, the clause:
\[
 \Te{\xi_i}:=\set{\xi_i}.
\]
It is clear that is $C$ is a $k$-context then all elements of $\Te{C}$ are resource $k$-contexts.

In the following, if $\vec{v}$ is a list, we denote with $[\vec{v}]$ the multiset associated with $\vec{v}$ (same elements but disordered).

\begin{Lemma}\label{lm:NeedInStability}
 Let $C$ be a $k$-context and $c_1,c_2\in\Te{C}$.
 Let $\linc_1$ and $\linc_2$ rigids respectivly of $c_1$ and $c_2$.
 For $i=1\dots,k$, let $\vec{v}^i=\langle v^i_1,\dots,v^i_{\dg{\xi_i}{c_1}} \rangle$ and $\vec{u}^i=\langle v^i_1,\dots,v^i_{\dg{\xi_i}{c_2}} \rangle$ be lists of resource terms.
 If $\linc_1\hole{\vec{v}^1,\dots,\vec{v}^k}=\linc_2\hole{\vec{u}^1,\dots,\vec{u}^k}$ then $c_1=c_2$ and $[\vec{v}^{\,i}]=[\vec{u}^{\,i}]$ for all $i$.
\end{Lemma}
\begin{proof}
 Induction on $C$.
 
 Case $C=\xi_i$.
 Then $c_1=\xi_i=c_2$, and $\vec{v}^{\,i}=\langle v^{i1}\rangle$, $\vec{u}^{\,i}=\langle u^{i1}\rangle$ and $\vec{v^j}=\langle\rangle=\vec{u^j}$ for $j\neq i$.
 So $v^{i1}=\linc_1\hole{\vec{v}^1,\dots,\vec{v}^k}=\linc_2\hole{\vec{u}^1,\dots,\vec{u}^k}=u^{i1}$.
 
 Case $C=x$.
 Trivial.
 
 Case $C=\lam x.C_0$ and case $C=\mu\alpha.\name{\beta}{C_0}$.
 Trivial by inductive hypothesis.
 
 Case $C=C'C''$.
 Then, for $i=1,2$, one has $c_i=c_{i0}[c_{i1},\dots,c_{in_{i}}]$ with $c_{i0}\in\Te{C'}$ and $c_{ij}\in\Te{C''}$ for $j\geq 1$.
 So $\linc_i=\linc_{i0}\langle \linc_{i\sigma_i{(1)}},\dots,\lam x.\linc_{i\sigma_i{(n_{i})}}\rangle$ where $\sigma_i$ is a permutation on $n_{i}$ elements.
 So $\linc_1\hole{\vec{v}^1,\dots,\vec{v}^k}$ is the term:
 \[\begin{array}{c}
 \linc_{10}\hole{\vec{w}^{110},\dots,\vec{w}^{1k0}}[\,\linc_{1\sigma_1{(1)}}\hole{\vec{w}^{111},\dots,\vec{w}^{1k1}},\\
\qquad\dots,\linc_{1\sigma_1{(n_1)}}\hole{\vec{w}^{11{n_1}},\dots,\vec{w}^{1k{n_1}}}]
\end{array}\]
and $\linc_2\hole{\vec{v}^1,\dots,\vec{v}^k}$ is the term:
\[\begin{array}{c}
 \linc_{20}\hole{\vec{w}^{210},\dots,\vec{w}^{2k0}}[\,\linc_{2\sigma_2{(1)}}\hole{\vec{w}^{211},\dots,\vec{w}^{2k1}},\\
\qquad\dots,\linc_{2\sigma_2{(n_2)}}\hole{\vec{w}^{21{n_2}},\dots,\vec{w}^{2k{n_2}}}]
\end{array}\]
where the concatenation $\vec{w}^{1j1}\cdots\vec{w}^{1j{n_1}}$ gives $\vec{v}^{j}$ and the concatenation $\vec{w}^{2j1}\cdots\vec{w}^{2j{n_2}}$ gives $\vec{u}^{j}$.
 From $\linc_1\hole{\vec{v}^1,\dots,\vec{v}^k}=\linc_2\hole{\vec{u}^1,\dots,\vec{u}^k}$ we get that: $n_1=n_2=:n$, that:
 \begin{equation}\label{eq:c0}
 \linc_{10}\hole{\vec{w}^{110},\dots,\vec{w}^{1k0}}=\linc_{20}\hole{\vec{w}^{210},\dots,\vec{w}^{2k0}}
 \end{equation}
 and that there exist a permutation $\rho$ on $n$ elements which identifies each term of the writte bag of $\linc_1\hole{\vec{v}^1,\dots,\vec{v}^k}$ with the respective one of the writtten bag of $\linc_2\hole{\vec{u}^1,\dots,\vec{u}^k}$. That is, for all $h=1,\dots,n$, one has:
 \begin{equation}\label{eq:ci}
 \linc_{1\sigma_1(h)}\hole{\vec{w}^{\,1\,1\,
 {h}
 },\dots,\vec{w}^{\,1\,k\,{h}}}
 =
 \linc_{2\sigma_2(\rho(h))}\hole{\vec{w}^{\,2\,1\,{\rho(h)}},\dots,\vec{w}^{\,2\,k\,{\rho(h)}}}.
 \end{equation}
 Now the inductive hypothesis on (\ref{eq:c0}) gives $\linc_{10}=\linc_{20}$ as well as $\vec{w}^{\,1\,i\,0
 }=\vec{w}^{\,2\,i\,0}$, and the inductive hypothesis on (\ref{eq:ci}) gives, at the end of the day, $[\linc_{11},\dots,\linc_{1n}]=[\linc_{21},\dots,\linc_{2n}]$ as well as $[\vec{w}^{1j1}\cdots\vec{w}^{1j{n}}]=[\vec{w}^{2j1}\cdots\vec{w}^{2j{n}}]$ for $j\geq 1$.
Putting these things together, we have the desired result.
\end{proof}

\begin{Lemma}\label{lm:Te_of_contexts}
\begin{enumerate}
  \item\label{lem:Te_of_contexts1} 
  Let $C$ be a $k$-context.\\
  Then $\Te{C\hole{M_1,\dots,M_k}}$ is the set:
  \[\begin{array}{c}
\set{\linc\hole{\vec{s}^{\,1},\dots,\vec{s}^{\,k}} \stt c\in\Te{C}, \ \linc \textit{ rigid of }c\\
\qquad\textit{ and }\vec{s}^{\,i}\textit{ list of elements of }\Te{M_i}}.
\end{array}\]
  %\item\label{lem:Te_of_contexts2}  
  %Let $C$ be a $k$-context and fix $i\in\set{1,\dots,k}$.
  %If $c\in\Te{C}$ and $\mathrm{deg}_{\xi_i} c=0$, then for all $\linc$ rigid of $c$, for all $M_1,\dots,M_k\in\lamu$, for all lists $\vec{s}^j=\langle s^j_1,\dots,s^j_{\mathrm{deg}_{\xi_j} c}\rangle$ of elements of $\Te{M_j}$ ($j=1,\dots,i-1,i+i,\dots,k$), one has:
  %\[\linc\hole{\vec{s}^1,\dots,\vec{s}^{\,i-1},\langle\rangle,\vec{s}^{\,i+1},\dots,\vec{s}^k}\in\Te{C\hole{M_1,\dots,M_k}}.\]
  \item\label{lem:Te_of_contexts3}
  Let $c=c\hole{\xi}$ single-hole resource context, $M\in\lamu$ and $s_0\in\lamu^{\mathrm{r}}$.
  If $c\hole{s_0}\in\Te{M}$, then there is a context $C=C\hole{\xi}$, an $N\in\lamu$, a resource context $\widetilde{c}\in\Te{C}$, a rigid $\widetilde{c}^{\,\bullet}$ of $c$ and $s_1\dots,s_{\mathrm{deg}_{\xi}{\widetilde{c}}-1}\in\Te{N}$ s.t.\
  \begin{enumerate}
   \item $M=C\hole{N}$
   \item $s_0\in\Te{N}$
   \item $c\hole{t}=\widetilde{c}^{\,\bullet}\hole{\langle t,s_1\dots,s_{\dg{\xi}{\widetilde{c}}-1}\rangle }$ for all $t\in\lamu^{\mathrm{r}}$.
  \end{enumerate}
\end{enumerate}
\end{Lemma}
\begin{proof}
(1).
 Straightforward induction on $C$.
 
\noindent
(2). Induction on the single-hole resource context $c$.

\noindent Case $c=\xi$. Take $C:=\xi$, $N:=M$, $\widetilde{c}:=\xi$, $\widetilde{c}^{\,\bullet}:=\xi$ (there are no $s_i$'s because $\dg{\xi}{\widetilde{c}}-1=0$).

\noindent Case $c=\mu\alpha.\name{\beta}{c_1}$.
Since $c\hole{s_0}\in\Te{M}$ then $M=\mu\alpha.\name{\beta}{M_1}$ with $c_1\hole{s_0}\in\Te{M_1}$. We can thus take $C:=\mu\alpha.\name{\beta}{C_1}$, $\widetilde{c}:=\mu\alpha.\name{\beta}{\widetilde{c_1}}$, $\widetilde{c}^{\,\bullet}:=\mu\alpha.\name{\beta}{\widetilde{c_1}^\bullet}$, 
where $C_1,N,\widetilde{c_1},\widetilde{c}_1^{\,\bullet}$ and $s_1,\dots,s_{\dg{\xi}{\widetilde{c}_1}}$ are given by the inductive hypothesis.

\noindent Case $c=\lam x.c_1$. Exactly as the case of $\mu$-abstraction.

\noindent Case $c=c'[\vec{u}]$. Analogous as above.

\noindent Case $c=u[c',u_1,\dots,u_n]$.
Since $u[c'\hole{s_0},\vec{u}]=c\hole{s_0}\in\Te{M}$ then $M$ must have shape $M=PQ$ with $u\in\Te{P}$ and $c'\hole{s_0},u_i\in\Te{Q}$.
By induction hypothesis, we can write $Q=C_0\hole{N}$ for an appropriate context $C_0$ and $N\in\lamu$ s.t.\
$s_0\in\Te{N}$, and also there is a resource context $c_0\in\Te{C_0}$ and $\linc_0$ a rigid of $c_0$, together with a list $\vec{s}^{\,0}:=\langle s_1^0,\dots,s_{\dg{\xi}{c_0}-1}^0\rangle$ of elements of $\Te{N}$, such that $c'\hole{t}=\linc_0\hole{\langle t,\vec{s}^{\,0}\rangle}$ for all $t\in\lamu^{\mathrm{r}}$.
But $u_i\in\Te{Q}=\Te{C_0\hole{N}}$, so by the point (1), each $u_i$ have shape $u_i=\linc_i\hole{\vec{s}^{\,i}}$ for an appropriate $\linc_i$ rigid of some $c_i\in\Te{C_0}$ and some $\vec{s}^{\,i}$ list of elements of $\Te{N}$. 
Now remark that $M=C\hole{N}$ for $C:=PC_0$.
Moreover, putting $\widetilde{c}:=u[c_0,c_1,\dots,c_n]\in\Te{C}$ and choosing its rigid $\widetilde{c}^{\,\bullet}:=u^\bullet\langle \linc_0,\linc_1,\dots,\linc_k\rangle$ (which rigid $u^\bullet$ of $u$ one choses does not matter), we have:
$c\hole{t}=u[c'\hole{t},u_1\dots,u_n]
=
u[
\linc_0\hole{\langle t,\vec{s}\rangle}
,\linc_1\hole{\vec{s}^{\,1}}
,\dots,
\linc_k\hole{\vec{s}^{\,k}}
]=
\widetilde{c}^{\,\bullet}\hole{\langle t,\vec{s}\rangle}$ for all $t\in\lam^{\mathrm{r}}$, where the list $\vec{s}$ is the concatenation $\vec{s}^{\,0}\vec{s}^{\,1}\cdots\vec{s}^{\,n}$ is a list of $\dg{\xi}{\widetilde{c}}-1$ elements of $\Te{N}$.
\end{proof}

We can now finally prove the full Proposition~\ref{prop:ReduceSum}.

\begin{Proposition}\label{prop:ReduceSum}
If $\Te{M}\supseteq\Sum{T}\rightarrow_\mathrm{r}\Sum{T}'$ then there is $N\in\lamu$ and a sum $\tilde{\Sum{T}}\subseteq\Te{N}$ s.t.\ $M\rightarrow N$ and $\Sum{T}'\msto[\mathrm{r}]\tilde{\Sum{T}}$.
\end{Proposition}
\begin{proof}
Saying that $\Sum{T}\rightarrow_{\mathrm{r}}\Sum{T}'$ means that $\Sum T$ has shape $\Sum T=\sum\limits_{i}t_i+c\hole{h}$ and $\Sum{T}'$ has shape $\Sum{T}'=\sum\limits_{i}t_i+c\hole{\mathcal{H}}$, for some single-hole resource context $c$, a resource term $h$ and a sum $\mathcal{H}$ s.t.\ $h\to_{\mathrm{base}}\mathcal{H}$.
But since $c\hole{h}\in\Sum T\subseteq\Te{M}$, by Lemma \ref{lm:Te_of_contexts}(\ref{lem:Te_of_contexts3}) we get a context $C_0$, a term $N'\in\lamu$, a resource context $c_0\in\Te{C_0}$, a rigid $c_0^{\,\bullet}$ of $c_0$ and resource terms $\vec{s}\in\Te{N'}$ s.t.\ $M=C_0\hole{N'}$, $h\in\Te{N'}$ and $c_0\hole{u}=c_0^{\,\bullet}\hole{u,\vec{s}}$ for all $u\in\lamu^{\mathrm{r}}$.
Now we can apply Proposition~\ref{lamu-prop:FIRSTPART} to $h\in\Te{N'}$ obtaining an $N''\in\lamu$ s.t.\ $N'\rightarrow N''$ and $\mathcal{H}\subseteq\Te{N''}$.
Set $N:=C_0\hole{N''}$, so that $M=C_0\hole{N'}\rightarrow N$.
Now:
every $t_i\in\Te{M}=\Te{C_0\hole{N'}}$, so by Lemma \ref{lm:Te_of_contexts}(\ref{lem:Te_of_contexts1}) it must have shape $t_i=c_i^{\,\bullet}\hole{\vec{v}_i}$ for some resource terms $v_{ij}\in\Te{N'}$, a context $c_i\in\Te{C_0}$ and a rigid $c_i^{\,\bullet}$ of $c_i$. But since $N'\rightarrow N''$ we can apply Proposition~\ref{prop:LiftAssSimulation}(1) on $v_{ij}$ and obtain that $v_{ij}\msto[r]\Sum{V}_{ij}$ for some sum $\Sum{V}_{ij}\subseteq\Te{N''}$. So $t_i\msto[r] c_i^{\,\bullet}\hole{\vec{\Sum{V}}_i}$. Let's call $\Sum{T}_i:=c_i^{\,\bullet}\hole{\vec{\Sum{V}}_i}$. Using again Lemma \ref{lm:Te_of_contexts}(\ref{lem:Te_of_contexts1}) one has that $\Sum{T}_i\subseteq\Te{N}$. Now, let's use again Proposition~\ref{prop:LiftAssSimulation}(1), this time on $s\in\Te{N'}$. Since $N'\rightarrow N''$ we obtain sums $\Sum{S}_i\subseteq{\Te{N''}}$ s.t.\ $s_i\msto[r]\Sum{S}_i$. So we have: $c\hole{\mathcal{H}}=c_0^{\,\bullet}\hole{\mathcal{H},\vec{s}}\msto[r] c_0^{\,\bullet}\hole{\mathcal{H},\vec{\Sum{S}}}=:\Sum{U}$. But since $\mathcal{H}\subseteq\Te{N''}$ and every $\Sum{S}_i\subseteq\Te{N''}$, again thanks to Lemma \ref{lm:Te_of_contexts}(\ref{lem:Te_of_contexts1}) one has $\Sum U\subseteq\Te{C_0\hole{N''}}=\Te{N}$. This ends the proof, since letting $\tilde{\Sum{T}}:=\sum\limits_i \Sum{T}_i+\Sum U\subseteq\Te{N}$ one has $\Sum{T}'\msto[r]\tilde{\Sum{T}}$.
\end{proof}

\subsection{APPENDIX - SECTION 4.1}

We give the proof of Theorem~\ref{lamu-th:TeStability}.
The proof is taken from~\cite{DBLP:journals/pacmpl/BarbarossaM20}.

\begin{proof}[Proof of Theorem~\ref{lamu-th:TeStability}]
Since every $\cX_i$ is $\mathcal{T}$-bounded, for $i=1,\dots,n$ there exists $L_i\in\Lam$ s.t.\ $\bigcup_{N\in\cX_i}\NFT{N}\subseteq\NFT{L_i}.$
Fix now $M_1,\dots,M_n\in\Lam$ s.t.\ $\NFT{M_i} = \bigcap\limits_{N\in\cX_i}\NFT{N}$. 
We have to show that:
\[
	\NFT{C\hole{M_1,\dots,M_n}} = \bigcap_{N_1\in\cX_1}\dots\bigcap_{N_n\in\cX_n}\NFT{C\hole{N_1,\dots,N_n}}.
\]

$(\subseteq)$.
 Clearly, for all $i=1,\dots,n$ and $N_i\in\cX_i$, we have $\NFT{M_i}\subseteq \NFT{N_i}$, therefore we conclude that: \[\NFT{C\hole{M_1,\dots,M_n}}\subseteq\NFT{C\hole{N_1,\dots,N_n}}\] by Monotonicity (Theorem~\ref{thm: Monotonicity}).

$(\supseteq)$.
 Let $t\in\bigcap\limits_{\vec N\in\vec \cX}\NFT{C\hole{N_{1},\dots,N_{n}}}$ (where we put $\vec N:=(N_1,\dots,N_n)$ and $\vec{\cX}:=(\cX_1,\dots,\cX_n)$).
 For every $\vec N\in\vec\cX$, by Lemma~\ref{lm:Te_of_contexts} there exist an $n$-resource-context $c_{\vec N}\in\Te{C}$ and, for every $i=1,\dots,n$, a list $\vec v^{\,i}_{\vec N} = \langle v^{i1}_{\vec N},\dots,v^{id_i}_{\vec N} \rangle$ (where $d_i:=\dg{\xi_i}{c_{\vec N}}$) of elements of $\Te{N_{i}}$ and such that $t\in\nf[\mathrm{r}]{\linc_{\vec N}\hole{\vec v^{\,1}_{\vec N},\dots,\vec v^{\,n}_{\vec N}}}$, for $\linc_{\vec{N}}$ a rigid of $c_{\vec{N}}$.
 Fix any reduction from $\linc_{\vec N}\hole{\vec v^{\,1}_{\vec N},\dots,\vec v^{\,n}_{\vec N}}$ to its normal form, and confluence allows to factorize it as follows:
 \[\begin{array}{l}
  \linc_{\vec N}\hole{\nf[\mathrm{r}]{v^{11}_{\vec N}},\dots,\nf[\mathrm{r}]{v^{1d_1}_{\vec N}},\dots,\nf[\mathrm{r}]{v^{n1}_{\vec N}},\dots,\nf[\mathrm{r}]{v^{nd_n}_{\vec N}}}
  \\
  \msto[\mathrm{r}]
  \\
  \nf[\mathrm{r}]{\linc_{\vec N}\hole{\vec v^{\,1}_{\vec N},\dots,\vec v^{\,n}_{\vec N}}}
  \ni t.
 \end{array}\]
 So for all $i=1,\dots,n$ and $j=1,\dots,d_i$, there exist $w^{ij}_{\vec N}\in\nf[\mathrm{r}]{v^{ij}_{\vec N}}$ such that:
 \begin{equation}\label{P1lam-eq:tindoppiavvu}
  \nf[\mathrm{r}]{\linc_{\seq N}\hole{\vec w^{\,1}_{\vec N},\dots,\vec w^{\,n}_{\vec N}}} \ni t
 \end{equation} 
 and being $N_i\in\cX_i$ which is bounded by $L_i$, we have $w^{\,ij}_{\vec N}\in\nf[\mathrm{r}]{v^{\,ij}_{\vec N}}\subseteq\NFT{N_{i}}\subseteq\NFT{L_i}$.
 From each inclusion $w^{\,ij}_{\vec N}\in\NFT{L_i}$ we obtain a resource term $u^{\,ij}_{\vec N}\in\Te{L_i}$ such that:
 \begin{equation}\label{P1lam-eq:crucialforfinite}
  w^{\,ij}_{\vec N}\in\nf[\mathrm{r}]{u^{\,ij}_{\vec N}}
 \end{equation}
 By composing thus a reduction from $u^{ij}_{\vec N}$ to $w^{ij}_{\vec N}$ with a reduction from $\linc_{\seq N}\hole{\vec w^{\,1}_{\vec N},\dots,\vec w^{\,n}_{\vec N}}$ to $t$, we find that $t$ belongs to $\nf[\mathrm{r}]{\linc_{\vec N}\hole{\vec u^{\,1}_{\vec N},\dots,\vec u^{n}_{\vec N}}}$. 
 This holds for all $\vec N\in\vec\cX$, i.e.:
 \begin{equation}\label{P1lam-eq:tininterparagraphu}
  t\in\bigcap_{\vec N\in\vec\cX} \nf[\mathrm{r}]{c_{\vec N}\hole{\vec u^{\,1}_{\vec N},\dots,\vec u^{n}_{\vec N}}}.
 \end{equation}
 Now, Lemma~\ref{lm:Te_of_contexts} gives $\linc_{\vec N}\hole{\vec u^{\,1}_{\vec N},\dots,\vec u^{n}_{\vec N}}\in\Te{C\hole{L_1,\dots,L_n}}$.
 But since the $L_i$'s are independent from $N_1,\dots,N_n$, and thanks to \eqref{P1lam-eq:tininterparagraphu}, we can apply Lemma~\ref{thm:Injectivity}, and obtain that the set 
 $\set{\linc_{\vec N}\hole{\vec u^{\,1}_{\vec N},\dots,\vec u^{n}_{\vec N}}
 \stt 
 \vec{N} \in \vec{\cX}
 }$ 
 is actually a singleton.
 Therefore, Lemma~\ref{lm:Te_of_contexts}(2) tells us that also the terms $c_{\vec N}$ and the bags $[\vec u_{\vec N}^{\,i}]$ are independent from $\vec N\in\vec \cX$.
 The unique element of the previous singleton has hence shape $\linc\hole{\vec u^{\,i},\dots,\vec u^{n}}$, with $\linc$ a rigid of a $c\in\Te{C}$, and $\vec{u}^{\,i}$ a list of elements of $\Te{L_i}$.
 Recalling now that $\sum\limits_j u^{i}_j \subseteq\Te{L_i}$, we can apply Corollary~\ref{cor:Ass3+Prop+Cor} in order to obtain, for each $i=1,\dots,n$, an $L'_i\in\Lam$ s.t.\ $L_i\msto L'_i$ and, using \eqref{P1lam-eq:crucialforfinite}, $\sum\limits_j w^{\,ij} \subseteq\Te{L'_i}$.
 Thus Lemma~\ref{lm:Te_of_contexts} tells us that, for every $\vec N\in\vec\cX$, we have:
 \begin{equation}\label{P1lam-eq:TaylorLprimes}
  \linc\hole{\vec w^{\,1}_{\vec N},\dots,\vec w^{n}_{\vec N}}\in\Te{C\hole{L'_1,\dots,L'_n}}.
 \end{equation}
 But now thanks to \eqref{P1lam-eq:TaylorLprimes} and \eqref{P1lam-eq:tindoppiavvu} (which holds for all $\vec N\in\vec\cX$), we can apply again Lemma~\ref{thm:Injectivity} in order to find that the set $\set{\linc\hole{\vec w^{\,1}_{\vec N},\dots,\vec w^{n}_{\vec N}}\stt \vec{N}\in\vec{\cX}}$ 
 is a singleton.
 Again by Lemma~\ref{lm:Te_of_contexts}(2), we have that all the bags $[ \vec w^{\,1}_{\vec N}],\dots,[\vec w^{n}_{\vec N} ]$ for $\vec{N}\in\vec{\cX}$, coincide respectively to some bags $[ \vec w^{\,1}],\dots,[\vec w^{n} ]$ which are independent from $\vec N\in\vec\cX$.
 So the only element of the previous singleton has shape $\linc\hole{\vec w^{\,1},\dots,\vec w^{n}}$, and \eqref{P1lam-eq:tindoppiavvu} becomes:
 \begin{equation}\label{P1lam-eq:tindoppiavvubis}
  t\in\nf[\mathrm{r}]{\linc\hole{\vec w^{\,1},\dots,\vec w^{n}}}.
 \end{equation}
 Now for all $i=1,\dots,n$, we already know that $[\vec{w}^{\,i}]=[\vec{w}^{\,i}_{\vec N}]$ which is a bag of elements of $\NFT{N}$, and this holds for all $N\in\cX_i$.
 That is, we have:
 \begin{equation}\label{P1lam-eq:endstability}
  \sum\limits_j \vec{w}^{\,ij}\subseteq\bigcap_{N\in\cX_i} \NFT{N} =\NFT{M_i}
 \end{equation}
 where we finally used the hypothesis.
 From \eqref{P1lam-eq:tindoppiavvubis}, \eqref{P1lam-eq:endstability} and Lemma~\ref{lm:Te_of_contexts} we finally conclude that $t\in\nf[\mathrm{r}]{\linc\hole{\vec w^1,\dots,\vec w^n}}\subseteq\NFT{C\hole{M_1,\dots,M_n}}$.\qedhere
\end{proof}

\begin{figure*}[t!]
 \centering

\[\small{
\begin{tikzcd}
& (\mu\alpha.\name{\beta}{\mu\gamma.\name{\eta}{s'}})[\vec{u}] \ar["(\alpha\neq\beta)"',dl] \ar[dr, "\big(\alpha\neq\beta" near start, "\dg{\gamma}{[\vec{u}]}=0\big)" near end] & & \\
\mu\alpha. \name{\beta}{\mu\gamma.\langle\name{\eta}{s'}\rangle^+_\alpha[\vec{u}]}
 \ar[->>, dr] & & 
(\mu\alpha.\name{\eta}{s'})[\vec{u}]\set{\beta/\gamma}
 \ar[->, dl] \\
& \mu\alpha.\langle\name{\eta}{s'}\rangle^+_\alpha[\vec{u}]\set{\beta/\gamma} & &
\end{tikzcd}
}\]
 \caption{Notable diagrams of Proposition~\ref{lamu-prop:lamu^+Confl}, point (1), subcase $t=(\mu\alpha.\name{\beta}{s})[\vec{u}]$, $\Sum T = \mu\alpha.\langle \name{\beta}{s} \rangle^+_\alpha[\vec{u}]$, $s=\mu\gamma.\name{\eta}{s'}$, $\alpha\neq\beta$.}
 \label{fig:6Diag.3}
\end{figure*}
\begin{figure*}[t!]
 \centering

\begin{flushleft}\[
\footnotesize{
\begin{tikzcd}[column sep=tiny]
& (\mu\alpha.\name{\alpha}{\mu\gamma.\name{\alpha}{s'}})[\vec{u}] \ar[dl] \ar[dr] & & 
\\
\mu\alpha.\coefflnamedapp{\name{\alpha}{\mu\gamma.\name{\alpha}{s'}}}{\alpha}{[\vec{u}]}
 \ar[equal, d, "\small{W:(\vec{u})\to\set{1,2}}" near start, "\small{D:(\vec{w}^{\,0})\to\set{1,2}}" near end] & & 
(\mu\alpha.\name{\alpha}{s'\set{\alpha/\gamma}})[\vec{u}]
 \ar[d] 
\\
\sum\limits_{W}
\sum\limits_{D}
\mu\alpha.\name{\alpha}{(\mu\gamma.\name{\alpha}{(\coefflnamedapp{s'}{\alpha}{[\vec{d}^{\,0}]})[\vec{d}^{\,1}]})[\vec{w}^{\,1}]}
 \ar[->>, d] & & 
\mu\alpha.\coefflnamedapp{\name{\alpha}{s'\set{\alpha/\gamma}}}{\alpha}{[\vec{u}]}
\ar[equal, d, "\small{W:(\vec{u})\to\set{1,2}}"] 
\\
\sum\limits_{W}
\sum\limits_{D}
\mu\alpha.\name{\alpha}{\mu\gamma.\name{\alpha}{\coefflnamedapp{(\coefflnamedapp{s'}{\alpha}{[\vec{d}^{\,0}]})[\vec{d}^{\,1}]}{\gamma}{[\vec{w}^{\,1}]}}}
 \ar[->>, dr] & & 
\sum\limits_{W}
\mu\alpha.\name{\alpha}{(\coefflnamedapp{s'\set{\alpha/\gamma}}{\alpha}{[\vec{w}^{\,0}]})[\vec{w}^{\,1}]}
 \ar[equal, dl] 
\\
&
\sum\limits_{W}
\sum\limits_{D}
\mu\alpha.\name{\alpha}{\coefflnamedapp{(\coefflnamedapp{s'}{\alpha}{[\vec{d}^{\,0}]})[\vec{d}^{\,1}]}{\gamma}{[\vec{w}^{\,1}]}}\set{\alpha/\gamma}
&
\end{tikzcd}
}
\]\end{flushleft}
 \caption{Notable diagrams of Proposition~\ref{lamu-prop:lamu^+Confl}, point (1), subcase $t=(\mu\alpha.\name{\beta}{s})[\vec{u}]$, $\Sum T = \mu\alpha.\langle \name{\beta}{s} \rangle^+_\alpha[\vec{u}]$, $s=\mu\gamma.\name{\eta}{s'}$,$\alpha=\beta,\gamma\neq\eta,\eta=\alpha$.}
 \label{fig:6Diag.4}
\end{figure*}
\begin{figure*}[t!]
 \centering

\begin{flushleft}\[
\footnotesize{
\begin{tikzcd}[column sep=tiny]
& (\mu\alpha.\name{\alpha}{\mu\gamma.\name{\eta}{s'}})[\vec{u}] \ar[dl] \ar[dr, "(\gamma\neq\eta)"] & & 
\\
\mu\alpha.\coefflnamedapp{\name{\alpha}{\mu\gamma.\name{\eta}{s'}}}{\alpha}{[\vec{u}]}
 \ar[equal, d, "\small{W:(\vec{u})\to\set{1,2}}" ] & & 
(\mu\alpha.\name{\eta}{s'\set{\alpha/\gamma}})[\vec{u}]
 \ar[d] 
\\
\sum\limits_{W}
\mu\alpha.\name{\alpha}{(\mu\gamma.\name{\eta}{\coefflnamedapp{s'}{\alpha}{[\vec{w}^{\,0}]}})[\vec{w}^{\,1}]}
 \ar[->>, d] & & 
\mu\alpha.\coefflnamedapp{\name{\eta}{s'\set{\alpha/\gamma}}}{\alpha}{[\vec{u}]}
\ar[equal, d, "(\eta\neq\alpha)"] 
\\
\sum\limits_{W}
\mu\alpha.\name{\alpha}{\mu\gamma.\name{\eta}{\coefflnamedapp{\coefflnamedapp{s'}{\alpha}{[\vec{w}^{\,0}]}}{\gamma}{[\vec{w}^{\,1}]}}}
 \ar[->>, dr] & & 
\mu\alpha.\name{\eta}{\coefflnamedapp{s'\set{\alpha/\gamma}}{\alpha}{[\vec{u}]}}
 \ar[equal, dl] 
\\
&
\sum\limits_{W}
\mu\alpha.\name{\eta}{\coefflnamedapp{\coefflnamedapp{s'}{\alpha}{[\vec{w}^{\,0}]}}{\gamma}{[\vec{w}^{\,1}]}\set{\alpha/\gamma}}
&
\end{tikzcd}
}
\]\end{flushleft}
 \caption{Notable diagrams of Proposition~\ref{lamu-prop:lamu^+Confl}, point (1), subcase $t=(\mu\alpha.\name{\beta}{s})[\vec{u}]$, $\Sum T = \mu\alpha.\langle \name{\beta}{s} \rangle^+_\alpha[\vec{u}]$, $s=\mu\gamma.\name{\eta}{s'}$,$\alpha=\beta,\gamma\neq\eta,\eta\neq\alpha$.}
 \label{fig:6Diag.5}
\end{figure*}
\begin{figure*}[t!]
 \centering

\begin{flushleft}\[
\footnotesize{
\begin{tikzcd}[column sep=tiny]
& (\mu\alpha.\name{\alpha}{\mu\gamma.\name{\gamma}{s'}})[\vec{u}] \ar[dl] \ar[dr] & & 
\\
\mu\alpha.\coefflnamedapp{\name{\alpha}{\mu\gamma.\name{\gamma}{s'}}}{\alpha}{[\vec{u}]}
 \ar[equal, d, "\small{W:(\vec{u})\to\set{1,2}}"] & & 
(\mu\alpha.\name{\alpha}{s'\set{\alpha/\gamma}})[\vec{u}]
 \ar[d] 
\\
\sum\limits_{W}
\mu\alpha.\name{\alpha}{(\mu\gamma.\name{\gamma}{\coefflnamedapp{s'}{\alpha}{[\vec{w}^{\,0}]}})[\vec{w}^{\,1}]}
 \ar[->>, d, "\small{D:(\vec{w}^{\,1})\to\set{1,2}}"] & & 
\mu\alpha.\coefflnamedapp{\name{\alpha}{s'\set{\alpha/\gamma}}}{\alpha}{[\vec{u}]}
\ar[equal, d, "\small{W:(\vec{u})\to\set{1,2}}"] 
\\
\sum\limits_{W}
\sum\limits_{D}
\mu\alpha.\name{\alpha}{\mu\gamma.\name{\gamma}{(\coefflnamedapp{\coefflnamedapp{s'}{\alpha}{[\vec{w}^{\,0}]}}{\gamma}{[\vec{d}^{\,0}]})[\vec{d}^{\,1}]}}
 \ar[->>, dr] & & 
\sum\limits_{W}
\mu\alpha.\name{\alpha}{(\coefflnamedapp{s'\set{\alpha/\gamma}}{\alpha}{[\vec{w}^{\,0}]})[\vec{w}^{\,1}]}
 \ar[equal, dl] 
\\
&
\sum\limits_{W}
\sum\limits_{D}
\mu\alpha.\name{\alpha}{(\coefflnamedapp{\coefflnamedapp{s'}{\alpha}{[\vec{w}^{\,0}]}}{\gamma}{[\vec{d}^{\,0}]})[\vec{d}^{\,1}]\set{\alpha/\gamma}}
&
\end{tikzcd}
}
\]\end{flushleft}
 \caption{Notable diagrams of Proposition~\ref{lamu-prop:lamu^+Confl}, point (1), subcase $t=(\mu\alpha.\name{\beta}{s})[\vec{u}]$, $\Sum T = \mu\alpha.\langle \name{\beta}{s} \rangle^+_\alpha[\vec{u}]$, $s=\mu\gamma.\name{\eta}{s'}$,$\alpha=\beta,\eta=\gamma$.}
 \label{fig:6Diag.6}
\end{figure*}
\begin{figure*}[t!]
 \centering
\[\small{
\begin{tikzcd}\centering
& \mu\gamma.\name{\alpha}{\mu\beta.\name{\eta}{\mu\gamma'.\name{\eta'}{s'}}} \ar[dl] \ar[dr] & & \\
\mu\gamma.\name{\delta_0}{\mu\gamma'.\name{\delta'_1}{s'\set{\alpha/\beta}}} \ar[->, dr] & & \mu\gamma.\name{\alpha}{\mu\beta.\name{\delta'_2}{s'\set{\eta/\gamma'}}} \ar[->, dl] \\
& \!\!\!\!\!\!\!\!\!\mu\gamma.\name{\delta_1}{s'\set{\alpha/\beta}\set{\delta_0/\gamma'}}
= 
\mu\gamma.\name{\delta_2}{s'\set{\eta/\gamma'}\set{\alpha/\beta}} & &
\end{tikzcd}
}\]
where $\delta_0:=\delta^\alpha_\eta(\beta)$, $\delta'_1:=\delta^\alpha_{\eta'}(\beta)$, $\delta_1:=\delta^{\delta_0}_{\delta'_1}(\gamma')$, $\delta'_2:=\delta^{\eta}_{\eta'}(\gamma')$ and $\delta_2:=\delta^{\alpha}_{\delta'_2}(\beta)$.

 \caption{Notable diagrams of Proposition~\ref{lamu-prop:lamu^+Confl}, point (1), subcase $t=\mu\gamma.\name{\alpha}{\mu\beta.\name{\eta}{s}}$, $\Sum T = \mu\gamma.\name{\eta}{s}\set{\alpha/\beta}$, $s=\mu\gamma'.\name{\eta'}{s'}$.}
 \label{fig:twoDiag}
\end{figure*}

\end{document}